\documentclass[runningheads]{llncs}
\usepackage[T1]{fontenc}
% T1 fonts will be used to generate the final print and online PDFs,
% so please use T1 fonts in your manuscript whenever possible.
% Other font encondings may result in incorrect characters.

\usepackage{algorithm}
\usepackage{algorithmic}
\usepackage{amsmath}
\usepackage{amsfonts}
\usepackage{amssymb}
\usepackage{color}
\usepackage{comment}
\usepackage{enumitem}
\usepackage{graphicx}
\usepackage[bookmarks=false,colorlinks=true]{hyperref}
\hypersetup{citecolor=green!50!black,linkcolor=red!70!black}
\usepackage[utf8]{inputenc}
\usepackage{listings}
\usepackage{makeidx}
\usepackage{subcaption}
\usepackage{url}
\usepackage{wrapfig}
\usepackage{xcolor}
\usepackage[pdf]{graphviz}
\usepackage{multirow}
\usepackage{refcount}
\usepackage{pbox}
\usepackage{xspace}
\usepackage{bbm}
\usepackage{booktabs}
\usepackage{stmaryrd}
\usepackage[misc]{ifsym}

\usepackage{breakcites}

\usepackage[graphicx]{realboxes}
\usepackage{tikz}
\tikzset{elliptic state/.style={draw,ellipse}}
\usetikzlibrary{shapes,shapes.geometric,arrows,fit,calc,positioning,automata,}
\usetikzlibrary{automata,arrows,decorations.pathmorphing,decorations.markings,positioning,shapes,shapes.geometric,arrows.meta,bending,fit,calc}
\usetikzlibrary{math}
\usepackage{tikz-qtree}
\usepackage{float}
\usepackage{floatflt}
\usepackage{pgfplots}
\pgfplotsset{compat=1.18}

\tikzset{
  >=stealth,
  box state/.style={draw,rectangle,minimum size=8mm},
  prob state/.style={draw,very thick,shape=circle,black,minimum size=2mm,inner sep=0mm},
  node distance=2cm,on grid,auto, initial text=,
  every loop/.style={shorten >=0pt},
  accepting/.style={double distance=1.2pt, outer sep = 0.6pt+\pgflinewidth},
  accepting dot/.style={above=-2.5pt,circle,fill,darkgreen,inner sep=2pt,radius=1pt},
  loop above/.append style={every loop/.append style={out=120, in=60, looseness=6}},
  loop below/.append style={every loop/.append style={out=300, in=240, looseness=6}},
  loop left/.append style={every loop/.append style={out=210, in=150, looseness=6}},
  loop right/.append style={every loop/.append style={out=30, in=330, looseness=6}},
  accepting arc/.style={dashed},
  marked/.style={
    dashed,
    opacity=0.3
  },
  marked on/.style={alt=#1{marked}{}},
}

\newcommand{\code}[1]{{\texttt {#1}}}

\newcommand{\R}{\mathbb{R}}
\newcommand{\E}{\mathbb{E}}
\newcommand{\Z}{\mathbb{Z}}
\newcommand{\X}{\mathbf{X}}
\newcommand{\U}{\mathbf{U}}
\newcommand{\A}{\mathcal{A}}
\newcommand{\D}{\mathcal{D}}

\newcommand{\F}{\mathbf{F}}
\newcommand{\G}{\mathbf{G}}

\newcommand{\M}{\mathcal{M}}
\newcommand{\C}{\mathcal{C}}
\newcommand{\J}{\mathcal{J}}
\newcommand{\Rc}{\mathcal{R}}

\renewcommand{\P}{\mathcal{P}}

\newcommand{\semantics}[1]{{\llbracket #1 \rrbracket}}

\newcommand{\vp}{\varphi}

\newcommand{\opt}{\text{opt}}

\newcommand{\traj}{\mathcal{Z}}

\usepackage{graphicx}
% Used for displaying a sample figure. If possible, figure files should
% be included in EPS format.
%
% If you use the hyperref package, please uncomment the following two lines
% to display URLs in blue roman font according to Springer's eBook style:
\usepackage{color}

% Custom ORCID iD rendering
\def\orcidID#1{{\href{http://orcid.org/#1}{\protect\raisebox{-1.25pt}{\protect\includegraphics{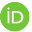}}}}}

% Corresponding author email icon
\def\mailto#1{\textsuperscript{({\href{mailto:#1}{\color{black} \protect\raisebox{-0.8pt}{\Letter}}})}}

\begin{document}
\title{Policy Synthesis and Reinforcement Learning \\ for Discounted LTL\thanks{
This research was partially supported by ONR award N00014-20-1-2115, NSF grant CCF-2009022, and NSF CAREER award CCF-2146563.
}}
\author{
Rajeev Alur\inst{1}\orcidID{0000-0003-1733-7083} \and
Osbert Bastani\inst{1}\orcidID{0000-0001-9990-7566} \and
Kishor Jothimurugan\inst{1}\orcidID{0000-0003-1448-2947} \and
\\
Mateo Perez\inst{2}\mailto{mateo.perez@colorado.edu}\orcidID{0000-0003-4220-3212} \and
Fabio Somenzi\inst{2}\orcidID{0000-0002-2085-2003} \and
Ashutosh Trivedi\inst{2}\orcidID{0000-0001-9346-0126}
}
\authorrunning{Alur et al.}
\institute{University of Pennsylvania, Pennsylvania, USA \\
\email{\{alur,kishor,obastani\}@seas.upenn.edu} 
\and
University of Colorado Boulder, Boulder, USA \\
\email{\{mateo.perez,fabio,ashutosh.trivedi\}@colorado.edu}}
\maketitle

\begin{abstract}
    The difficulty of manually specifying reward functions has led to an interest in using linear temporal logic (LTL) to express objectives for reinforcement learning (RL). However, LTL has the downside that it is sensitive to small perturbations in the transition probabilities, which prevents probably approximately correct (PAC) learning without additional assumptions. Time discounting provides a way of removing this sensitivity, while retaining the high expressivity of the logic. We study the use of discounted LTL for policy synthesis in Markov decision processes with unknown transition probabilities, and show how to reduce discounted LTL to discounted-sum reward via a reward machine when all discount factors are identical. 
\end{abstract}

\section{Introduction}

Reinforcement learning~\cite{Sutton18} (RL) is a sampling-based approach to synthesis in systems with unknown dynamics where an agent seeks to maximize its accumulated reward. This reward is typically a real-valued feedback that the agent receives on the quality of its behavior at each step. However, designing a reward function that captures the user's intent can be tedious and error prone, and misspecified rewards can lead to undesired behavior, called \emph{reward hacking}~\cite{amodei2016concrete}.

Due to the aforementioned difficulty, recent research~\cite{fu2014probably, sadigh2014learning, hasanbeig2019reinforcement, bozkurt2020control, li2017reinforcement} has shown interest in utilizing high-level logical specifications, particularly linear temporal logic~\cite{baier2008principles} (LTL), to express intent. 
However, a significant challenge arises due to the sensitivity of LTL, similar to other infinite-horizon objectives like average reward and safety, to small changes in transition probabilities. Even slight modifications in transition probabilities can lead to significant impacts on the value, such as enabling previously unreachable states to become reachable. 
Without additional information on the transition probabilities, such as the minimum nonzero transition probability, LTL is proven to be not probably approximately correct (PAC)~\cite{kakade2003sample} learnable~\cite{alur2022framework,yang2021reinforcement}. 
Ideally, it is desirable to maintain PAC learnability while still keeping the benefits of a highly expressive temporal logic.

Discounting can serve as a solution to this problem. Typically, discounting is used to encode time-sensitive rewards (i.e., a payoff is worth more today than tomorrow), but it has a useful secondary effect that payoffs received in the distant future have small impact on the accumulated reward today. This insensitivity enables PAC learning without requiring any prior knowledge of the transition probabilities. In RL, discounted reward is commonly used and has numerous associated PAC learning algorithms~\cite{kakade2003sample}.

\begin{figure}[t]
    \centering
    \begin{tikzpicture}
        \node[state, initial above, accepting, align=center] (s0) {$s_0$};
        \node[prob state] (prob1) [right=1.5cm of s0] {};
        \node[prob state] (prob2) [left=1.5cm of s0] {};
        \node[state, align=center] (s1) [right=3cm of s0] {$s_1$};
        \node[state, align=center] (s2) [left=3cm of s0] {$s_2$};
        \path[->]
        (prob1) edge [bend right=70, swap] node {$1-p_1$} (s0)
        (prob1) edge [] node {$p_1$} (s1)
        (prob2) edge [bend left=70] node {$1-p_2$} (s0)
        (prob2) edge [swap] node {$p_2$} (s2)
        (s1) edge [loop right] node {$a_1, a_2$} ()
        (s2) edge [loop left] node {$a_1, a_2$} ()
        ;
        \path[-]
        (s0) edge [swap] node {$a_1$} (prob1)
        (s0) edge [] node {$a_2$} (prob2)
        ;
    \end{tikzpicture}
    \caption{Example showing non-robustness of safety specifications.}
    \label{fig:robust}
\end{figure}
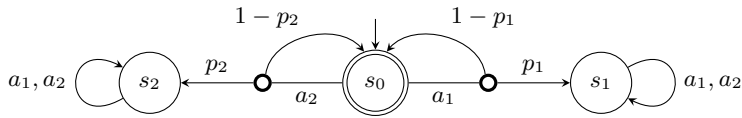

In this work, we examine the discounted LTL of \cite{almagor2014discounting} for policy synthesis in Markov decision processes (MDPs) with unknown transition probabilities.
We refer to such MDPs as ``unknown MDPs'' throughout the paper. 
This logic maintains the syntax of LTL, but discounts the temporal operators.
Discounted LTL gives a quantitative preference to traces that satisfy the objective sooner, and those that delay failure as long as possible. The authors of \cite{almagor2014discounting} examined discounted LTL in the model checking setting. Exploring policy synthesis and learnability for discounted LTL specifications  is novel to this paper.

To illustrate how discounting affects learnability, consider the example~\cite{littman2017environment} MDP shown in Figure~\ref{fig:robust}. It consists of a safe state $s_0$, two sink states $s_1,s_2$, and two actions $a_1,a_2$. Taking action $a_i$ in $s_0$ leads to a sink state with probability $p_i$ and stays in $s_0$ with probability $1-p_i$. Suppose we are interested in learning a policy to make sure that the system always stays in the state $s_0$. Now consider two scenarios---one in which $p_1 = 0$ and $p_2 = \delta$ and another in which $p_2=0$ and $p_1=\delta$ where $\delta>0$ is a small positive value. In the former case, the optimal policy is to always choose $a_1$ in $s_0$ and in the latter case, we need to choose $a_2$ in $s_0$. Furthermore, it can be shown that a near-optimal policy in one case is not near-optimal in another. 
However, we cannot select a finite number of samples needed to distinguish between the two cases (with high probability) without knowledge of $\delta$.
In contrast, the time-discounted semantics of the safety property evaluates to $1-\lambda^k$ where $k$ is the number of time steps spent in the state $s_0$. Then, for sufficiently small $\delta$, any policy achieves a high value w.r.t. the discounted safety property in both scenarios. In general, small changes to the transition probabilities do not have drastic effects on the nature of near-optimal policies for discounted interpretations of LTL properties.

\paragraph{Contributions.}
\begin{table*}
    \centering
    \begin{tabular}{lccc}
    \toprule
    \multirow{2}{*}{Specification} &
    \multirow{2}{*}{\quad Memory\quad} &
    \multicolumn{2}{c}{Policy Synthesis Algorithm} \\
    \cmidrule(lr){3-4}
    {} & {} & \quad Known MDP\quad & \quad PAC Learning\quad\\
    \midrule
    Reward Machines & Finite \cite{puterman2014markov,icarte2018using} & Exists \cite{puterman2014markov,icarte2018using} & Exists \cite{strehl2006pac}\\
    LTL & Finite \cite{baier2008principles} & Exists \cite{baier2008principles} & Impossible \cite{alur2022framework,yang2021reinforcement}\\
    Discounted LTL & Infinite  & Open & Exists  \\
    Uniformly Discounted LTL\quad\quad & Finite  & Exists  & Exists  \\
    \bottomrule\\
    \end{tabular}
    \caption{Policy synthesis in MDPs for different classes of specifications.}
    \label{tab:results}
\end{table*}
Table~\ref{tab:results} summarizes results of this paper in the context of known results regarding policy synthesis for various classes of specifications. We consider three key properties of specifications, namely, (1) whether there is a finite-state optimal policy and whether there are known algorithms for (2) computing an optimal policy when the MDP is known, as well as for (3) learning a near-optimal policy when the transition probabilities are unknown (without additional assumptions). The classes of specifications include reward machines with discounted-sum rewards \cite{icarte2018using}, linear temporal logic (LTL) \cite{baier2008principles}, discounted LTL and a variant of discounted LTL in which all discount factors are identical, which we call \emph{uniformly} discounted LTL. In this paper, we show the following.
\begin{itemize}
    \item In general, finite-memory optimal policies may not exist for discounted LTL specifications.
    \item There exists a PAC learning algorithm to learn policies for discounted LTL specifications.
    \item There is a reward machine for any uniformly discounted LTL specification such that the discounted-sum rewards capture the semantics of the specification. From this we infer that for any given MDP finite-memory optimal policies exist and can be computed.
\end{itemize}

\paragraph{Related Work.}
Linear temporal logic (LTL) is a popular and expressive formalism to unambiguously express qualitative safety and progress requirements of Kripke structures and MDPs~\cite{baier2008principles}.
The standard approach to model check LTL formulas against MDPs is the \emph{automata-theoretic} approach where the LTL formulas are first translated to a class of good-for-MDP automata~\cite{hahn2020good}, such as limit-deterministic B\"uchi automata~\cite{vardi1985automatic,sickert2016limit,sickert2016mochiba,hahn2013lazy}, and then, efficient graph-theoretic techniques (computing accepting end-component and then maximizing the probability to reach states in such components) ~\cite{de1998formal,vardi1985automatic,kwiatkowska2009prism}
over the product of the automaton with the MDP can be used to compute optimal satisfaction probabilities and strategies. 
Since LTL formulas can be translated into (deterministic) automata in doubly exponential time, the probabilistic model checking problem is in 2EXPTIME with a matching lower bound~\cite{Cour95}.

Several variants of LTL have been proposed that provide discounted temporal modalities. 
De Alfaro {et al.}~\cite{de2003discounting} proposed an extension of $\mu$-calculus with discounting and showed~\cite{de2004model} the decidability of model-checking over finite MDPs.
Mandrali~\cite{mandrali2012weighted} introduced discounting in LTL by taking a discounted sum interpretation of logic over a trace. 
Littman et al.~\cite{littman2017environment} proposed {geometric LTL} as a logic to express learning objectives in RL. However, this logic has unclear semantics for nesting operators. 
Discounted LTL was proposed by Almagor, Boker, and Kupferman~\cite{almagor2014discounting}, which considers discounting without accumulation.
The decidability of the policy synthesis problem for discounted LTL against MDPs is an open problem.

An alternative approach to discounting that ensuring PAC learnability is to introduce a fixed time horizon, along with a temporal logic for finite traces. In this setting, the logic $\text{LTL}_f$ is the most popular~\cite{camacho2019ltl,de2019foundations}. Using $\text{LTL}_f$ with a finite horizon yields simple algorithms~\cite{wells2020ltlf}, finite automata suffice for checking properties, but at the expense of the expressivity of the logic, formulas like $\G \F p$ and $\F \G p$ both mean that $p$ occurs at the end of the trace. 

There has been a lot of recent work on reinforcement learning from temporal specifications~\cite{aksaray2016q,brafman2018ltlf,de2019foundations,hasanbeig2018logically,littman2017environment,hasanbeig2019,yuan2019modular,moritz2019,ijcai2019-0557,jiang2020temporallogicbased,li2017reinforcement,icarte2018using,jothimurugan2022specification, jothimurugan2021compositional, spectrl}. Such approaches often lack strong convergence guarantees. Some methods have been developed to reduce LTL properties to discounted-sum rewards~\cite{bozkurt2020control,moritz2019} while preserving optimal policies; however they rely on the knowledge of certain parameters that depend on the transition probabilities of the unknown MDP. Recent work~\cite{alur2022framework,yang2021reinforcement,littman2017environment} has shown that PAC algorithms that do not depend on the transition probabilities do not exist for the class of LTL specifications. There has also been work on learning algorithms for LTL specifications that provide guarantees when additional information about the MDP (e.g., the smallest nonzero transition probability) is available~\cite{daca2017faster,ashok2019pac,fu2014probably}.

\section{Problem Definition}\label{sec:defn}
An alphabet $\Sigma$ is a finite set of letters. 
A finite word (resp.\ $\omega$-word)  over $\Sigma$ is defined as a finite sequence (resp.\ $\omega$-sequence) of letters from  $\Sigma$. 
We write $\Sigma^*$ and $\Sigma^\omega$ for the set of finite and $\omega$-words over $\Sigma$. 

A \emph{probability distribution} over a finite set $S$ is a function
$d \colon S {\to} [0, 1]$ such that $\sum_{s \in S} d(s) = 1$.  Let
$\D(S)$ denote the set of all discrete distributions over $S$. 

\paragraph{Markov Decision Processes.} A {Markov Decision Process} (MDP) is a tuple $\M = (S, A, s_0, P)$, where $S$ is a finite set of states,
$s_0$ is the initial state,
$A$ is a finite set of actions, and 
$P:S \times A \to \D(S)$ is the transition probability function.
An \emph{infinite run} $\psi \in (S\times A)^{\omega}$ is a sequence $\psi = s_0{a_0}s_1{a_1}\ldots$, where $s_i \in S$ and $a_i\in A$ for all $i\in\Z_{\geq 0}$. For any run $\psi$ and any $i\leq j$, we let
$\psi_{i:j}$ denote the subsequence $s_i{a_i}s_{i+1}{a_{i+1}}\ldots{a_{j-1}}s_j$. Similarly, a \emph{finite run} $h\in(S\times A)^*\times S$ is a finite sequence $h = s_0{a_0}s_1{a_1}\ldots a_{t-1}s_t$.  We use $\traj(S,A) = (S\times A)^{\omega}$ and $\traj_f(S,A) = (S\times A)^*\times S$ to denote the set of infinite and finite runs, respectively. 

A policy $\pi:\traj_f(S,A)\to\D(A)$ maps a finite run $h\in\traj_f(S,A)$ to a distribution $\pi(h)$ over actions. We denote by $\Pi(S,A)$ the set of all such policies.
A policy $\pi$ is deterministic if, for all finite runs $h\in\traj_f(S,A)$, there is an action $a\in A$ with $\pi(h)(a) = 1$.

Given a finite run $h=s_0a_0\ldots a_{t-1}s_t$, the \emph{cylinder} of $h$, denoted by $\code{Cyl}(h)$, is the set of all infinite runs with prefix $h$. Given an MDP $\M$ and a policy $\pi\in\Pi(S,A)$, we define the probability of the cylinder set by $\D^{\M}_{\pi}(\code{Cyl}(h)) = \prod_{i=0}^{t-1}\pi(h_{0:i})(a_i)P(s_i, a_i, s_{i+1})$. It is known that $\D_{\pi}^{\M}$ can be uniquely extended to a probability measure over the $\sigma$-algebra generated by all cylinder sets. Let $\P$ be a finite set of atomic propositions and $\Sigma=2^{\P}$ denote the set of all valuations of propositions in $\P$. An infinite word $\rho\in\Sigma^\omega$ is a map $\rho:\Z_{\geq 0}\to\Sigma$.

\begin{definition}[Discounted LTL]
Given a set of atomic propositions $\P$, \emph{discounted LTL} formulas over $\P$ are given by the grammar
$$\vp := b \in \P \mid \lnot\vp \mid \vp\lor\vp \mid \X_{\lambda}\vp \mid \vp~\U_{\lambda}\vp$$
where $\lambda\in[0,1)$. Note that, in general, different temporal operators within the same formula may have different discount factors $\lambda$. For a formula $\vp$ and a word $\rho=\sigma_0\sigma_1\ldots \in (2^{\P})^\omega$, the semantics $\semantics{\vp,\rho}\in[0,1]$ is given by
\begin{align*}
    \semantics{b,\rho} &= \mathbbm{1}\big(b\in\sigma_0\big)\\
    \semantics{\lnot\vp,\rho} &= 1-\semantics{\vp,\rho}\\
    \semantics{\vp_1\lor\vp_2,\rho} &= \max\big\{\semantics{\vp_1,\rho},\semantics{\vp_2,\rho}\big\}\\
    \semantics{\X_{\lambda}\vp,\rho} &= \lambda\cdot\semantics{\vp,\rho_{1:\infty}}\\
    \semantics{\vp_1\U_{\lambda}\vp_2,\rho} &= \underset{i \ge 0}{\sup} \left\{ \min\Big\{\lambda^i [\![\vp_2, \rho_{i:\infty}]\!], \underset{0\le j< i}{\min}\{\lambda^j [\![\vp_1, \rho_{j:\infty}]\!]\}\Big\}\right\}
\end{align*}
where $\rho_{i:\infty}=\sigma_i\sigma_{i+1}\ldots$ denotes the infinite word starting at position $i$. 
\end{definition}
Conjunction is defined using $\vp_1\land\vp_2 = \lnot(\lnot\vp_1\lor\lnot\vp_2)$. We use $\F_\lambda\vp = \code{true}\U_\lambda\vp$ and $\G_\lambda\vp = \lnot\F_\lambda\lnot\vp$ to denote the discounted versions of \emph{finally} and \emph{globally} operators respectively. Note that when all discount factors equal $1$, the semantics corresponds to the usual semantics of LTL. 

For this paper, we consider the case of strict discounting, where $\lambda < 1$. We refer to the case where the discount factor is the same for all temporal operators as \emph{uniform discounting}.  Our definition differs from \cite{almagor2014discounting} in two ways: 1) we discount the next operator, and 2) we enforce strict, exponential discounting.

\paragraph{Example Discounted LTL Specifications.} To develop an intuition of the semantics of discounted LTL, we now present a few example formulas and their meaning.
\begin{itemize}
    \item $\F_\lambda\, p$ 
    obtains a value of $\lambda^n$ where $n$ is the first index where $p$ becomes true in a trace, and 0 if $p$ is never true. An optimal policy attempts to reach a $p$-state as soon as possible.
    \item $\G_\lambda\, p$ obtains a value of $1-\lambda^n$ where $n$ is the first index that a $\neg p$ occurs in a trace, and 1 if $p$ always holds. 
    An optimal policy attempts to delay reaching a $\neg p$-state as long as possible.
    \item $\X_\lambda\, p$ obtains a value of $\lambda$ if $p$ is in the second position and $0$ otherwise.
    \item $p \vee \X_\lambda\, q$ obtains a value of $1$ if $p$ is in the first position of the trace, a value of $\lambda$ if the trace begins with $\neg p$ followed by $q$, and a value of $0$ otherwise.
    \item $\F_\lambda\, p \,\wedge\, \G_\lambda\, q$ evaluates to the minimum of $\lambda^n$ and $(1-\lambda^m)$, where $n$ is the first position where $p$ becomes true in a trace and $m$ is the first position where $q$ becomes false. If $n^*= {\mathit log}_\lambda 0.5$ is the index where these two competing objectives coincide, then the optimal policy attempts to stay within $q$-states for the first $n^*$ steps and then attempts to reach a $p$-state as soon as possible.  
    \item Consider the formula $\F_{\lambda_1} \G_{\lambda_2} p$. Given a trace, consider a $p$-block of length $m$ starting at position $n$, that is, $p$ holds at all positions from $n$ to $n+m-1$, and does not hold at position $n-1$ (or $n$ is the initial position). The value of such a block is  $\lambda_1^n(1-\lambda_2^m)$. The value of the trace is then the maximum over values of all such $p$-blocks. The optimal policy attempts to have as long a $p$-block as possible as early as possible. The discount factor $\lambda_1$ indicates the preference for the $p$-block to occur sooner and the discount factor $\lambda_2$ indicates the preference for the $p$-block to be longer.
    \item $\G_{\lambda_1} \F_{\lambda_2} p$ obtains a value equivalent to $\neg \F_{\lambda_1} \G_{\lambda_2} \neg p$. Traces which contain more $p$'s at shorter intervals are preferred. The discount factor $\lambda_1$ indicates the preference for the total number of $p$'s to be larger and $\lambda_2$ indicates the preference for the interval between the consecutive $p$'s to be shorter.
    
\end{itemize}

\paragraph{Policy Synthesis Problem.} Given an MDP $\M = (S,A,s_0,P)$, we assume that we have access to a \emph{labelling function} $L:S\to\Sigma$ that maps each state to the set of propositions that hold true in that state. Given any run $\psi = s_0a_0s_1a_1\ldots$ we can define an infinite word $L(\psi) = L(s_0)L(s_1)\ldots$ that denotes the corresponding sequence of labels. Given a policy $\pi$ for $\M$, we define the value of $\pi$ with respect to a discounted LTL formula $\vp$ as
\begin{equation}
    \J^{\M}(\pi, \vp)  = \displaystyle\underset{\rho\sim\D^{\M}_{\pi}}{\E}\semantics{\vp, \rho}
\end{equation}
and the optimal value for $\M$ with respect to $\vp$ as $\J^*(\M,\vp) = \sup_{\pi}\J^{\M}(\pi,\vp)$. We say that a policy $\pi$ is optimal for $\vp$ if 
$\J^{\M}(\pi,\vp) = \J^*(\M,\vp)$.
Let $\Pi_{\opt}(\M,\vp)$ denote the set of optimal policies. Given an MDP $\M$, a labelling function $L$ and a discounted LTL formula $\vp$, the policy synthesis problem is to compute an optimal policy $\pi\in\Pi_{\opt}(\M,\vp)$ when one exists.

\paragraph{Reinforcement Learning Problem.} In reinforcement learning, the transition probabilities $P$ are unknown. Therefore, we need to interact with the environment to learn a policy for a given specification. In this case, it is sufficient to learn an $\varepsilon$-optimal policy $\pi$ that satisfies 
$\J^{\M}(\pi,\vp) \geq \J^*(\M,\vp)-\varepsilon$.
We use $\Pi_{\opt}^{\varepsilon}(\M,\vp)$ to denote the set of $\varepsilon$-optimal policies. Formally, a learning algorithm $\A$ is an iterative process which, in every iteration $n$, (i) takes a step in $\M$ from the current state, (ii) outputs a policy $\pi_n$ and (iii) optionally resets the current state to $s_0$. We are interested in probably-approximately correct (PAC) learning algorithms.

\begin{definition}[PAC-MDP]
\label{def:pac-mdp}
A learning algorithm $\A$ is said to be PAC-MDP for a class of specifications $\C$ if, there is a function $\eta$ such that for any $p>0$, $\varepsilon>0$, MDP $\M=(S,A,s_0,P)$, labelling function $L$, and specification $\vp\in\C$, taking $N=\eta(|S|,|A|,|\vp|,\frac{1}{p}, \frac{1}{\varepsilon})$, with probability at least $1-p$, we have
$$\Big|\Big\{n\mid \pi_n\notin\Pi_{\opt}^{\varepsilon}(\M,\vp) \Big\}\Big|\leq N.$$
\end{definition}

It has been shown that there does not exist PAC-MDP algorithms for LTL specifications. 
Therefore, we are interested in the class of discounted LTL specifications that are strictly discounted, i.e. $\lambda < 1$ for every temporal operator.

\section{Properties of Discounted LTL}\label{sec:properties}
In this section, we discuss important properties of discounted LTL regarding the nature of optimal policies. We first show that, under uniform discounting, the amount of memory required for the optimal policy may increase with the discount factor. We then show that, in general, allowing multiple discount factors may result in optimal policies requiring infinite memory. This motivates our restriction to the uniform discounting case in Section~\ref{sec:dltl_to_rm}. We end this section by introducing a PAC learning algorithm for discounted LTL.

\subsection{Nature of Optimal Policies}
It is known that for any (undiscounted) LTL formula $\vp$ and any MDP $\M$, there exists a \emph{finite memory} policy that is optimal---i.e., the policy stores only a finite amount of information about the history. Formally, given an MDP $\M=(S,A,s_0,P)$, a finite memory policy $\pi = (M,\delta_M,\mu,m_0)$ consists of a finite set of memory states $M$, a transition function $\delta_M:M\times S\times A\to M$ and an action function $\mu:M\times S\to \D(A)$. Given a finite run $h = s_0a_0\ldots s_t = h's_t$, the policy's action is sampled from $\mu(\delta_M(m_0,h'), s_t)$ where $\delta_M$ is also used to represent the transition function extended to sequences of state-action pairs. We use $\Pi_f(S,A)$ to denote the set of finite memory policies. In this paper, we will show that uniformly discounted LTL admits finite memory optimal policies, but that infinite memory may be required for the general case.

Unlike (undiscounted) LTL, discounted LTL allows a notion of satisfaction quality. In discounted LTL, traces which satisfy a reachability objective sooner are given a higher value, and are thus preferred. If an LTL formula cannot be satisfied, the corresponding discounted LTL formula will assign higher values to traces which delay failure as long as possible. These properties of discounted LTL are desirable for enabling notions of promptness, but may yield more complex strategies which try to balance the values of multiple competing subformulas.

\begin{example}
\label{example:GandF}
Consider the discounted LTL formula $\varphi = \G_\lambda p \wedge \F_\lambda \neg p$. This formula contains two competing objectives that cannot both be completely satisfied. Increasing the value of $\G_\lambda p$ by increasing the number of $p$'s at the beginning of the trace before the first $\neg p$ decreases the value of $\F_\lambda \neg p$. Under the semantics of conjunction, the value of $\varphi$ is the minimum of the two subformulas. Specifically, the value of $\varphi$ w.r.t. a word $\rho$ is
\begin{align*}
    [\![ \G_{\lambda} p \wedge \F_{\lambda} \neg p,\rho ]\!] &= [\![ \neg \F_{\lambda} \neg p \wedge \F_{\lambda} \neg p, \rho ]\!] \\
    &= [\![ \neg( \F_{\lambda} \neg p \vee \neg \F_{\lambda} \neg p ),\rho ]\!] \\
    &= 1 - \max\{ [\![\F_{\lambda} \neg p ,\rho]\!] , [\![ \neg  \F_{\lambda} \neg p ,\rho]\!] \} \\
    &= 1 - \max\left\{ \sup_{i \ge 0} \{ \lambda^i  [\![ \neg p,\rho_{i:\infty} ]\!] \} , 1 - \sup_{i \ge 0} \{ \lambda^i  [\![ \neg p,\rho_{i:\infty} ]\!] \} \right\}.
\end{align*}
where $\rho_{i:\infty}$ is the trace starting from index $i$. Now consider a two state (deterministic) MDP with two states $S =\{s_1,s_2\}$ and two actions $A=\{a_1,a_2\}$ in which the agent can decide to either stay in $s_1$ or move to $s_2$ at any step and the system stays in $s_2$ upon reaching $s_2$. 
This MDP can be seen in Figure~\ref{fig:GandF}.
We have one proposition $p$ which holds in state $s_1$ and not in $s_2$. Note that all runs produced by the example MDP are either of the form $s_1^\omega$ or $s_1^k s_2^\omega$. The discounted LTL value of runs of the form $s_1^\omega$ is $0$. The value of runs of the form $\psi = s_1^k s_2^\omega$ is
\[
v(k) = \semantics{\vp,L(\psi)} = 1 - \max\{ \lambda^k, 1 - \lambda^k \}\enspace.
\]
A finite memory policy stays in $s_1$ for $k$ steps will yield this value. Since $\lambda^k$ is decreasing in $k$ and $1-\lambda^k$ is increasing in $k$, the integer value of $k$ that maximizes $v(k)$ lies in the interval $[\gamma-1,\gamma+1]$ where $\gamma\in\R$ satisfies $\lambda^\gamma = 1-\lambda^\gamma$. Figure~\ref{fig:GandF} shows this graphically. We have that $\gamma = \frac{\log(0.5)}{\log(\lambda)}$ which is increasing in $\lambda$. Hence, the amount of memory required increases with increase in $\lambda$. 
\end{example}

\begin{figure}[t]
\qquad
    \begin{minipage}{0.22\textwidth}
        \centering
        \begin{tikzpicture}
            \node[state, initial above, align=center] (s0) {$s_0$ \\ $p$};
            \node[state, align=center] (s1) [below=2cm of s0] {$s_1$ \\ $\bot$};
            \path[->]
            (s0) edge [loop left] node {$a_1$} ()
            (s0) edge [] node {$a_2$} (s1)
            (s1) edge [loop left] node {$a_1, a_2$} ()
            ;
        \end{tikzpicture}
    \end{minipage}%
    \qquad\qquad
    \begin{minipage}{0.30\textwidth}
        \centering
        \begin{tikzpicture}
            \pgfmathsetmacro{\lam}{0.99}
            \begin{axis}[
            domain=0:400,
            ymin=0,
            ymax=1,
            grid,
            xlabel={$k$},
            ylabel={Value},
            samples=200,
            width=6cm,
            height=4.6cm,
            legend style={at={(0.95,0.5)},anchor=east},
            extra x ticks={ln(0.5)/ln(\lam)},
            extra x tick labels={$\gamma$}
            ]
            \addplot[line width=1.4] {((x <= ln(0.5)/ln(\lam)) * (1 - \lam^x)) + ((x > ln(0.5)/ln(\lam)) * (\lam^x))};
            \addplot[blue] {1 - \lam^x};
            \addplot[red] {\lam^x};
            \addplot[dashed] coordinates {(ln(0.5)/ln(\lam), 0) (ln(0.5)/ln(\lam), 1)};
            \addlegendentry{$\G_\lambda p \wedge \F_\lambda \neg p$}
            \addlegendentry{$\G_\lambda p$}
            \addlegendentry{$\F_\lambda \neg p$}
            \end{axis}
        \end{tikzpicture}
    \end{minipage}
    \caption{
    \label{fig:GandF} 
    An example showing that memory requirements for optimal policies may depend on the discount factor. The red line is $\lambda^k$, the blue line is $1 - \lambda^k$ and the solid black line is $v(k) = 1 - \max\{ \lambda^n, 1 - \lambda^n \}$, where $k$ is the number of time steps one remains in $s_0$. The dashed vertical line shows the value $\gamma$ where $v(k)$ is maximized.
    We have set $\lambda = 0.99$. Note that changing the value of $\lambda$ corresponds to rescaling the x-axis.
    }
\end{figure}
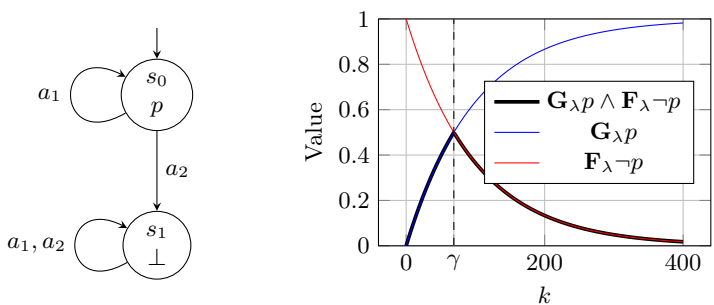

The optimal strategy in the example above tries to balance the value of two competing subformula. We will now show that extending this idea to the general case of multiple discount factors requires balancing quantities that are decaying at different speeds. This balancing may require remembering an arbitrarily long history of the trace---infinite memory is required.

\begin{theorem}\label{thm:infmem}
There exists an MDP $\M = (S,A,s_0,P)$, a labelling function $L$ and a discounted LTL formula $\vp$ such that for all $\pi\in\Pi_f(S,A)$ we have 
$
J^{\M}(\pi,\vp) < \J^*(\M,\vp).
$
\end{theorem}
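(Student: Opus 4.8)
The plan is to exhibit an explicit MDP and formula in which the optimal policy must balance two competing sub-objectives whose break-even point drifts off to infinity as an unobservable, geometrically distributed quantity grows, so that no finite amount of memory can track it (in contrast to the single-discount case of Example~\ref{example:GandF}, where the break-even point is a constant).

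Concretely, I would take $\M$ with states $s_0$ (initial), $s_1$, $s_2$ and actions $\{a_1,a_2\}$: in $s_0$ both actions move to $s_1$ with some probability $\theta\in(0,1)$ and self-loop otherwise; in $s_1$, $a_1$ self-loops while $a_2$ moves to the absorbing state $s_2$. With $\P=\{p,q\}$, set $L(s_0)=\emptyset$, $L(s_1)=\{p\}$, $L(s_2)=\{q\}$, and let $\vp=\F_{\lambda_1}\G_{\lambda_2}p\wedge\F_{\lambda_3}q$ with $0<\lambda_1<\lambda_3<1$ and $\lambda_2\in(0,1)$. The environment forces the agent to spend a $\mathrm{Geometric}(\theta)$ number $T$ of steps in $s_0$; the agent then builds a $p$-block in $s_1$ of a self-chosen length $m$ (deciding when to play $a_2$) and must eventually close the block by moving to $s_2$, where the second conjunct's proposition $q$ holds forever. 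The first part of the proof checks that on any run visiting $s_0$ for $T$ steps, then $s_1$ for $m\ge 1$ steps, then $s_2$ forever, the formula evaluates to $v(T,m):=\min\{\lambda_1^{T}(1-\lambda_2^{m}),\ \lambda_3^{T+m}\}$ — here the supremum defining $\F_{\lambda_1}\G_{\lambda_2}p$ is attained at the block's first position $n=T$, a short calculation using $0\le\lambda_1,\lambda_2<1$ (cf.\ the worked example $\F_{\lambda_1}\G_{\lambda_2}p$ earlier) — while never closing the block, or never leaving $s_0$, yields value $0$. A history-dependent policy can read $T$ off the run prefix and then play an optimal block length $m^*(T)$, so $\J^*(\M,\vp)=\E_T[\max_m v(T,m)]$, a value attained by this policy.

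The key structural observation is that, because $\lambda_1<\lambda_3$, the break-even length $m^{\times}(T)$ at which the two terms of $v(T,\cdot)$ coincide is characterized by $\ln(1-\lambda_2^{m})-m\ln\lambda_3=T(\ln\lambda_3-\ln\lambda_1)$, whose right-hand side tends to $+\infty$; hence $m^{\times}(T)\to\infty$, and since $v(T,\cdot)$ increases up to $m^{\times}(T)$ and decreases afterward, $m^*(T)\to\infty$ as well. Now take an arbitrary $\pi=(M,\delta_M,\mu,m_0)\in\Pi_f(S,A)$. Its memory state $Z$ on first entering $s_1$ lives in the \emph{finite} set $M$, and conditioned on $Z=z$ the behavior inside $s_1$ — a Markov chain on $M$ that is absorbed once $a_2$ is played — induces a distribution $D_z$ over block lengths in $\{1,2,\dots\}\cup\{\infty\}$ that does not depend on $T$. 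Hence $\pi$'s value conditioned on $T$ equals $\sum_{z\in M}\nu_T(z)\,\E_{m\sim D_z}[v(T,m)]\le\max_{z\in M}\E_{m\sim D_z}[v(T,m)]$, and by pigeonhole over the infinitely many values of $T$ some $z^\star$ realizes this maximum for infinitely many $T$.

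To finish, I would play the two sides off against each other. For any fixed distribution $D$, $\E_{m\sim D}[v(T,m)]\le\lambda_1^{T}\sum_{m<\infty}D(m)(1-\lambda_2^{m})=\lambda_1^{T}A_D$ with $A_D<1$ strictly (since $\lambda_2^{m}>0$ at every finite $m$, and $v(T,\infty)=0$). On the other hand $\max_m v(T,m)\ge v(T,\lfloor m^{\times}(T)\rfloor)=\lambda_1^{T}(1-\lambda_2^{\lfloor m^{\times}(T)\rfloor})$, and since $m^{\times}(T)\to\infty$ this exceeds $\frac{1}{2}(1+A_{D_{z^\star}})\lambda_1^{T}$ for all large $T$. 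Hence, along the infinitely many $T$ at which $z^\star$ wins, $\pi$'s conditional value falls short of the optimum by at least $\frac{1}{2}(1-A_{D_{z^\star}})\lambda_1^{T}>0$, and since each such $T$ has positive probability under $\mathrm{Geometric}(\theta)$ we conclude $\J^{\M}(\pi,\vp)<\J^*(\M,\vp)$. I expect this last stretch to be the main obstacle: reducing an arbitrary finite-memory policy to a single $T$-independent stopping distribution $D_{z^\star}$ on an infinite set of $T$'s (which needs only $|M|<\infty$ plus pigeonhole) and then converting the drift $m^*(T)\to\infty$ into a multiplicative shortfall that holds uniformly against an adversarially chosen $D_{z^\star}$; by contrast the value computation and the asymptotics of $m^{\times}(T)$ are routine.
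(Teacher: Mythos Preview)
Your proposal is correct and uses essentially the same construction as the paper: the identical three-state MDP (geometric waiting in $s_0$, controllable block in $s_1$, absorbing $s_2$) and the same formula shape $\F_{\lambda_1}\G_{\lambda_2}p\wedge\F_{\cdot}q$, the only cosmetic difference being that the paper reuses $\lambda_2$ for the second $\F$ while you introduce a separate $\lambda_3$. Your treatment of the finite-memory gap is noticeably more rigorous than the paper's, which simply observes that the optimal $k_1$ grows with $k_0$ and concludes heuristically that an unbounded counter is needed; you instead reduce an arbitrary (possibly randomized) $\pi\in\Pi_f$ to a finite family of $T$-independent stopping distributions $\{D_z\}_{z\in M}$, pigeonhole to a single $z^\star$, and extract a strict multiplicative shortfall via $A_{D_{z^\star}}<1$, which cleanly handles randomization in a way the paper's sketch does not.
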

\begin{proof}
Consider the MDP $\M$ depicted in Figure~\ref{fig:infmem}. It consists of three states $S=\{s_0,s_1,s_2\}$ and two actions $A=\{a_1,a_2\}$. The edges are labelled with actions and the corresponding transition probabilities. There are two propositions $\P=\{p_1,p_2\}$ and $p_1$ holds true in state $s_1$ and $p_2$ holds true in state $s_2$. The specification is given by $\vp = \F_{\lambda_1}\G_{\lambda_2}p_1\land\F_{\lambda_2}p_2$ where $\lambda_1 < \lambda_2 < 1$.

\begin{figure}[t]
    \centering
    \begin{tikzpicture}
        \node[state, initial left, align=center] (s0) {$s_0$ \\ $\bot$};
        \node[prob state] (prob) [right=1.5cm of s0] {};
        \node[state, align=center] (s1) [right=3cm of s0] {$s_1$ \\ $p_1$};
        \node[state, align=center] (s2) [right=3cm of s1] {$s_2$ \\ $p_2$};
        \path[->]
        (prob) edge [bend right=70, swap] node {$1-p$} (s0)
        (prob) edge [] node {$p$} (s1)
        (s1) edge [loop above] node {$a_1$} ()
        (s1) edge [] node {$a_2$} (s2)
        (s2) edge [loop above] node {$a_1, a_2$} ()
        ;
        \path[-]
        (s0) edge [swap] node {$a_1, a_2$} (prob)
        ;
    \end{tikzpicture}
    \caption{The need for infinite memory for achieving optimality in discounted LTL.}
    \label{fig:infmem}
\end{figure}
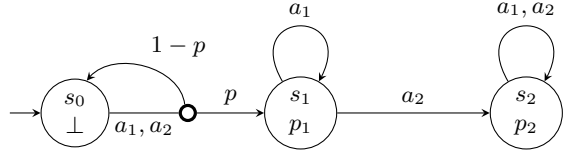

For any run $\psi$ that never visits $s_2$, we have $\semantics{\vp, L(\psi)} = 0$ since $\semantics{\F_{\lambda_2}p_2,L(\psi)} = 0$. Otherwise the run has the form $\psi=s_0^{k_0} s_1^{k_1} s_2^{\omega}$ where $k_0$ is stochastic and $k_1$ is a strategic choice by the agent. To show that this requires an infinite amount of memory to play optimally, one just has to show that the optimal choice of $k_1$ increases with $k_0$. This means that the agent must remember $k_0$, the number of steps spent in the initial state, via an unbounded counter. Note that every value of $k_0$ has a non-zero probability in $\M$ and therefore choosing a suboptimal $k_1$ for even a single value of $k_0$ causes a decrease in value from the policy that always chooses optimal $k_1$.

The value of the run $\psi$ is $\semantics{\vp,L(\psi)}=\min( \lambda_1^{k_0} (1 - \lambda_2^{k_1}) , \lambda_2^{k_0+k_1} )$. Note that $\lambda_1^{k_0} (1 - \lambda_2^{k_1})$ increases with increase in $k_1$ and $\lambda_2^{k_0+k_1}$ decreases with increase in $k_1$. Therefore taking $\gamma\in\R$ to be such that $\lambda_1^{k_0} (1 - \lambda_2^{\gamma}) = \lambda_2^{k_0+\gamma}$, the optimal choice of $k_1$ lies in the interval $[\gamma-1,\gamma+1]$. Now $\gamma$ satisfies $1 = \big((\lambda_2 / \lambda_1)^{k_0} + 1\big)\lambda_2^{\gamma}$. Since $\lambda_1<\lambda_2<1$ we must have that $\gamma$ increases with increase in $k_0$. Therefore, $k_1$ also increases with increase in $k_0$. \hfill\qed
\end{proof}

\subsection{PAC Learning}
In the above discussion, we showed that one might need infinite memory to act optimally w.r.t a discounted LTL formula. However, it can be shown that for any MDP $\M$, labelling function $L$, discounted LTL formula $\vp$ and any $\varepsilon>0$, there is a finite-memory policy $\pi$ that is $\varepsilon$-optimal for $\vp$.
In fact, we can show that this class of discounted LTL formulas admit a PAC-MDP learning algorithm.

\begin{theorem}[Existence of PAC-MDP]
There exists a PAC-MDP learning algorithm for discounted LTL specifications.
\end{theorem}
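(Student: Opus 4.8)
The plan is to use strict discounting to show that the value of a discounted LTL formula is determined, up to arbitrarily small error, by a bounded-length prefix of the trace; this turns policy optimization into a finite-horizon reinforcement-learning problem, for which PAC-MDP algorithms are standard.

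\textbf{Step 1 (finite dependence).} Let $\lambda_{\max} < 1$ be the largest discount factor occurring in $\vp$ (if $\vp$ has no temporal operator take $\lambda_{\max}=0$). I would first prove, by induction on the structure of $\vp$, that whenever two $\omega$-words $\rho,\rho'$ agree on their first $H$ letters, $\lvert\semantics{\vp,\rho}-\semantics{\vp,\rho'}\rvert \le \lambda_{\max}^{H}$. The atomic, Boolean and $\X_{\lambda}$ cases are immediate, using that $\min$ and $\max$ are $1$-Lipschitz in the sup norm, that every subformula value lies in $[0,1]$, and that $\lambda\le\lambda_{\max}$. For $\vp_1\U_{\lambda}\vp_2$, write the value as $\sup_{i\ge 0}V_i$ with $V_i=\min\{\lambda^{i}\semantics{\vp_2,\rho_{i:\infty}},\min_{0\le j<i}\lambda^{j}\semantics{\vp_1,\rho_{j:\infty}}\}\in[0,1]$: for $i\ge H$ one has $V_i\le\lambda^{i}\le\lambda_{\max}^{H}$ (and likewise for $\rho'$) because subformula values are bounded by $1$, whereas for $i<H$ the relevant suffixes of $\rho$ and $\rho'$ still share a prefix of length $H-i$, so the induction hypothesis gives $\lvert V_i-V_i'\rvert\le\lambda_{\max}^{i}\cdot\lambda_{\max}^{H-i}=\lambda_{\max}^{H}$; splitting $\sup_i$ at $i=H$ and using $1$-Lipschitzness of $\sup$ and $\max$ then yields the bound. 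A consequence is that $\semantics{\vp,\cdot}$ is a uniform limit of functions of finitely many coordinates, hence Borel measurable, so $\J^{\M}(\pi,\vp)$ is well defined.

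\textbf{Step 2 (reduction to a finite-horizon MDP).} Fix $\varepsilon>0$ and choose $H$ with $\lambda_{\max}^{H}\le\varepsilon/3$; then $H$ depends only on $\vp$ and $1/\varepsilon$. Because labels are determined by states, any run $\psi=s_0a_0s_1a_1\cdots$ with first states $s_0,\dots,s_{H-1}$ satisfies $\lvert\semantics{\vp,L(\psi)}-\tilde r(s_0,\dots,s_{H-1})\rvert\le\varepsilon/3$, where $\tilde r(s_0,\dots,s_{H-1}):=\semantics{\vp,\,L(s_0)\cdots L(s_{H-1})\,(L(s_{H-1}))^{\omega}}$ is a fixed function of the first $H$ states, computable by the learner (its argument is eventually constant). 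Let $\M_H$ be the episodic MDP of horizon $H$ whose state records the sequence of states of $\M$ seen so far (so $\lvert S\rvert^{O(H)}$ states), with the transitions of $\M$, stage rewards $0$, and a terminal reward $\tilde r(s_0,\dots,s_{H-1})$. The expected return of a policy in $\M_H$ equals $\E_{\psi\sim\D^{\M}_{\pi}}\tilde r(s_0,\dots,s_{H-1})$, which is within $\varepsilon/3$ of $\J^{\M}(\pi,\vp)$ for every $\pi$; conversely every policy of $\M_H$ lifts to a finite-memory policy of $\M$ (acting arbitrarily after step $H$) with the same $\M_H$-return. Hence $\J^{*}(\M,\vp)$ and the optimal value of $\M_H$ differ by at most $\varepsilon/3$ --- which, incidentally, also establishes the existence of an $\varepsilon$-optimal finite-memory policy mentioned above.

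\textbf{Step 3 (invoke a finite-horizon PAC learner).} Run any PAC-MDP algorithm for finite-horizon (episodic) MDPs on $\M_H$, with target accuracy $\varepsilon/3$ and confidence $p$, executing it in $\M$ and resetting every $H$ steps; the learner computes the reward $\tilde r$ from the observed states and never needs to know $P$. Its sample complexity is a known function of $\lvert S\rvert^{O(H)}$, $\lvert A\rvert$, $H$, $1/p$ and $1/\varepsilon$; composing it with $H=H(\vp,1/\varepsilon)$ gives the required function $\eta$. With probability at least $1-p$, at most $N=\eta(\lvert S\rvert,\lvert A\rvert,\lvert\vp\rvert,1/p,1/\varepsilon)$ of the emitted policies fail to be $(\varepsilon/3)$-optimal for $\M_H$, and by the $\varepsilon/3$-budgeting of Step 2 every other emitted policy is $\varepsilon$-optimal for $\vp$ in $\M$, which is exactly the PAC-MDP property.

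The crux is Step 1, and within it the $\U_{\lambda}$ case: the supremum over all shifts blocks a naive inductive bound, and the point is that truncating the supremum at shift $H$ costs only $\lambda_{\max}^{H}$ (because inner subformula values are bounded by $1$), while for smaller shifts the discount factor $\lambda^{i}$ exactly offsets the shorter shared prefix. A secondary issue to flag is that when $\lambda_{\max}$ is very close to $1$ the required horizon $H\approx\log(3/\varepsilon)/\log(1/\lambda_{\max})$, and hence the size of $\M_H$ and the bound $\eta$, can be exponential in $\lvert\vp\rvert$; this is consistent with the claimed existence result but shows that this route does not yield a bound polynomial in the specification size.
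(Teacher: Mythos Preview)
Your proposal is correct and follows essentially the same approach as the paper's proof sketch: bound the effective horizon via strict discounting, unroll the MDP into a finite-horizon problem with a terminal reward computed from the observed label sequence, and invoke an off-the-shelf finite-horizon PAC learner. The paper asserts the finite-dependence bound without giving the induction (and uses the convention that subformula values vanish beyond the truncation rather than your repeat-the-last-label extension), whereas you supply the structural induction and, in particular, the splitting argument for the $\U_{\lambda}$ case; these are cosmetic differences, not a different route.
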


\begin{proof}[sketch]
Our approach to compute $\varepsilon$-optimal policies for discounted LTL is to compute a policy which is optimal for $T$ steps. The policy will depend on the entire history of atomic propositions that has occured so far.

Given discounted LTL specification $\vp$, the first step of the algorithm is to determine $T$. We select $T$ such that for any two infinite words $\alpha$ and $\beta$ where the first $T+1$ indices match, i.e. $\alpha_{0:T} = \beta_{0:T}$, we have that $\big|[\![ \varphi,\alpha ]\!] - [\![ \varphi,\beta ]\!]\big| \le \varepsilon$. Say that the maximum discount factor appearing in all temporal operators is $\lambda_{\max{}}$. Due to the strict discounting of discounted LTL, selecting $T \geq \frac{\log \varepsilon}{\log \lambda_{\max}}$ ensures that $\big|[\![ \varphi,\alpha ]\!] - [\![ \varphi,\beta ]\!]\big| \le \lambda^n \le \varepsilon$.

Now we unroll the MDP for $T$ steps. We include the history of the atomic proposition sequence in the state. Given an MDP $\M = (S, A, s_0, P)$ and a labeling $L: S \to \Sigma$, the unrolled MDP $\M_T = (S', A', s_0', P')$ is such that 
\[
S' = \bigcup_{t = 0}^T S \times \underbrace{\Sigma \times \ldots \times \Sigma}_\text{$t$ times}\enspace,
\]

\noindent $A' = A$, 
$
P'((s, \sigma_0, \ldots, \sigma_{t-1}), a, (s', \sigma_0, \ldots, \sigma_{t-1}, \sigma_t)) = P(s, a, s')
$
if $0 \le t \le T$ and  $\sigma_t = L(s')$, and is $0$ otherwise (the MDP goes to a sink state if $t>T$). The leaves of the unrolled MDP are the states where $T$ timesteps have elapsed. In these states, there is an associated finite word of length $T$. For a finite word of length $T$, we define the value of any formula $\varphi$ to be zero beyond the end of the trace, i.e. $[\![ \varphi,\rho_{j:\infty} ]\!] = 0$ for any $j > T$. We then compute the value of the finite words associated with the leaves which is then considered as the reward at the final step. We can use existing PAC algorithms to compute an $\varepsilon$-optimal policy w.r.t. this reward for the finite horizon MDP $\M_T$ from which we can obtain a $2\varepsilon$-optimal policy for $\M$ w.r.t the specification $\vp$.\hfill\qed

\end{proof}

\section{Uniformly Discounted LTL to Reward Machines}\label{sec:dltl_to_rm}

In general, optimal strategies for discounted LTL require infinite memory (Theorem~\ref{thm:infmem}). However, producing such an example required the use of multiple, varied discount factors. In this section, we will show that finite memory is sufficient for optimal policies under uniform discounting, where the discount factors for all temporal operators in the formula are the same. We will also provide an algorithm for computing these strategies.

Our approach is to reduce uniformly discounted LTL formulas to \emph{reward machines}, which are finite state machines in which each transition is associated with a reward. We show that the value of a given discounted LTL formula $\vp$ for an infinite word $\rho$ is the discounted-sum reward computed by a corresponding reward machine.

Formally, a reward machine is a tuple $\Rc = (Q, \delta, r, q_0, \lambda)$ where $Q$ is a finite set of states, $\delta: Q\times \Sigma\to Q$ is the transition function, $r:Q\times\Sigma\to\R$ is the reward function, $q_0\in Q$ is the initial state, and $\lambda\in[0,1)$ is the discount factor. With any infinite word $\rho=\sigma_0\sigma_1\ldots\in\Sigma^\omega$, we can associate a sequence of rewards $c_0c_1\ldots$ where $c_t = r(q_t,\sigma_t)$ with $q_t = \delta(q_{t-1}, \sigma_{t-1})$ for $t>0$. We use $\Rc(\rho)$ to denote the discounted reward achieved by $\rho$,
$$\Rc(\rho) = \sum_{t=0}^\infty\lambda^tc_t,$$
and $\Rc(w)$ to denotes the partial discounted reward achieved by the finite word $w=\sigma_0\sigma_1\ldots\sigma_T\in\Sigma^*$---i.e., $\Rc(w) = \sum_{t=0}^T\lambda^tc_t$ where $c_t$ is the reward at time $t$. 

Given a reward machine $\Rc$ and an MDP $\M$, our objective is to maximize the expected value $\Rc(\rho)$ from the reward machine reading the word $\rho$ produced by the MDP. Specifically, the value for a policy $\pi$ for $\M$ is 
\[
\J^{\M}(\pi, \Rc)  = \displaystyle\underset{\rho\sim\D^{\M}_{\pi}}{\E}[\Rc(\rho)]
\]
where $\pi$ is optimal if $\J^{\M}(\pi, \Rc) = \sup_{\pi} \J^{\M}(\pi, \Rc)$. Finding such an optimal policy is straightforward: we consider the product of the reward machine $\Rc$ with the MDP $\M$ to form a product MDP with a discounted reward objective. In the corresponding product MDP, we can compute optimal policies for maximizing the expected discounted-sum reward using standard techniques such as policy iteration and linear programming. If the transition function of the MDP is unknown, this product can be formed on-the-fly and any RL algorithm for discounted reward can be applied. Using the state space of the reward machine as memory, we can then obtain a finite-memory policy that is optimal for $\Rc$. 

We have the following theorem showing that we can construct a reward machine $\Rc_{\vp}$ for every uniformly discounted LTL formula $\vp$.

\begin{theorem}\label{thm:rm_exists}
    For any uniformly discounted LTL formula $\vp$, in which all temporal operators use a common discount factor $\lambda$, we can construct a reward machine $\Rc_\varphi = (Q, \delta, r, q_0, \lambda)$ such that for any $\rho\in\Sigma^\omega$, we have $\Rc_\varphi(\rho) = \semantics{\rho, \varphi}$.
\end{theorem}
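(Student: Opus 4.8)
The plan is to proceed by structural induction on $\vp$, exhibiting a reward machine $\Rc_\psi$ with discount $\lambda$ for every subformula $\psi$ and closing off under the five constructors. The closure operations I would rely on are: a reward machine that pays a prescribed constant forever; the \emph{product} of two reward machines (take the product DFA, add the two reward functions), which computes the pointwise \emph{sum} of their values; and scaling a machine's rewards by a scalar. All of these are again reward machines with the same $\lambda$. Three of the inductive cases are then routine. For an atom $b$: a two-state machine paying $\mathbbm 1(b\in\sigma_0)$ on the first letter and $0$ afterwards. For $\X_\lambda\psi$: prepend a fresh initial state paying $0$ that hands control to $\Rc_\psi$; because the discount is \emph{uniform}, this prepended step scales $\Rc_\psi$'s value by exactly $\lambda$, matching $\semantics{\X_\lambda\psi,\rho}=\lambda\semantics{\psi,\rho_{1:\infty}}$. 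For $\neg\psi$: keep $\Rc_\psi$'s transitions but replace every reward $r(q,\sigma)$ by $(1-\lambda)-r(q,\sigma)$; since $(1-\lambda)\sum_t\lambda^t=1$, the total flips to $1-\Rc_\psi(\rho)$.

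The real content is disjunction $\psi_1\vee\psi_2$ (and, through the fixpoint identity $\semantics{\psi_1\U_\lambda\psi_2,\rho}=\max\{\semantics{\psi_2,\rho},\min\{\semantics{\psi_1,\rho},\lambda\semantics{\psi_1\U_\lambda\psi_2,\rho_{1:\infty}}\}\}$, the Until operator). A pointwise maximum of two discounted sums is \emph{not} in general a discounted sum: one can write down reward machines $\Rc_1,\Rc_2$ for which no reward machine computes $\max(\Rc_1,\Rc_2)$, because the gap between their partial sums, rescaled into time-$t$ units by $\lambda^{-t}$, is unbounded. So a naive product construction fails, and I would instead exploit a property special to discounted LTL values under uniform discounting: every branch of a $\max$ or $\min$ in the semantics of $\U$ (and of $\F$, $\G$) is capped by a factor $\lambda^i$, so after rescaling by $\lambda^{-i}$ it stays in $[0,1]$. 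From this I would prove a \emph{bounded relevance horizon} lemma: there is a $T_\vp$, depending only on $\vp$ and $\lambda$, such that after any prefix of length $\ge T_\vp$ the resolution of each outstanding $\max$/$\min$ in the evaluation of $\vp$ is already forced --- one branch provably dominates, and the surviving value is a reward-machine-computable combination (via sum/product/scaling/constants) of the subformula machines. For instance, for $\F_\lambda p\wedge\G_\lambda q$ the threshold is $\gamma=\log_\lambda\tfrac12$: once more than $\gamma$ steps elapse, $\lambda^{(\cdot)}\le\tfrac12$ pins the value to $1-\lambda^n$ (i.e.\ to $\neg\F_\lambda p$), which is reward-machine computable; and for the finitely many shorter prefixes each configuration is handled with \emph{bounded} correction rewards, bounded precisely because $\lambda^{-t}<\lambda^{-\gamma}$ there. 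The general construction for $\psi_1\vee\psi_2$ takes the product of $\Rc_{\psi_1}$ and $\Rc_{\psi_2}$, augments the state with a counter capped at $T_\vp$ plus the (finitely many, by the horizon bound) relevant partial-sum discrepancies, and defines rewards so that the running discounted sum equals $\max(\semantics{\psi_1,\cdot},\semantics{\psi_2,\cdot})$; for $\U$ one unrolls the fixpoint and uses the same horizon argument to see that the $\sup$ over unrollings is attained within $T_\vp$ steps.

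Finally I would verify $\Rc_\vp(\rho)=\semantics{\vp,\rho}$ by a routine induction on the formula, checking at each step the invariant maintained by the augmented state and passing to the limit of the partial sums. The step I expect to be the main obstacle is exactly the disjunction/Until case --- specifically, making the bounded-horizon lemma rigorous in the presence of nesting: the horizons of subformulas compound, and one has to argue both that the resulting $T_\vp$ stays finite and that the accumulated correction rewards range over only finitely many values as the formula is built up. The atom, negation, and next cases, by contrast, are essentially immediate.
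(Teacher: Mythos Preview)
Your treatment of atoms, negation, and $\X_\lambda$ is correct and matches the paper. The gap is in the disjunction and Until cases: the ``bounded relevance horizon'' lemma you rely on is false as stated.

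For a counterexample to the horizon claim, take $\vp=\G_\lambda p\vee\G_\lambda q$. For any proposed $T_\vp$, after the prefix $(\{p,q\})^{T_\vp}$ the two partial sums coincide, but the outstanding $\max$ is \emph{not} forced: the continuation $\{p\}^\omega$ makes $\G_\lambda p$ strictly larger, $\{q\}^\omega$ makes $\G_\lambda q$ strictly larger, and $\emptyset^\omega$ makes them tie. A machine that commits to either branch at time $T_\vp$ gets the wrong answer on some continuation, so the resolution is genuinely deferred beyond any fixed horizon. Your claim for Until, that ``the $\sup$ over unrollings is attained within $T_\vp$ steps,'' fails even more directly: on $\emptyset^n\{p\}^\omega$ the sup defining $\F_\lambda p$ is attained at unrolling $n$, which is unbounded.

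What the paper does instead is strengthen the induction hypothesis: every $\Rc_\psi$ constructed along the way is shown to satisfy an SCC invariant ($I_2$) that within each strongly connected component all rewards equal a common value in $\{0,1-\lambda\}$, together with the bound ($I_3$) that all rewards lie in $[0,1-\lambda]$. For $\psi_1\vee\psi_2$ one takes the product and carries the rescaled discrepancy $\zeta=(\Rc_{\psi_1}(w)-\Rc_{\psi_2}(w))/\lambda^{|w|}$ in the state, emitting whichever branch is currently ahead plus a correction. Finiteness of the reachable $\zeta$-values is argued not via a global time bound but via the SCC decomposition: inside a product SCC both machines pay constant rewards, so $\zeta$ either stays at $0$ or blows up geometrically (forcing $|\zeta|\ge 1$ and a commitment to one branch), while transitions between product SCCs are acyclic and hence contribute only finitely many entry values. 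The $\F_\lambda$ and $\U_\lambda$ cases are then subset constructions layered on top of this idea, tracking all still-live unrollings $\X_\lambda^i\psi$ simultaneously and again relying on $I_2$ for finiteness. Invariant $I_2$ is precisely the structural property your proposal is missing; without it there is no reason the set of rescaled discrepancies you would need to track is finite, and the horizon argument you offer in its place does not hold.
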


We provide the reward machine construction for Theorem~\ref{thm:rm_exists} in the next subsection. Using this theorem, one can use a reward machine $\Rc_\vp$ that matches the value of a particular uniformly discounted LTL formula $\vp$, and then apply the procedure outlined above for computing optimal finite-memory policies for reward machines.

\begin{corollary}
For any MDP $\M$, labelling function $L$ and a discounted LTL formula $\vp$ in which all temporal operators use a common discount factor $\lambda$, there exists a finite-memory optimal policy $\pi\in\Pi_{\opt}(\M,\vp)$. Furthermore, there is an algorithm to compute such a policy.
\end{corollary}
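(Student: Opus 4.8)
The plan is to reduce the corollary to a discounted-sum reward problem on a finite product MDP, where finite-memory optimal policies are classical. First I would invoke Theorem~\ref{thm:rm_exists} on $\vp$ to obtain a reward machine $\Rc_\vp = (Q,\delta,r,q_0,\lambda)$ with $\Rc_\vp(\rho) = \semantics{\vp,\rho}$ for every $\rho\in\Sigma^\omega$. Since a policy value is an expectation over runs $\psi$ of $\semantics{\vp,L(\psi)}$, and pointwise $\semantics{\vp,L(\psi)} = \Rc_\vp(L(\psi))$, we get $\J^\M(\pi,\vp) = \E_{\psi\sim\D^\M_\pi}[\Rc_\vp(L(\psi))] = \J^\M(\pi,\Rc_\vp)$ for every $\pi$; so $\J^\M(\cdot,\vp)$ and $\J^\M(\cdot,\Rc_\vp)$ are the same function of $\pi$, share the optimal value $\J^*(\M,\vp)$ and the same optimal policies, and it suffices to produce a finite-memory optimal policy for the reward machine $\Rc_\vp$.

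Next I would form the product MDP $\M'$ with state set $S\times Q$, initial state $(s_0,q_0)$, action set $A$, transition probability $P(s,a,s')$ from $((s,q),a)$ to $(s',\delta(q,L(s)))$, reward $R((s,q),a) = r(q,L(s))$, and discount factor $\lambda$; as $Q$ and $\Sigma$ are finite, $R$ has finite image and is bounded. Along any run of $\M'$ the collected reward sequence is exactly $c_0c_1\ldots$ with $c_t = r(q_t,L(s_t))$, so its discounted sum equals $\Rc_\vp(L(\psi))$ for the projected run $\psi$; consequently the expected discounted reward of a policy in $\M'$ started at $(s_0,q_0)$ equals $\J^\M(\pi,\Rc_\vp)$ for the corresponding policy $\pi$ on $\M$, and every policy of $\M$ arises this way, since the $Q$-component of a product-history is a deterministic function of the $S$-component. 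By the standard theory of finite discounted MDPs~\cite{puterman2014markov}, $\M'$ has a stationary deterministic optimal policy $\sigma^*\colon S\times Q\to A$, computable by value iteration, policy iteration, or linear programming.

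Finally I would lift $\sigma^*$ to $\M$: set $M = Q$, $m_0 = q_0$, $\delta_M(q,s,a) = \delta(q,L(s))$, $\mu(q,s) = \sigma^*(s,q)$, giving $\pi^* = (M,\delta_M,\mu,m_0)\in\Pi_f(S,A)$. An induction on run length shows that after a finite run $s_0a_0\ldots s_t$ the memory state $\delta_M$ reaches is exactly $q_t = \delta(q_{t-1},L(s_{t-1}))$, so $\D^\M_{\pi^*}$ is the $S$-marginal of the run distribution of $\sigma^*$ in $\M'$, and therefore $\J^\M(\pi^*,\Rc_\vp) = \sup_\pi\J^\M(\pi,\Rc_\vp) = \J^*(\M,\vp)$, i.e. $\pi^*$ is a finite-memory optimal policy for $\vp$. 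The algorithm is then: build $\Rc_\vp$ via Theorem~\ref{thm:rm_exists}, construct $\M'$, solve the discounted MDP $\M'$, and output the lifted policy $\pi^*$.

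I do not expect a single deep obstacle; the content lies in the bookkeeping that glues the pieces together. One has to check that $\rho\mapsto\semantics{\vp,\rho}$ and $\rho\mapsto\Rc_\vp(\rho)$ are Borel measurable so that the expectations---and the pointwise-equality step---are legitimate; that the transition and reward indexing of $\M'$ matches the reward machine exactly, in particular whether $\delta$ consumes the label of the current state or of the successor, so that per-run discounted sums coincide on the nose; and that the stationary policy obtained on the product genuinely instantiates the paper's definition of a finite-memory policy in $\Pi_f(S,A)$, including the fact that the supremum over all history-dependent policies of $\M'$ is attained by a stationary one. Each of these is routine given Theorem~\ref{thm:rm_exists} and classical discounted-MDP theory, but together they are where the proof actually has content.
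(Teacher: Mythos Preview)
Your proposal is correct and follows essentially the same approach as the paper: invoke Theorem~\ref{thm:rm_exists} to obtain $\Rc_\vp$, form the product of $\M$ with $\Rc_\vp$, solve the resulting finite discounted-sum MDP by standard methods, and lift the stationary optimal policy back to a finite-memory policy on $\M$ using the reward-machine states as memory. The paper states this argument more tersely in the paragraph preceding the corollary, while you have spelled out the product construction and the policy-lifting step in more detail; there is no substantive difference.
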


\subsection{Reward Machine Construction}
For our construction, we examine the case of uniformly discounted LTL formula with positive discount factors $\lambda \in (0, 1)$. This allows us to divide by $\lambda$ in our construction. We note that the case of uniformly discounted LTL formula with $\lambda = 0$ can be evaluated after reading the initial letter of the word, and thus have trivial reward machines.

The reward machine $\Rc_\vp$ constructed for the uniformly discounted LTL formula $\vp$ exhibits a special structure. Specifically, all edges within any given strongly-connected component (SCC) of $\Rc_{\vp}$ share the same reward, which is either $0$ or $1-\lambda$, while all other rewards fall within the range of $[0,1-\lambda]$.
We present an inductive construction of the reward machines over the syntax of discounted LTL that maintains these invariants.

\begin{lemma}\label{lem:rm_const}
    For any uniformly discounted LTL formula $\varphi$ there exists a reward machine $\Rc_\varphi = (Q, \delta, r, q_0, \lambda)$ such that following hold:
    \begin{enumerate}
        \item [$I_1$.] For any $\rho\in\Sigma^\omega$, we have $\Rc_\varphi(\rho) = \semantics{\rho, \varphi}$.
        \item [$I_2$.] There is a partition of the states $Q = \bigcup_{\ell=1}^L Q_\ell$ and a type mapping $\chi:[L]\to \{0, 1-\lambda\}$ such that for any $q\in Q_\ell$ and $\sigma\in\Sigma$, 
        \begin{enumerate}
            \item $\delta(q,\sigma)\in\bigcup_{m=\ell}^L Q_m$, and
            \item if $\delta(q,\sigma)\in Q_\ell$ then $r(q, \sigma) = \chi(\ell)$.
        \end{enumerate}       
        \item [$I_3$.] For any $q\in Q$ and $\sigma\in\Sigma$, we have $0\leq r(q,\sigma)\leq 1-\lambda$.
    \end{enumerate}
\end{lemma}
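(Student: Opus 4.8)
The plan is to establish $I_1$, $I_2$, and $I_3$ simultaneously by structural induction on $\vp$, assembling $\Rc_\vp$ from the reward machines obtained for its subformulas. Two trivial machines serve as building blocks: $\Rc_\top$, a single state with a self-loop emitting $1-\lambda$ on every letter, so that $\Rc_\top(\rho)=\sum_{t\ge0}\lambda^t(1-\lambda)=1$; and $\Rc_\bot$, a single state with all rewards $0$, so that $\Rc_\bot(\rho)=0$. Both are single-level, single-SCC machines satisfying $I_1$--$I_3$ with type $\chi\equiv 1-\lambda$ and $\chi\equiv 0$ respectively. The identity $1=\sum_{t\ge0}\lambda^t(1-\lambda)$ is used throughout, so that a value of $1$ is realized by an absorbing SCC of type $1-\lambda$ and a value of $0$ by an absorbing SCC of type $0$; this is exactly why the two admissible types in $I_2$ are $0$ and $1-\lambda$.

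For $\vp=b\in\P$ I would use a fresh initial state $q_0$ that on reading $\sigma$ emits $1-\lambda$ and enters a copy of $\Rc_\top$ when $b\in\sigma$, and emits $0$ and enters a copy of $\Rc_\bot$ otherwise; since $(1-\lambda)+\sum_{t\ge1}\lambda^t(1-\lambda)=1$, the discounted sum equals $\mathbbm{1}(b\in\sigma_0)=\semantics{b,\rho}$, and placing $q_0$ alone in the lowest level makes the constraint in $I_2$ vacuous there. For $\vp=\lnot\psi$ I would keep $\Rc_\psi$ unchanged except that every reward $r$ is replaced by $(1-\lambda)-r$; by the identity above the discounted sum becomes $1-\semantics{\psi,\rho}$, $I_3$ is preserved since $(1-\lambda)-r\in[0,1-\lambda]$, and $I_2$ is preserved with the relabelled type map $\chi\mapsto(1-\lambda)-\chi$, which merely swaps $0$ and $1-\lambda$. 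For $\vp=\X_\lambda\psi$ I would prepend a fresh initial state that emits $0$ on every letter and transitions into $\Rc_\psi$; this shifts the reward stream by one position and multiplies the discounted sum by $\lambda$, matching $\semantics{\X_\lambda\psi,\rho}=\lambda\semantics{\psi,\rho_{1:\infty}}$, while the new state forms a new bottom level with no self-loop, so $I_2$ and $I_3$ hold at once.

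The real content is disjunction and until. For $\vp=\psi_1\lor\psi_2$ the target is a machine whose discounted sum is $\max(\Rc_{\psi_1}(\rho),\Rc_{\psi_2}(\rho))$; I would build it on the product $Q^{\Rc_{\psi_1}}\times Q^{\Rc_{\psi_2}}$ augmented with a bounded amount of extra state. The structural fact that makes this feasible is $I_2$ applied to the two subformula machines: along any run each factor's level is non-decreasing, hence its reward type eventually freezes, so its remaining discounted reward is eventually a pure geometric tail equal to $\lambda^{n}$ or to $0$. Together with the fact that $\semantics{\psi_i,\cdot}$ is pinned, up to a geometrically small error, by a finite prefix (the same estimate that fixes the horizon $T$ in the PAC-MDP proof above), this lets one emit rewards at the augmented product states so that the running discounted sum is always a valid under-approximation of $\max(v_1,v_2)$ consistent with the prefix seen so far, and converges to it as the types of both factors settle. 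The levels of the product are ordered by a refinement of the lexicographic order on the pair of factor levels, with $\chi$ recording whether the eventual winner has tail type $0$ or $1-\lambda$; then $I_2$(a) reduces to compatibility of this order with the two given orders, and $I_2$(b) and $I_3$ follow from how the required reward is apportioned between in-level and jump edges. For $\vp=\psi_1\U_\lambda\psi_2$ I would lean on the semantic fixpoint identities $\F_\lambda\psi_2\equiv\psi_2\lor\X_\lambda\F_\lambda\psi_2$ and $\psi_1\U_\lambda\psi_2\equiv\psi_2\lor\bigl(\psi_1\land\X_\lambda(\psi_1\U_\lambda\psi_2)\bigr)$: one obtains $\Rc_{\psi_1\U_\lambda\psi_2}$ by applying the disjunction, conjunction, and next constructions and closing a loop, so that the machine simultaneously tracks the finitely many states of copies of $\Rc_{\psi_2}$ (and $\Rc_{\psi_1}$) ``restarted'' at each position and its own running best held as accumulated reward; a restart at position $i$ can overtake an established best only while $\lambda^i$ still exceeds it, and restarted copies again have eventually stable types, keeping the object finite-state. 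The operators $\G_\lambda$, $\F_\lambda$, and $\land$ then come for free through $\lnot$ and $\lor$.

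I expect the disjunction case — and hence the until case, which reduces to disjunction, conjunction, and next — to be the main obstacle. The tension is that $I_1$ forces the running discounted sum to converge to $\max(v_1,v_2)$ on \emph{every} word, while $I_3$ keeps each individual reward in $[0,1-\lambda]$ and in particular \emph{non-negative}: reward already emitted cannot be retracted, so the machine can never over-commit, and the sole leverage against this is $I_2$'s guarantee that both factors' types eventually freeze, which must be used to argue that the partial sums catch up in the limit. At the same time one must choose a level ordering and type map on the product so that transitions never lower the level and in-level rewards equal the declared type, which constrains how the needed reward can be split between in-level edges and jump edges. Showing that all of these requirements are jointly satisfiable — rather than the arithmetic inside any single inductive case — is where the work concentrates.
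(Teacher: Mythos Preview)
Your base cases (atomic propositions, negation, $\X_\lambda$) are correct and essentially identical to the paper's. The difficulty is, as you say, disjunction and until, and here the proposal has real gaps.

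For disjunction you describe constraints a construction must meet but never give one. The paper's concrete mechanism is this: take states $(q_1,q_2,\zeta)\in Q_1\times Q_2\times\R$ where $\zeta=(\Rc_{\psi_1}(w)-\Rc_{\psi_2}(w))/\lambda^{|w|}$ is the \emph{scaled} difference of the two accumulated sums, and emit at each step whatever reward makes $\Rc_\vp(w)=\max\{\Rc_{\psi_1}(w),\Rc_{\psi_2}(w)\}$ hold \emph{exactly}. Your worry about ``over-committing'' is a red herring: since both sub-machines have non-negative rewards by $I_3$, the running $\max$ is monotone in $w$ and increases by at most $(1-\lambda)\lambda^n$ at step $n$, so the needed reward is always in $[0,1-\lambda]$---nothing ever needs retracting. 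The real work is \emph{finiteness}: $\zeta$ lives in $\R$, and one must show only finitely many values are reachable. This is where $I_2$ is used inductively, via an induction on the pair of SCC levels $(\ell,m)$ showing that within each $Q_\ell^1\times Q_m^2$ the value $\zeta$ either stays at $0$ or blows up past $\pm 1$ in boundedly many steps (at which point the machine commits to one factor). Your sketch does not isolate this state component or this argument.

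For until, the fixpoint approach ``$\psi_1\U_\lambda\psi_2\equiv\psi_2\lor(\psi_1\land\X_\lambda(\psi_1\U_\lambda\psi_2))$ and close a loop'' is circular as an inductive construction: the right-hand side refers to $\Rc_{\psi_1\U_\lambda\psi_2}$ itself, so you cannot invoke the disjunction/conjunction constructions on it. The paper avoids this by unfolding $\F_\lambda\psi=\bigvee_{i\ge0}\X_\lambda^i\psi$ (and analogously for $\U_\lambda$) and building a \emph{subset} construction: a state is a finite set $\{(q_i,\zeta_i)\}$ of active restarted copies of $\Rc_\psi$, each tagged with its own scaled deficit against the current best; copies with $\zeta_i\le-1$ are dropped, and a bookkeeping value $v$ stops spawning new copies once $\lambda^n$ falls below the accumulated maximum. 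Your parenthetical about restarted copies and ``a restart at position $i$ can overtake \ldots\ only while $\lambda^i$ exceeds it'' is exactly the right intuition for why this stays finite, but it is not obtained by closing a loop in the fixpoint---it requires the explicit subset machinery, and for $\U_\lambda$ a nested subset-of-subsets version.
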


Our construction proceeds inductively. We define the reward machine for the base case of a single atomic proposition, i.e. $\vp = p$, and then the construction for negation, the next operator, disjunction, the eventually operator (for ease of presentation), and the until operator. The ideas used in the constructions for disjunction, the eventually operator, and the until operator build off of each other, as they all involve keeping track of the maximum/minimum value over a set of subformulas. We use properties $I_1$ and $I_3$ to show correctness, and properties $I_2$ and $I_3$ to show finiteness. A summary of the construction is presented in Appendix~\ref{ssec:summary}. \\

\noindent\textbf{Atomic propositions.}
Let ${\varphi = p}$ for some $p\in\P$. The reward machine $\Rc_\varphi = (Q, \delta, r, q_0, \lambda)$ for $\varphi$ is such that $Q=\{q_0, q_1, q_2\}$ and $\delta(q,\sigma) = q$ for all $q\in \{q_1,q_2\}$ and $\sigma\in\Sigma$. The reward machine is shown in Figure~\ref{fig:atomic_rm} where edges are labelled with propositions and rewards. If $p\in\sigma$, $\delta(q_0, \sigma) = q_1$ and $r(q_0, \sigma) = 1-\lambda$. If $p\notin\sigma$, $\delta(q_0, \sigma) = q_2$ and $r(q_0, \sigma) = 0$. Finally, $r(q_1, \sigma) = 1-\lambda$ and $r(q_2, \sigma) = 0$ for all $\sigma\in\Sigma$. It is clear to see that $I_1$, $I_2$, and $I_3$ hold.\\

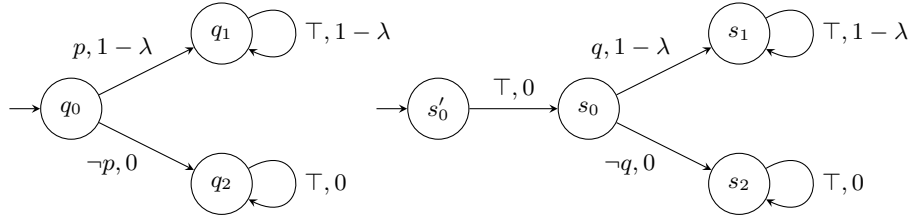
\begin{figure}[t]
    \begin{minipage}{0.4\textwidth}
        \centering
        \begin{tikzpicture}
            \node[state, initial left] (q0) {$q_0$};
            \node[state] (q1) [above right=1cm and 2cm of q0] {$q_1$};
            \node[state] (q2) [below right=1cm and 2cm of q0] {$q_2$};
            \path[->]
            (q0) edge [] node [yshift=0.5mm, xshift=2mm] {$p, 1-\lambda$} (q1)
            (q0) edge [swap] node {$\neg p, 0$} (q2)
            (q1) edge [loop right] node {$\top, 1-\lambda$} ()
            (q2) edge [loop right] node {$\top, 0$} ()
            ;
        \end{tikzpicture}
    \end{minipage}%
    \begin{minipage}{0.6\textwidth}
        \centering
        \begin{tikzpicture}
            \node[state, initial left] (q0prime) {$s_0'$};
            \node[state] (q0) [right=2cm of q0prime] {$s_0$};
            \node[state] (q1) [above right=1cm and 2cm of q0] {$s_1$};
            \node[state] (q2) [below right=1cm and 2cm of q0] {$s_2$};
            \path[->]
            (q0prime) edge [] node [] {$\top, 0$} (q0)
            (q0) edge [] node [yshift=0.5mm, xshift=2mm] {$q, 1-\lambda$} (q1)
            (q0) edge [swap] node {$\neg q, 0$} (q2)
            (q1) edge [loop right] node {$\top, 1-\lambda$} ()
            (q2) edge [loop right] node {$\top, 0$} ()
            ;
        \end{tikzpicture}
    \end{minipage}
    \caption{Reward machines for $\vp = p$ (left) and $\vp = \X_\lambda q$ (right). The transitions are labeled by the guard and reward.}
    \label{fig:atomic_rm}
\end{figure}

\noindent\textbf{Negation.}
Let $\varphi = \lnot \varphi_1$ for some LTL formula $\varphi_1$ and let  $\Rc_{\varphi_1} = (Q, \delta, r, q_0, \lambda)$ be the reward machine for $\varphi_1$. 
Notice that the reward machine for $\varphi$ can be constructed from $\Rc_{\varphi_1}$ by simply replacing every reward $c$ with $(1-\lambda) - c$ as $\sum_{i=0}^\infty\lambda^i(1-\lambda) = 1$.  Formally, $\Rc_{\varphi} = (Q, \delta, r', q_0, \lambda)$ where $r'(q, \sigma) = (1-\lambda) - r(q, \sigma)$ for all $q\in Q$ and $\sigma\in \Sigma$. Again, assuming that invariants $I_1$, $I_2$, and $I_3$ hold for $\Rc_{\varphi_1}$, it easily follows that they hold for $\Rc_{\varphi}$.\\

\noindent\textbf{Next operator.}
Let $\varphi = \mathbf{X}_{\lambda}\varphi_1$ for some $\varphi_1$ and let $\Rc_{\varphi_1} = (Q, \delta, r, q_0, \lambda)$ be the reward machine for $\varphi_1$. The reward machine for $\varphi$ can be constructed from $\Rc_{\varphi_1}$ by adding a new initial state $q_0'$ and a transition in the first step from it to the initial state of $\Rc_{\vp_1}$. 
From the next step $\Rc_{\varphi}$ simulates $\Rc_{\varphi_1}$.
This has the resulting effect of skipping the first letter, and decreasing the value by $\lambda$.
Formally, $\Rc_{\varphi} = (\{q_0'\}\sqcup Q, \delta', r', q_0', \lambda)$ where $\delta'(q_0', \sigma) = q_0$ and $\delta'(q, \sigma)=\delta(q,\sigma)$ for all $q\in Q$ and $\sigma\in \Sigma$. Similarly, $r'(q_0', \sigma) = 0$ and $r'(q,\sigma) = r(q, \sigma)$ for all $q\in Q$ and $\sigma\in\Sigma$. Assuming that invariants $I_1$, $I_2$, and $I_3$ hold for $\Rc_{\varphi_1}$, it follows that they hold for $\Rc_{\varphi}$.\\

\noindent\textbf{Disjunction.}
Let $\varphi = \varphi_1\lor\varphi_2$ for some $\vp_1,\vp_2$ and let $\Rc_{\varphi_1} = (Q_1, \delta_1, r_1, q_0^1, \lambda)$ and $\Rc_{\varphi_2} = (Q_2, \delta_2, r_2, q_0^2, \lambda)$ be the reward machines for $\varphi_1$ and $\varphi_2$, respectively.
The reward machine $\Rc_{\varphi} = (Q, \delta, r, q_0, \lambda)$ is constructed $\Rc_{\varphi_1}$ and $\Rc_{\varphi_2}$ such that  for any finite word it maintains the invariant that the discounted reward is the maximum of the reward provided by $\Rc_{\varphi_1}$ and $\Rc_{\varphi_2}$. Moreover, once it is ascertained that the reward provided by one machine cannot be overtaken by the other for any suffix, $\Rc_{\varphi}$ begins simulating the reward machine with higher reward. 

The construction involves a product construction along with a real-valued component that stores a scaled difference between the total accumulated reward for $\varphi_1$ and $\varphi_2$. In particular,
$Q = (Q_1\times Q_2\times \R)\sqcup Q_1\sqcup Q_2$ and $q_0=(q_0^1, q_0^2, 0)$. The reward deficit $\zeta$ of a state $q = (q_1,q_2,\zeta)$ denotes the difference between the total accumulated reward for $\varphi_1$ and $\varphi_2$ divided by $\lambda^{n}$ where $n$ is the total number of steps taken to reach $q$. The reward function is defined as follows.
\begin{itemize}
    \item  For $q = (q_1,q_2,\zeta)$, we let $f(q,\sigma) = r_1(q_1, \sigma) - r_2(q_2, \sigma) + \zeta$ denote the new (scaled) difference between the discounted-sum rewards accumulated by $\Rc_{\vp_1}$ and $\Rc_{\vp_2}$. The current reward depends on whether $f(q,\sigma)$ is positive (accumulated reward from $\Rc_{\vp_1}$ is higher) or negative and whether the sign is different from $\zeta$. Formally,
    \begin{align*}
        r(q, \sigma) = \begin{cases} r_1(q_1,\sigma) + \min\{0, \zeta\} &\text{if}\ f(q,\sigma) \geq 0\\
        r_2(q_2,\sigma) - \max\{0, \zeta\} &\text{if}\ f(q,\sigma) < 0
        \end{cases}
    \end{align*}
    \item For a state $q_i\in Q_i$ we have $r(q_i, \sigma) = r_i(q_i, \sigma)$.
\end{itemize}
Now we need to make sure that $\zeta$ is updated correctly. We also want the transition function to be such that the (reachable) state space is finite and the reward machine satisfies $I_1,I_2$ and $I_3$.
\begin{itemize}
    \item First, we make sure that, when the difference $\zeta$ is too large, the machine transitions to the appropriate state in $Q_1$ or $Q_2$. For a state $q = (q_1, q_2, \zeta)$ with $|\zeta| \geq 1$, we have
    \begin{align*}
        \delta(q, \sigma) = \begin{cases}
            \delta_1(q_1, \sigma) &\text{if}\ \zeta \geq 1\\
            \delta_2(q_2, \sigma) &\text{if}\ \zeta \leq -1.
        \end{cases}
    \end{align*}
    \item For states with $|\zeta|< 1$, we simply advance both the states and update $\zeta$ accordingly.
    Letting $f(q,\sigma) = r_1(q_1, \sigma)-r_2(q_2, \sigma)+\zeta$, we have that for a state $q = (q_1, q_2, \zeta)$ with $|\zeta| < 1$,
    \begin{align}\label{eq:or_trans}
        \delta(q, \sigma) &= (\delta_1(q_1, \sigma), \delta_2(q_2,\sigma), f(q,\sigma)/\lambda).
    \end{align}
    \item Finally, for $q_i\in Q_i$, $\delta(q_i,\sigma) = \delta_i(q_i, \sigma)$.
\end{itemize}

\paragraph{Finiteness.} We argue that the (reachable) state space of $\Rc_{\vp}$ is finite. Let $Q_i = \bigcup_{\ell=1}^{L_i} Q_\ell^i$ for $i\in\{1,2\}$ be the SCC decompositions of $Q_1$ and $Q_2$ that satisfy property $I_2$ for $\Rc_{\vp_1}$ and $\Rc_{\vp_2}$ respectively. Intuitively, if $\Rc_{\vp}$ stays within $Q_{\ell}^1\times Q_{m}^2\times\R$ for some $\ell\leq L_1$ and $m\leq L_2$, then the rewards from $\Rc_{\vp_1}$ and $\Rc_{\vp_2}$ are constant; this enables us to infer the reward machine ($\Rc_{\vp_1}$ and $\Rc_{\vp_2}$) with the higher total accumulated reward in a finite amount of time after which we transition to $Q_1$ or $Q_2$. Hence the set of all possible values of $\zeta$ in a reachable state $(q_1,q_2,\zeta)\in Q_{\ell}^1\times Q_{m}^2\times\R$ is finite. Formally, we show this by induction on $\ell+m$ in Lemma~\ref{lem:dis_fin}.

\paragraph{Property $I_1$.} Intuitively, it suffices to show that $\Rc_{\vp}(w) = \max\{\Rc_{\vp_1}(w),\Rc_{\vp_2}(w)\}$ for every finite word $w\in\Sigma^*$. We show this property along with the fact that for any $w\in\Sigma^*$ of length $n$, if the reward machine reaches a state $(q_1,q_2,\zeta)$, then $\zeta = (\Rc_{\varphi_1}(w) - \Rc_{\varphi_2}(w))/\lambda^{n}$. This can be proved using induction on $n$ and the full proof can be found in Appendix~\ref{sec:dis_const}.

\paragraph{Property $I_2$.} This property is true if and only if for every SCC $\C$ of $\Rc_{\varphi}$ there is a type $c\in\{0,1-\lambda\}$ such that if $\delta(q,\sigma) = q'$ for some $q,q'\in\C$ and $\sigma\in\Sigma$, we have $r(q,\sigma) = c$. From the definition of the transition function $\delta$, $\C$ cannot contain two states where one is of the form $(q_1,q_2,\zeta)\in Q_1\times Q_2\times\R$ and the other is $q_i\in Q_i$ for some $i\in\{1,2\}$. Now if $\C$ is completely contained in $Q_i$ for some $i$, we can conclude from the inductive hypothesis that the rewards within $\C$ are constant (and they are all either $0$ or $1-\lambda$). When all states of $\C$ are contained in $Q_1\times Q_2\times \R$, they must be contained in $\bar{Q}_1\times\bar{Q}_2\times\R$ where $\bar{Q}_i$ is some SCC of $\Rc_{\vp_i}$. In such a case, we can show that $|\C|=1$ and in the presence of a self loop on a state within $\C$, the reward must be either $0$ or $1-\lambda$. More details in Appendix~\ref{sec:dis_const}.

\paragraph{Property $I_3$.} We now show that all rewards are bounded between $0$ and $(1-\lambda)$. Let $q = (q_1,q_2,\zeta)$ and $f(q,\sigma) = r_1(q_1,\sigma) - r_2(q_2,\sigma) + \zeta$. We show the bound for the case when $f(q,\sigma)\geq 0$ and the other case is similar. If $\zeta\geq 0$, then $r(q,\sigma) = r_1(q_1,\sigma) \in [0,1-\lambda]$. If $\zeta < 0$, then $r(q,\sigma)\leq r_1(q_1,\sigma)\leq 1-\lambda$ and
\[           
r(q,\sigma) = r_1(q_1,\sigma) + \zeta = f(q,\sigma) + r_2(q_2,\sigma) \geq 0.
\]
This concludes the construction for $\varphi_1\lor\varphi_2$. \\

\noindent\textbf{Eventually operator.} For ease of presentation, we treat the until operator as a generalization of the \emph{eventually} operator $\F_{\lambda}$ and present it first. We have that $\vp = \F \vp_1$ for some $\vp_1$. Let $\Rc_{\varphi_1} = (Q_1, \delta_1, r_1, q_0^1, \lambda)$ be the reward machine for $\vp_1$. Let $\X_\lambda^i$ denote the operator $\X_\lambda$ applied $i$ times. We begin by noting that
\[
    \F_\lambda \vp_1 \equiv \bigvee_{i\ge 0} \X_\lambda^i \vp_1 = \vp_1 \vee \X_\lambda \vp_1 \vee \X_\lambda^2 \vp_1 \vee \ldots.
\]
The idea of the construction is to keep track of the unrolling of this formula up to the current timestep $n$,
\begin{align*}
    \F_\lambda^n \vp_1 &= \bigvee_{n \ge i\ge 0} \X_\lambda^i \vp_1 = \vp_1 \vee \X_\lambda \vp_1 \vee \X_\lambda^2 \vp_1 \vee \ldots \vee \X_\lambda^n \vp_1.
\end{align*}
For this, we will generalize the construction for disjunction. In the disjunction construction, there were states of the form $(q_1, q_2, \zeta)$ where $\zeta$ was a bookkeeping parameter that kept track of the difference between $\Rc_{\vp_1}(w)$ and $\Rc_{\vp_2}(w)$, namely, $\zeta = (\Rc_{\vp_1}(w) - \Rc_{\vp_2}(w))/\lambda^n$  where $w \in \Sigma^*$ is some finite word of length $n$. To generalize this notion to make a reward machine for $\max\{\Rc_{1}, \ldots, \Rc_{k}\}$, we will have states of the form $\{(q_1, \zeta_1), \ldots, (q_n, \zeta_n)\}$ where $\zeta_i = (\Rc_{i}(w) - \max_{j}\Rc_{j}(w))/\lambda^n$. When $\zeta_i \le -1$ then $\Rc_{i}(w) + \lambda^n \le \max_{j}\Rc_{j}(w)$ and we know that the associated reward machine $\Rc_{i}$ cannot be the maximum, so we drop it from our set. 
We also note that the value of $\X_\lambda^i \vp_1$ can be determined by simply waiting $i$ steps before starting the reward machine $\Rc_{\varphi_1}$, i.e. $\lambda^i \Rc_{\vp_1} (\rho_{i:\infty}) = \Rc_{\X_\lambda^i \vp_1} (\rho)$. This allows us to perform a subset construction for this operator. 

For a finite word $w=\sigma_0\sigma_1\ldots\sigma_n \in \Sigma^*$ and a nonnegative integer $k$, let $w_{k:\infty}$ denote the subword $\sigma_k\ldots\sigma_n$ which equals the empty word $\epsilon$ if $k {>} n$. We use the notation $\semantics{\X_\lambda^k \vp_1, w} = \lambda^k \Rc_{\vp_1}(w_{k:\infty})$ and define $\semantics{\F_\lambda^k \vp_1, w} = \max_{k \ge i \ge 0} \semantics{\X_\lambda^k \vp_1, w}$ which represents the maximum value accumulated by the reward machine of some formula of the form $\X_\lambda^i\vp_1$ with $i\leq k$ on a finite word $w$. The reward machine for $\F_\lambda\vp_1$ will consist of states of the form $(v, S)$, containing a value $v$ for bookkeeping and a set $S$ that keeps track of the states of all $\Rc_{\X_\lambda^i\vp_1}$ that may still obtain the maximum given a finite prefix $w$ of length $n$, i.e. reward machine states of all subformulas $\X_\lambda^i \vp_1$ for $n \ge i \ge 0$ that satisfy $\semantics{\X_\lambda^i \vp_1, w} + \lambda^n > \semantics{\F_\lambda^n \vp_1, w}$ since $\lambda^n$ is the maximum additional reward obtainable by any $\rho\in\Sigma^\omega$ with prefix $w$. The subset $S$ consists of elements of the form $(q_i, \zeta_i) \in S$ where $q_i = \delta_1(q_0^1, w_{i:\infty})$ and $\zeta_i = (\semantics{\X_\lambda^i \vp_1, w} - \semantics{\F_\lambda^n \vp_1, w})/\lambda^n$ corresponding to each subformula $\X_\lambda^i \vp_1$.
The value $v = \max\{-1, -\semantics{\F_\lambda^n \vp_1, w}/\lambda^n\}$ is a bookkeeping parameter used to initialize new elements in the set $S$ and to stop adding elements to $S$ when $v \le -1$. 
We now present the construction formally.

We form a reward machine $\Rc_\vp = (Q, \delta, r, q_0, \lambda)$ where $Q = \mathbb{R} \times 2^{Q_1 \times \mathbb{R}}$ and $q_0 = (0, \{(q_0^1, 0)\})$.       
We define a few functions that ease defining our transition function. Let $f(\zeta, q, \sigma) = r_1(q, \sigma) + \zeta$ and $m(S, \sigma) = \max\limits_{(q_i, \zeta_i) \in S} f(\zeta_i, q_i, \sigma)$. For the subset construction, we define
\[
\Delta(S, \sigma) = \bigcup_{(q, \zeta) \in S} \{(\delta_1(q, \sigma), \zeta') : \zeta' = \big((f(\zeta, q, \sigma) - m(S,\sigma))/\lambda \big)> -1\}
\]
The transition function is
\[
\delta((v, S), \sigma) = 
    \begin{cases}
        \Big(v'(S,v,\sigma),\ \Delta(S, \sigma) \sqcup \big(q_0^1, v'(S,v,\sigma)\big)\Big) & \text{if\ } v'(S,v,\sigma) > -1 \\
        (-1,\ \Delta(S, \sigma)) & \text{if\ } v'(S,v,\sigma) \le -1
    \end{cases}
\]
where $v'(S,v,\sigma) = (v - m(S,\sigma))/\lambda$. The reward function is $r((v, S), \sigma) = m(S, \sigma)$.

We now argue that $\Rc_{\vp}$ satisfies properties $I_1$, $I_2$ and $I_3$ and the set of reachable states in $\Rc_{\vp}$ is finite assuming $\Rc_{\vp_1}$ satisfies $I_1$, $I_2$ and $I_3$. Details are in Appendix~\ref{sec:ev_const}.

\paragraph{Finiteness.}
Consider states of the form $(v, S) \in Q$. If $v = 0$, then it must be that $\zeta_i = 0$ for all $(q_i, \zeta_i) \in S$ since receiving a non-zero reward causes the value of $v$ to become negative. There are only finitely many such states. If $-1 < v < 0$, then we will reach a state $(v', S') \in Q$ with $v' = -1$ in at most $n$ steps, where $n$ is such that $v/\lambda^n \le -1$. Therefore, the number of reachable states $(v,S)$ with $-1<v<0$ is also finite. Also, the number of states of the form $(-1,S)$ that can be initially reached (via paths consisting only of states of the form $(v,S')$ with $v>-1$) is finite. Furthermore, upon reaching such a state $(-1,S)$, the reward machine is similar to that of a disjunction (maximum) of $|S|$ reward machines. From this we can conclude that the full reachable state space is finite. See Appendix~\ref{sec:ev_const} for a full proof.

\paragraph{Property $I_1$.}
The transition function is designed so that the following holds true: for any finite word $w \in \Sigma^*$ of length $n$ and letter $\sigma \in \Sigma$, if $\delta(q_0, w) = (v, S)$, then $m(S, \sigma) = (\semantics{\F_\lambda^{n+1} \vp_1, w\sigma} - \semantics{\F_\lambda^n \vp_1, w})/\lambda^n$. Since $r((v, S), \sigma) = m(S, \sigma)$, we get that $\Rc_\vp(w) = \semantics{\F_\lambda^n \vp_1, w}$. Thus, $\Rc_\vp(\rho) = \semantics{\F_\lambda \vp_1, \rho}$ for any infinite word $\rho \in \Sigma^\omega$. This property for $m(S, \sigma)$ follows from the preservation of all the properties outlined in the above description of the construction.

\paragraph{Property $I_2$.}
Consider an SCC $\C$ in $\Rc_\vp$ such that $(v, S) = \delta((v,S), w)$ for some $(v, S) \in \C$ and $w \in \Sigma^*$ of length $n > 0$. Note that if $-1 < v < 0$, then $(v', S') = \delta((v, S), w)$ is such that $v' < v$. Thus, it must be that $v = 0$ or $v = -1$. If $v = 0$, then all the reward must be zero, since any nonzero rewards result in $v < 0$. If $v = -1$, then it must be that for any $(q_i, \zeta_i) \in S$, $q_i$ is in an SCC $\C_1^i$ in $\Rc_{\vp_1}$ with some reward type $c_i \in \{0, 1-\lambda\}$. For all $\zeta_i$ to remain fixed (which is necessary as otherwise some $\zeta_i$ strictly increases or decreases), it must be that all $c_i$ are the same, say $c$. Thus, the reward type in $\Rc_{\vp_1}$ for SCC $\C$ equals $c$.

\paragraph{Property $I_3$.}
We can show that for any finite word $w \in \Sigma^*$ of length $n$ and letter $\sigma \in \Sigma$, if $\delta(q_0, w) = (v, S)$, then the reward is $r((v, S), \sigma) = m(S, \sigma) = (\semantics{\F_\lambda^{n+1} \vp_1, w\sigma} - \semantics{\F_\lambda^n \vp_1, w})/\lambda^n$ using induction on $n$. Since property $I_3$ holds for $\Rc_{\vp_1}$, we have that $0 \le (\semantics{\F_\lambda^{n+1} \vp_1, w\sigma} - \semantics{\F_\lambda^n \vp_1, w}) \le (1-\lambda)\lambda^n$.\\

\noindent\textbf{Until operator.} We now present the until operator, generalizing the ideas presented for the eventually operator. We have that $\vp = \vp_1 \U_\lambda \vp_2$ for some $\vp_1$ and $\vp_2$. Let $\Rc_{\vp_1} = (Q_1, \delta_1, r_1, q_0^1, \lambda)$ and $\Rc_{\vp_2} = (Q_2, \delta_2, r_2, q_0^2, \lambda)$. 
Note that
\begin{align*}
    \vp_1 \U_\lambda \vp_2 &= \bigvee_{i\ge0} ( \X_\lambda^i \vp_2 \wedge \vp_1 \wedge \X_\lambda \vp_1 \wedge \ldots \wedge \X_\lambda^{i-1} \vp_1) \\
    &= \vp_2 \vee (\X_\lambda \vp_2 \wedge \vp_1) \vee (\X_\lambda^2 \vp_2 \wedge \vp_1 \wedge \X_\lambda \vp_1) \vee \ldots .
\end{align*} 
The goal of the construction is to keep track of the unrolling of this formula up to the current timestep $n$, 
\begin{align*}
    \vp_1 \U_\lambda^n \vp_2 &= \bigvee_{n\ge i\ge0} (\X_\lambda^i \vp_2 \wedge \vp_1 \wedge \X_\lambda \vp_1 \wedge \ldots \wedge \X_\lambda^{i-1} \vp_1)
    = \bigvee_{n\ge i\ge0} \psi_i .
\end{align*}
Each $\psi_i$ requires a subset construction in the style of the eventually operator construction to maintain the minimum. We then nest another subset construction in the style of the eventually operator construction to maintain the maximum over $\psi_i$. 
For a finite word $w \in \Sigma^*$, we use the notation $\semantics{\psi_i, w}$ and $\semantics{\vp_1 \U_\lambda^{k} \vp_2, w}$ for the value accumulated by reward machine corresponding to these formula on the word $w$, i.e. $\semantics{\psi_i, w} = \min\{\semantics{\X_\lambda^i \vp_2}, \min_{i > j \ge 0} \{ \semantics{\X_\lambda^j \vp_1, w} \}$ and $\semantics{\vp_1 \U_\lambda^{k} \vp_2, w} = \max_{k \ge i \ge 0} \semantics{\psi_i, w}$.

Let $\mathcal{S} = 2^{(Q_1 \sqcup Q_2) \times \mathbb{R}}$ be the set of subsets containing $(q, \zeta)$ pairs, where $q$ may be from either $Q_1$ or $Q_2$.
The reward machine consists of states of the form $(v, I, \mathcal{X})$ 
where the value $v \in \mathbb{R}$ and the subset $I \in \mathcal{S}$ are for bookkeeping, and $\mathcal{X} \in 2^{\mathcal{S}}$ is a subset of subsets for each $\psi_i$. Specifically, each element of $\mathcal{X}$ is a subset $S$ corresponding to a particular $\psi_i$ which may still obtain the maximum, i.e. $\semantics{\psi_i, w} + \lambda^n > \semantics{\vp_1 \U_\lambda^{n} \vp_2, w}$. Each element of $S$ is of the form $(q, \zeta)$. We have that $q \in Q_2$ for at most one element where $q = \delta_2(q_0^2, w_{k:\infty})$ and $\zeta = (\semantics{\X_\lambda^k \vp_2, w} - \semantics{\vp_1 \U_\lambda^{n} \vp_2, w})/\lambda^n$. For the other elements of $S$, we have that $q \in Q_1$ with $q = \delta_1(q_0^1, w_{k:\infty})$ and $\zeta = (\semantics{\X_\lambda^k \vp_1, w} - \semantics{\vp_1 \U_\lambda^{n} \vp_2, w})/\lambda^n$. If for any of these elements, the value of its corresponding formula becomes too large to be the minimum for the conjunction forming $\psi_i$, i.e. $\semantics{\psi_i, w} + \lambda^n \le \semantics{\vp_1 \U_\lambda^{n} \vp_2, w} + \lambda^n \le \semantics{\X_\lambda^k \vp_t, w}$ which occurs when $\zeta \ge 1$, that element is dropped from $S$. In order to update $\mathcal{X}$, we add a new $S$ corresponding to $\psi_n$ on the next timestep. The value $v = \max\{-1, \semantics{\vp_1 \U_\lambda^{n} \vp_2, w}\}$ is a bookkeeping parameter for initializing new elements in the subsets and for stopping the addition of new elements when $v \le -1$. The subset $I$ is a bookkeeping parameter that keeps track of the subset construction for $\bigwedge_{n > i \ge 0} \X_\lambda^i \vp_1$, which is used to initialize the addition of a subset corresponding to $\psi_n = \X_\lambda^n \vp_2 \wedge (\bigwedge_{n > i \ge 0} \X_\lambda^i \vp_1)$.
We now define the reward machine formally.

We define a few functions that ease defining our transition function. 
We define $\delta_*(q, \sigma) = \delta_i(q, \sigma)$ and $f_*(\zeta, q, \sigma)= r_i(q, \sigma) + \zeta$ if $q \in Q_i$ for $i \in \{1, 2\}$.
We also define $n(S, \sigma) = \min_{(q_i, \zeta_i) \in S} f_*(\zeta_i, q_i, \sigma)$ and $m(\mathcal{X}, \sigma) = \max_{S \in \mathcal{X}} n(S, \sigma)$. For the subset construction, we define
\[
\Delta(S, \sigma, m) = \bigcup_{(q, \zeta) \in S} \{(\delta_*(q, \sigma), \zeta') :  \zeta' < 1\}
\]
where $\zeta' = (f_*(\zeta, q, \sigma) - m)/\lambda$ and
\[
T(\mathcal{X}, \sigma, m) = \bigcup_{ S \in \mathcal{X}}\{\Delta(S, \sigma,m ) :  n(S, \sigma) > -1\} .
\]

We form a reward machine $\Rc_\vp = (Q, \delta, r, q_0, \lambda)$ where $Q = \mathbb{R} \times \mathcal{S} \times 2^\mathcal{S}$ and $q_0 = (0, \emptyset, \{\{(q_0^2, 0)\}\})$. The transition function is
\[
\delta((v, I, \mathcal{X}), \sigma) = 
    \begin{cases}
        \Big(v',\ I',\ T(\mathcal{X}, \sigma, m) \sqcup \big(I' \sqcup (q_0^2, v')\big)\Big) & \text{if\ } v' > -1 \\
        (-1,\ \emptyset,\ T(\mathcal{X}, \sigma, m)) & \text{if\ } v' \le -1
    \end{cases}
\]
where $m = m(\mathcal{X},\sigma)$, $v' = (v - m)/\lambda$, and $I' = \Delta(I \sqcup (q_0^1, v'), \sigma, m)$. The reward function is $r((v, I, \mathcal{X}), \sigma) = m(\mathcal{X}, \sigma)$. 

We now show a sketch of correctness, which mimics the proof for the eventually operator closely.

\paragraph{Finiteness.} 
Consider states of the form $(v, I, \mathcal{X}) \in Q$. If $v = 0$, then for all $S \in \mathcal{X}$ and $(q_i, \zeta_i) \in S$ it must be that $\zeta_i = 0$ since receiving a non-zero reward causes the value of $v$ to become negative. Similarly, all $\zeta_i = 0$ for $(q_i, \zeta_i) \in I$ when $v = 0$. There are only finitely many such states. If $-1 < v < 0$, then we will reach a state $(v', I', \mathcal{X}') \in Q$ with $v' = -1$ in at most $n$ steps, where $n$ is such that $v/\lambda^n \le -1$. Therefore, the number of reachable states $-1<v<0$ is also finite. Additionally, the number of states where $v = -1$ that can be initially reached 
is finite. Upon reaching such a state $(-1,\emptyset,\mathcal{X}')$,
the reward machine is similar to that of the finite disjunction of reward machines for finite conjunctions. 

\paragraph{Property $I_1$.} The transition function is designed so that the following holds true: for any finite word $w \in \Sigma^*$ of length $n$ and letter $\sigma \in \Sigma$, if $\delta(q_0, w) = (v, I, \mathcal{X})$, then $m(\mathcal{X}, \sigma) = (\semantics{\vp_1 \U_\lambda^{n+1} \vp_2, w\sigma} - \semantics{\vp_1 \U_\lambda^n \vp_2, w})/\lambda^n$. Since $r((v, I, \mathcal{X}), \sigma) = m(\mathcal{X}, \sigma)$, we get that $\Rc_\vp(w) = \semantics{\vp_1 \U_\lambda^n \vp_2, w}$. Thus, $\Rc_\vp(\rho) = \semantics{\vp_1 \U_\lambda \vp_2, \rho}$ for any infinite word $\rho \in \Sigma^\omega$. This property for $m(\mathcal{X},\sigma)$ follows from the properties outlined in the construction, which can be shown inductively. See Lemma~\ref{lem:u_const}.

\paragraph{Property $I_2$.} Consider an SCC $\C$ of $\Rc_{\vp}$ and a state $(v, I, \mathcal{X}) \in \C$. If $v = 0$, then we must receive zero reward because non-zero reward causes the value of $v$ to become negative. It cannot be that $-1 < v < 0$ since if $v < 0$, we reach a state $(v', I', \mathcal{X}') \in Q$ with $v' = -1$ in at most $n$ steps, where $n$ is such that $v/\lambda^n \le -1$. If $v = -1$, then we have a state of the form $(-1, \emptyset, \mathcal{X})$.
For this to be an SCC, all elements of the form $(q_k, \zeta_k) \in S$ for $S \in \mathcal{X}$ must be such that $q_k$ is in an SCC of its respective reward machine (either $\Rc_{\vp_1}$ or $\Rc_{\vp_2}$) with reward type $t_k \in \{0, 1-\lambda\}$. Additionally, there cannot be a $t_k' \neq t_k$ otherwise there would be a $\zeta_k$ that changes following a cycle in the SCC $\C$. Thus, the reward for this SCC $\C$ is $t_k$.

\paragraph{Property $I_3$.} This property can be shown by recalling the property above that $r((v, I, \mathcal{X}), \sigma) = m(\mathcal{X}, \sigma) = (\semantics{\vp_1 \U_\lambda^{n+1} \vp_2, w\sigma} - \semantics{\vp_1 \U_\lambda^n \vp_2, w})/\lambda^n$.

\section{Conclusion}\label{sec:conclusions}
This paper studied policy synthesis for discounted LTL in MDPs with unknown transition probabilities. Unlike LTL, discounted LTL provides an insensitivity to small perturbations of the transitions probabilities which enables PAC learning without additional assumptions. We outlined a PAC learning algorithm for discounted LTL that uses finite memory. We showed that optimal strategies for discounted LTL require infinite memory in general due to the need to balance the values of multiple competing objectives. To avoid this infinite memory, we examined the case of uniformly discounted LTL, where the discount factors for all temporal operators are identical. We showed how to translate uniformly discounted LTL formula to finite state reward machines. 
This construction shows that finite memory is sufficient, and provides an avenue to use discounted reward algorithms, such as reinforcement learning, for computing optimal policies for uniformly discounted LTL formulas.

\bibliographystyle{splncs04}
\bibliography{papers}
\newpage
\appendix
\section{Reward Machine Construction}

This appendix contains details regarding the inductive construction of reward machines from uniformly discounted LTL specifications that were omitted in the main paper. 
The construction for propositions, negation, and the $\X_\lambda$ operator are straightforward and is presented in detail in the main paper.
We begin by summarizing the construction, followed by details for disjunction, and the eventually and until operators.

\subsection{Summary of Reward Machine Construction}
\label{ssec:summary}
\noindent\textbf{Atomic propositions.} Let ${\varphi = p}$ for some $p\in\P$. The reward machine $\Rc_\varphi = (Q, \delta, r, q_0, \lambda)$ for $\varphi$ is such that
\begin{itemize}
    \itemsep1em
    \item $Q=\{q_0, q_1, q_2\}$
    \item $\delta(q,\sigma) = 
    \begin{cases}
        q_1 & \text{if\ } q = q_1 \lor (q = q_0 \land p \in \sigma) \\
        q_2 & \text{if\ } q = q_2 \lor (q = q_0 \land p \notin \sigma) \\
    \end{cases}$
    \item $r(q, \sigma) =
    \begin{cases}
        1 - \lambda & \text{if\ } q = q_1 \lor (q = q_0 \land p \in \sigma) \\
        0 & \text{if\ } q = q_2 \lor (q = q_0 \land p \notin \sigma) \\
    \end{cases}$ .
\end{itemize} 

\noindent\textbf{Negation.} Let $\varphi = \lnot \varphi_1$ for some LTL formula $\varphi_1$ and let $\Rc_{\varphi_1} = (Q, \delta, r, q_0, \lambda)$ be the reward machine for $\varphi_1$. Then, the reward machine $\Rc_{\varphi} = (Q, \delta, r', q_0, \lambda)$ for $\varphi$ is such that
\begin{itemize}
    \itemsep1em
    \item $r'(q, \sigma) = (1-\lambda) - r(q, \sigma)$ .
\end{itemize}

\noindent\textbf{Next operator.} Let $\varphi = \mathbf{X}_{\lambda}\varphi_1$ for some $\varphi_1$ and let $\Rc_{\varphi_1} = (Q, \delta, r, q_0, \lambda)$ be the reward machine for $\varphi_1$. Then, the reward machine $\Rc_{\varphi} = (Q', \delta', r', q_0', \lambda)$ for $\varphi$ is such that
\begin{itemize}
    \itemsep1em
    \item $Q' = \{q_0'\}\sqcup Q$
    \item $\delta'(q, \sigma) = 
    \begin{cases}
        q_0 & \text{if\ } q = q_0' \\
        \delta(q, \sigma) & \text{if\ } q \in Q
    \end{cases}$
    \item $r'(q, \sigma) = 
    \begin{cases}
        0 & \text{if\ } q = q_0' \\
        r(q, \sigma) & \text{if\ } q \in Q
    \end{cases}$ 
    \item $q_0'$ is a new state where $q_0' \notin Q$ .
\end{itemize}

\noindent\textbf{Disjunction.} Let $\varphi = \varphi_1\lor\varphi_2$ for some $\vp_1$ and $\vp_2$, and let $\Rc_{\varphi_1} = (Q_1, \delta_1, r_1, q_0^1, \lambda)$ and $\Rc_{\varphi_2} = (Q_2, \delta_2, r_2, q_0^2, \lambda)$ be the reward machines for $\varphi_1$ and $\varphi_2$, respectively. Then, the reward machine $\Rc_{\varphi} = (Q, \delta, r, q_0, \lambda)$ for $\varphi$ is such that
\begin{itemize}
    \itemsep1em
    \item $Q = (Q_1\times Q_2\times \R)\sqcup Q_1\sqcup Q_2$
    \item $\delta(q, \sigma) = 
    \begin{cases}
        \delta_1(q_1, \sigma) & \text{if\ } q = (q_1, q_2, \zeta) \land \zeta \geq 1 \\
        \delta_2(q_2, \sigma) & \text{if\ } q = (q_1, q_2, \zeta) \land \zeta \leq -1 \\
        (\delta_1(q_1, \sigma), \delta_2(q_2,\sigma), f(q,\sigma)/\lambda) & \text{if\ } q = (q_1, q_2, \zeta) \land \left|\zeta\right| < 1 \\
        \delta_1(q, \sigma) & \text{if\ } q \in Q_1 \\
        \delta_2(q, \sigma) & \text{if\ } q \in Q_2
    \end{cases}$
    \item $r(q, \sigma) = 
    \begin{cases}
        r_1(q_1,\sigma) + \min\{0, \zeta\} & \text{if\ } q = (q_1, q_2, \zeta) \land f(q,\sigma) \geq 0 \\
        r_2(q_2,\sigma) - \max\{0, \zeta\} & \text{if\ } q = (q_1, q_2, \zeta) \land f(q,\sigma) < 0 \\
        r_1(q, \sigma) & \text{if\ } q \in Q_1 \\
        r_2(q, \sigma) & \text{if\ } q \in Q_2
    \end{cases}$
    \item $q_0=(q_0^1, q_0^2, 0)$
\end{itemize}
where we have defined the following function to assist with the definition
\begin{itemize}
    \itemsep1em
    \item $f(q,\sigma) = r_1(q_1, \sigma) - r_2(q_2, \sigma) + \zeta$ 
    \begin{itemize}[label=]
        \item where $q = (q_1, q_2, \zeta)$ .
    \end{itemize}
\end{itemize}

\noindent\textbf{Eventually operator.} Let $\vp = \F \vp_1$ for some $\vp_1$, and let $\Rc_{\varphi_1} = (Q_1, \delta_1, r_1, q_0^1, \lambda)$ be the reward machine for $\vp_1$. Then, the reward machine $\Rc_\vp = (Q, \delta, r, q_0, \lambda)$ for $\vp$ is such that
\begin{itemize}
    \itemsep1em
    \item $Q = \mathbb{R} \times 2^{Q_1 \times \mathbb{R}}$
    \item $\delta((v, S), \sigma) = 
    \begin{cases}
        \Big(v'(S,v,\sigma),\ \Delta(S, \sigma) \sqcup \big(q_0^1, v'(S,v,\sigma)\big)\Big) & \text{if\ } v'(S,v,\sigma) > -1 \\
        (-1,\ \Delta(S, \sigma)) & \text{if\ } v'(S,v,\sigma) \le -1
    \end{cases}$
    
    \begin{itemize}[label=]
        \item where $v'(S,v,\sigma) = (v - m(S,\sigma))/\lambda$
    \end{itemize}
    \item $r((v, S), \sigma) = m(S, \sigma)$
    \item $q_0 = (0, \{(q_0^1, 0)\})$
\end{itemize}
where we have defined the following functions to assist with the definition
\begin{itemize}
    \itemsep1em
    \item $f(\zeta, q, \sigma) = r_1(q, \sigma) + \zeta$
    \item $m(S, \sigma) = \max\limits_{(q_i, \zeta_i) \in S} f(\zeta_i, q_i, \sigma)$
    \item $\Delta(S, \sigma) = \bigcup_{(q, \zeta) \in S} \{(\delta_1(q, \sigma), \zeta') : \zeta' = \big((f(\zeta, q, \sigma) - m(S,\sigma))/\lambda \big)> -1\}$ .
\end{itemize}

\noindent\textbf{Until operator.} Let $\vp = \vp_1 \U_\lambda \vp_2$ for some $\vp_1$ and $\vp_2$, and let $\Rc_{\vp_1} = (Q_1, \delta_1, r_1, q_0^1, \lambda)$ and $\Rc_{\vp_2} = (Q_2, \delta_2, r_2, q_0^2, \lambda)$ be the reward machines for $\varphi_1$ and $\varphi_2$, respectively. Then, the reward machine $\Rc_\vp = (Q, \delta, r, q_0, \lambda)$ for $\vp$ is such that
\begin{itemize}
    \itemsep1em
    \item $Q = \mathbb{R} \times \mathcal{S} \times 2^\mathcal{S}$
    \item $\delta((v, I, \mathcal{X}), \sigma) = 
    \begin{cases}
        \Big(v',\ I',\ T(\mathcal{X}, \sigma, m) \sqcup \big(I' \sqcup (q_0^2, v')\big)\Big) & \text{if\ } v' > -1 \\
        (-1,\ \emptyset,\ T(\mathcal{X}, \sigma, m)) & \text{if\ } v' \le -1
    \end{cases}$
    
    \begin{itemize}[label=]
        \item where $m = m(\mathcal{X},\sigma)$, $v' = (v - m)/\lambda$, and $I' = \Delta(I \sqcup (q_0^1, v'), \sigma, m)$
    \end{itemize}
    \item $r((v, I, \mathcal{X}), \sigma) = m(\mathcal{X}, \sigma)$
    \item $q_0 = (0, \emptyset, \{\{(q_0^2, 0)\}\})$
\end{itemize}
where we have defined the following functions to assist with the definition
\begin{itemize}
    \itemsep1em
    \item $\mathcal{S} = 2^{(Q_1 \sqcup Q_2) \times \mathbb{R}}$
    \item $\delta_*(q, \sigma) = 
    \begin{cases}
        \delta_1(q, \sigma) & \text{if\ } q \in Q_1 \\
        \delta_2(q, \sigma) & \text{if\ } q \in Q_2
    \end{cases}$
    \item $f_*(\zeta, q, \sigma) = 
    \begin{cases}
        r_1(q, \sigma) + \zeta & \text{if\ } q \in Q_1 \\
        r_2(q, \sigma) + \zeta & \text{if\ } q \in Q_2
    \end{cases}$
    \item $n(S, \sigma) = \min_{(q_i, \zeta_i) \in S} f_*(\zeta_i, q_i, \sigma)$
    \item $m(\mathcal{X}, \sigma) = \max_{S \in \mathcal{X}} n(S, \sigma)$
    \item $\Delta(S, \sigma, m) = \bigcup_{(q, \zeta) \in S} \{(\delta_*(q, \sigma), \zeta') :  \zeta' < 1\}$
    
    \begin{itemize}[label=]
        \item where $\zeta' = (f_*(\zeta, q, \sigma) - m)/\lambda$
    \end{itemize}
    \item $T(\mathcal{X}, \sigma, m) = \bigcup_{ S \in \mathcal{X}}\{\Delta(S, \sigma,m ) :  n(S, \sigma) > -1\}$ .
\end{itemize}

\subsection{Construction for Disjunction}\label{sec:dis_const}

We provide the omitted proofs for the reward machine construction for the $\lor$ operator. An example of a reward machine constructed using our algorithm is shown in Figure~\ref{fig:disjunction}. First, we show finiteness of the state space.

\subsubsection{Finiteness.}  Let $Q_i = \bigcup_{\ell=1}^{L_i} Q_\ell^i$ for $i\in\{1,2\}$ be the SCC decompositions of $Q_1$ and $Q_2$ that satisfy property $I_2$ for $\Rc_{\vp_1}$ and $\Rc_{\vp_2}$ respectively. For $\ell\in[L_1]$ and $m\in[L_2]$, define $$\Gamma(\ell, m) = \{\zeta\mid (q_1,q_2,\zeta) \ \text{is reachable}\ \text{from}\ \text{initial}\ \text{state}\  \&\  q_1\in Q_\ell^1\  \&\  q_2\in Q_{m}^2\}.$$ The following lemma shows that $\Gamma(\ell, m)$ is finite for all $\ell$ and $m$.
\begin{lemma}\label{lem:dis_fin}
For all $\ell\in[L_1]$ and $m\in[L_2]$, $|\Gamma(\ell, m)| < \infty$.
\end{lemma}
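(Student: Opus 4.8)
The plan is to argue by induction on $\ell + m$ (taking $\ell$ and $m$ ordered so that larger indices correspond to SCCs reachable only after earlier ones, as guaranteed by property $I_2$(a)). The key structural fact is that once the product component is in $Q_\ell^1 \times Q_m^2$, the rewards emitted by $\Rc_{\vp_1}$ and $\Rc_{\vp_2}$ are \emph{constant}, equal to $\chi_1(\ell)$ and $\chi_2(m)$ respectively, as long as the run stays within that SCC pair. Hence while the run remains in $Q_\ell^1 \times Q_m^2 \times \R$, the update rule $\zeta \mapsto f(q,\sigma)/\lambda = (\chi_1(\ell) - \chi_2(m) + \zeta)/\lambda$ is an affine map with slope $1/\lambda > 1$ and fixed point $\zeta^* = (\chi_1(\ell) - \chi_2(m))/(1-\lambda)$. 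So the distance $|\zeta - \zeta^*|$ strictly increases by a factor $1/\lambda$ at every step that stays inside the SCC pair; since the machine leaves the $\R$-component (transitioning into $Q_1$ or $Q_2$) as soon as $|\zeta| \ge 1$, the run can stay within $Q_\ell^1 \times Q_m^2 \times \R$ for only a bounded number $N_{\ell,m}$ of consecutive steps before either (i) $|\zeta|$ reaches $1$ and we exit to $Q_1 \sqcup Q_2$, or (ii) the product component moves to some $Q_{\ell'}^1 \times Q_{m'}^2$ with $\ell' + m' > \ell + m$.

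From this I would assemble the counting argument. Every element of $\Gamma(\ell,m)$ corresponds to a finite run from $q_0$ reaching a state $(q_1,q_2,\zeta)$ with $q_1 \in Q_\ell^1$, $q_2 \in Q_m^2$; decompose such a run into maximal blocks, each block spent within a single SCC-pair $Q_{\ell_k}^1 \times Q_{m_k}^2$. Property $I_2$(a) applied to both $\Rc_{\vp_1}$ and $\Rc_{\vp_2}$ forces the sequence $(\ell_0,m_0), (\ell_1,m_1), \ldots$ to be componentwise nondecreasing, hence strictly increasing in $\ell_k + m_k$, so there are at most $L_1 + L_2$ blocks. The value of $\zeta$ when entering the $(k{+}1)$-st block is a deterministic affine function of the value of $\zeta$ upon entering the $k$-th block and the (bounded) number of steps spent in the $k$-th block — the "letters" $\sigma$ read within a block do not affect $\zeta$ because the reward is constant there, only the block length matters. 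Since each block has length at most $N_{\ell_k,m_k} \le N := \max_{\ell,m} N_{\ell,m}$, and $\zeta$ is reset to $0$ at the start, the set of reachable $\zeta$-values entering any block is contained in a finite set obtained by composing at most $L_1 + L_2$ affine maps chosen from a finite family (one per block length $0,1,\ldots,N$ per SCC-pair). Therefore $\Gamma(\ell,m)$ is finite; one could phrase the induction as: $\bigcup_{\ell+m \le d} \Gamma(\ell,m)$ is finite for each $d$, with the inductive step bounding the $\zeta$-values entering SCC-pairs at level $d$ in terms of those exiting level $d-1$.

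I expect the main obstacle to be the bookkeeping around \emph{re-entry}: a priori the product could leave a given SCC-pair into $Q_1$ (say) and that branch has no $\R$-component, so it never returns to a $(q_1,q_2,\zeta)$ state — this is actually what makes the argument work, and the subtlety is just to observe that once $\delta$ sends the state into $Q_1 \sqcup Q_2$ it stays there forever (the transition function on $Q_1$ and $Q_2$ never produces a triple), so every reachable triple $(q_1,q_2,\zeta)$ lies on an initial prefix that has never left $Q_1 \times Q_2 \times \R$. Granting that, the only real content is the affine-expansion bound on block lengths and the finite-composition argument; I would want to state the per-block bound $N_{\ell,m}$ cleanly — e.g. the smallest $N$ with $\lambda^{-N}(1 - |\zeta^*|) > 1 + |\zeta^*|$ works uniformly — and then the finiteness of $\Gamma(\ell,m)$ follows since it is the image of a finite set (bounded block-length sequences of bounded length) under a fixed finite collection of affine maps.
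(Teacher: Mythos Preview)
Your approach is essentially the paper's: induct on $\ell + m$, use the constant-reward property within an SCC pair (property $I_2$), and exploit that the $\zeta$-update is an expanding affine map. However, there is one case you miss. When $\chi_1(\ell) = \chi_2(m)$ the fixed point of your affine map is $\zeta^* = 0$, and if the run enters the block with $\zeta$ exactly equal to $0$ --- which does happen, since the initial state has $\zeta = 0$ and more generally the finite set of entry values into the $(\ell,m)$-pair can contain $0$ --- then $\zeta$ stays at $0$ for every subsequent step inside the pair. In that situation the distance $|\zeta - \zeta^*|$ is not strictly increasing (it is identically $0$), so no per-block bound $N_{\ell,m}$ exists: the run can cycle within $Q_\ell^1 \times Q_m^2$ indefinitely without ever triggering $|\zeta| \ge 1$ or moving to a new SCC pair. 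Your ``bounded block-length sequences'' counting argument therefore breaks as stated.

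The fix is exactly what the paper does: split off this fixed-point case explicitly. When the entry value is $\zeta_0 = 0$ and the two types agree, the entire block contributes only the single value $0$ to $\Gamma(\ell,m)$, regardless of length. For every other entry value $\zeta_0$ (and there are only finitely many, by the inductive hypothesis applied to the strictly earlier SCC pairs from which the block is entered), your expansion argument does give a bound on block length --- depending on the particular entry value rather than a uniform $N_{\ell,m}$ --- and that suffices. As a side note, the fixed point of $\zeta \mapsto (c+\zeta)/\lambda$ is $-c/(1-\lambda)$, not $c/(1-\lambda)$; with the correct sign, $|\zeta^*| = 1$ when the types differ, and your suggested formula $\lambda^{-N}(1-|\zeta^*|) > 1 + |\zeta^*|$ then reads $0 > 2$.
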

\begin{proof}
    Proof by induction on $\ell + m$. The base case follows from the argument for the general inductive case. Assume that the lemma holds for all $\ell$ and $m$ with $\ell+m\leq k$. Suppose we have $\ell$ and $m$ such that $\ell+m = k+1$. For any $q_1\in Q_\ell^1$, $q_2\in Q_m^2$ and $\zeta\in\R$, $(q_1,q_2,\zeta)$ is reachable if and only if it can be reached in a single step from $(q_1', q_2', \zeta')$ where $q_1'\in Q_{\ell'}^1$, $q_2'\in Q_{m'}^2$ and $\zeta'\in \Gamma(\ell',m')$ for some $\ell'\leq \ell$ and $m'\leq m$.
    \begin{itemize}
        \item If $\ell' < \ell$ or $m' < m$, IH ensures that the possible values of $\zeta$ is finite. Let $\Gamma_0(\ell,m)\subseteq\Gamma(\ell, m)$ denote the possible values of $\zeta$ achievable via such transitions.\\
        \item If $\ell' = \ell$ and $m'=m$,  $(q_1,q_2, \zeta)$ is reachable from some state $(q_1^0,q_2^0,\zeta_0)$ where $q_1^0\in Q_{\ell}^1$, $q_2^0\in Q_{m}^2$ and $\zeta_0\in \Gamma_0(\ell,m)$. There are two cases to consider.\\
        
        \begin{itemize}
            \item Suppose $\chi_1(\ell) \neq \chi_2(m)$ ($\chi_1$ and $\chi_2$ are the type mappings of partitions in $\Rc_{\varphi_1}$ and $\Rc_{\varphi_2}$). WLOG, let $\chi_1(\ell) = 1-\lambda$ and $\chi_2(m) = 0$. Then consider a path $(q_1^0,q_2^0,\zeta_0),\ldots,(q_1^n,q_2^n,\zeta_n)$ within $Q_{\ell}^1\times Q_m^2\times\Gamma(\ell,m)$. It has to be the case that $|\zeta_p| < 1$ for all $p < n$. Furthermore, $$\zeta_p = \sum_{t=1}^p \frac{1-\lambda}{\lambda^t} + \frac{\zeta_0}{\lambda^p} = \frac{1}{\lambda^p}\big(\sum_{t=0}^{p-1}\lambda^t(1-\lambda) + \zeta_0\big).$$ Therefore $\zeta_p\to\infty$ as $p\to\infty$ and we must have that the paths within $Q_{\ell}^1\times Q_m^2\times\Gamma(\ell,m)$ have bounded length. Hence $\Gamma(\ell, m)$ must be finite.\\
            
            \item Suppose $\chi_1(\ell) = \chi_2(m)$. Let $(q_1^0,q_2^0,\zeta_0),\ldots,(q_1^n,q_2^n,\zeta_n)$ be a path within $Q_{\ell}^1\times Q_m^2\times\Gamma(\ell,m)$. If $\zeta_0 = 0$, then $\zeta_p = 0$ for all $p\leq n$. Otherwise, $\zeta_p = \frac{\zeta_0}{\lambda^p}$ for all $p\leq n$. Since $|\zeta_p| < 1$ for all $p < n$, the length of such paths in $Q_\ell^1\times Q_\ell^2\times \Gamma(\ell, m)$ have bounded length. Therefore, $\Gamma(\ell, m)$ must be finite.
        \end{itemize}
    \end{itemize}
    Hence, we can conclude that $|\Gamma(\ell, m)| < \infty$ for all $\ell$ and $m$.\hfill\qed
\end{proof}
\begin{figure}[t]
    \centering
    \begin{tikzpicture}
        \node[state, initial left] (a0) {$a_0$};
        \node[state] (a1) [above right=1.2cm and 2cm of a0] {$a_1$};
        \node[state] (a2) [above right=0.6cm and 2.75cm of a1] {$a_3$};
        \node[state] (a3) [right=3cm of a2] {$a_7$};
        \node[state] (a4) [below right=1.2cm and 2cm of a3] {$q_1$};
        \node[state] (a5) [below right=1.2cm and 2cm of a0] {$a_2$};
        \node[state] (a8) [below right=0.6cm and 2.75cm of a1] {$a_4$};
        \node[state] (a6) [below right=0.6cm and 2.75cm of a5] {$a_6$};
        \node[state] (a7) [above right=0.6cm and 2.75cm of a5] {$a_5$};
        \node[state] (a9) [right=3cm of a7] {$a_8$};
        \node[state] (a10) [below right=1.2cm and 2cm of a9] {$s_1$};
        \path[->]
        (a0) edge [] node [] {$p, 1/3$} (a1)
        (a0) edge [swap] node [] {$\neg p, 0$} (a5)
        (a1) edge [] node [xshift=3.1mm] {$q, 1/3$} (a2)
        (a1) edge [swap] node [xshift=3.3mm] {$\neg q, 1/3$} (a8)
        (a2) edge [] node [] {$\top, 1/3$} (a3)
        (a3) edge [swap] node [yshift=1mm] {$\top, 1/3$} (a4)
        (a4) edge [loop above] node {$\top, 1/3$} ()
        (a8) edge [swap] node [anchor=north] {$\top, 1/3$} (a4)
        (a5) edge [] node [xshift=3.1mm] {$q, 1/3$} (a7)
        (a5) edge [swap] node [xshift=3.1mm] {$\neg q, 0$} (a6)
        (a6) edge [loop right] node {$\top, 0$} ()
        (a7) edge [] node [] {$\top, 1/3$} (a9)
        (a9) edge [swap] node [yshift=1mm] {$\top, 1/3$} (a10)
        (a10) edge [loop above] node {$\top, 1/3$} ()
        ;
    \end{tikzpicture}
    \caption{The reward machine for $\vp = p \vee \X_\lambda q$ via the disjunction construction with $\lambda = 2/3$. See Figure~\ref{fig:atomic_rm} for the reward machines for $p$ and $\X_\lambda q$ used in the construction. The transitions are labeled by the guard and reward. The states are
    $a_0 = (q_0, s_0', 0)$,
    $a_1 = (q_1, s_0, {1}/{2})$,
    $a_2 = (q_2, s_0, 0)$,
    $a_3 = (q_1, s_1, {3}/{4})$,
    $a_4 = (q_1, s_2, {5}/{4})$,
    $a_5 = (q_2, s_1, {-1}/{2})$,
    $a_6 = (q_2, s_2, 0)$,
    $a_7 = (q_1, s_1, {9}/{8})$, and 
    $a_8 = (q_2, s_1, {-5}/{4})$.
    }
    \label{fig:disjunction}
\end{figure}
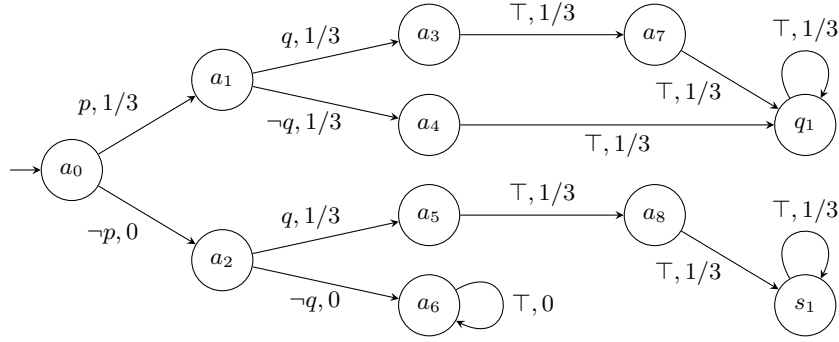
\subsubsection{Property $I_1$.} We show that the $\Rc_{\varphi}$ computes the maximum of $\Rc_{\varphi_1}$ and $\Rc_{\varphi_2}$. We begin with the following lemma.
\begin{lemma}
    For any finite word $w\in\Sigma^*$ of length $n$, if $\delta(q_0, w) = (q_1,q_2,\zeta)$ then $(i)$ $\zeta = (\Rc_{\varphi_1}(w) - \Rc_{\varphi_2}(w))/\lambda^{n}$ and $(ii)$ $\Rc_{\varphi}(w) = \max\{\Rc_{\varphi_1}(w), \Rc_{\varphi_2}(w)\}$.
\end{lemma}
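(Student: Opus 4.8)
The plan is to prove (i) and (ii) simultaneously by induction on $n = |w|$, after strengthening the statement slightly: I also carry the invariant that whenever $\delta(q_0,w) = (q_1,q_2,\zeta)$ lies in the product component $Q_1\times Q_2\times\R$, its first two coordinates are exactly $q_1 = \delta_1(q_0^1,w)$ and $q_2 = \delta_2(q_0^2,w)$. This is what licenses the rewriting $\Rc_{\varphi_i}(w\sigma) = \Rc_{\varphi_i}(w) + \lambda^{n} r_i(q_i,\sigma)$ for the next letter $\sigma$, which is the bridge between the bookkeeping coordinate $\zeta$ and the sub-rewards. The base case $n=0$ is immediate: $\delta(q_0,\epsilon) = q_0 = (q_0^1,q_0^2,0)$, and $\Rc_{\varphi_1}(\epsilon) = \Rc_{\varphi_2}(\epsilon) = \Rc_{\varphi}(\epsilon) = 0$, so $\zeta = 0 = (\Rc_{\varphi_1}(\epsilon) - \Rc_{\varphi_2}(\epsilon))/\lambda^{0}$ and $\Rc_\varphi(\epsilon) = 0 = \max\{0,0\}$.

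For the inductive step, fix $w$ of length $n$, a letter $\sigma$, and suppose $\delta(q_0,w\sigma)\in Q_1\times Q_2\times\R$. The crucial structural observation is that, reading off the definition of $\delta$, the only transitions landing in the product component start from a product state $(q_1,q_2,\zeta)$ with $|\zeta|<1$: from $|\zeta|\ge 1$ and from states in $Q_1\sqcup Q_2$ one can only move into $Q_1\sqcup Q_2$. Hence $\delta(q_0,w) = (q_1,q_2,\zeta)$ with $|\zeta|<1$, the induction hypothesis applies to $w$, and $\delta(q_0,w\sigma) = (\delta_1(q_1,\sigma),\delta_2(q_2,\sigma),f(q,\sigma)/\lambda)$ with $f(q,\sigma) = r_1(q_1,\sigma) - r_2(q_2,\sigma) + \zeta$; the strengthened invariant is preserved since $\delta_i(q_i,\sigma) = \delta_i(q_0^i,w\sigma)$. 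Claim (i) for $w\sigma$ is then a single substitution: replace $\zeta$ by $(\Rc_{\varphi_1}(w) - \Rc_{\varphi_2}(w))/\lambda^{n}$ in $f(q,\sigma)/\lambda$, multiply numerator and denominator by $\lambda^n$, and use $\Rc_{\varphi_i}(w\sigma) = \Rc_{\varphi_i}(w) + \lambda^{n} r_i(q_i,\sigma)$ to recognize the result as $(\Rc_{\varphi_1}(w\sigma) - \Rc_{\varphi_2}(w\sigma))/\lambda^{n+1}$.

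For claim (ii), the two equivalences that drive the argument are $f(q,\sigma)\ge 0 \iff \Rc_{\varphi_1}(w\sigma)\ge\Rc_{\varphi_2}(w\sigma)$ (obtained by clearing denominators in the identity just proved for (i)) and $\zeta\ge 0 \iff \Rc_{\varphi_1}(w)\ge\Rc_{\varphi_2}(w)$ (induction hypothesis for (i)). Starting from $\Rc_\varphi(w\sigma) = \Rc_\varphi(w) + \lambda^{n} r((q_1,q_2,\zeta),\sigma)$ with $\Rc_\varphi(w) = \max\{\Rc_{\varphi_1}(w),\Rc_{\varphi_2}(w)\}$, I substitute the definition of $r((q_1,q_2,\zeta),\sigma)$ and split into the four sign combinations of $f(q,\sigma)$ and $\zeta$. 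In each case $\Rc_\varphi(w\sigma)$ collapses to exactly one of $\Rc_{\varphi_1}(w\sigma),\Rc_{\varphi_2}(w\sigma)$; for instance when $f(q,\sigma)\ge 0$ and $\zeta<0$ we have $\Rc_\varphi(w) = \Rc_{\varphi_2}(w)$ and $r = r_1(q_1,\sigma)+\zeta$, so $\Rc_\varphi(w\sigma) = \Rc_{\varphi_2}(w) + \lambda^{n}r_1(q_1,\sigma) + \lambda^{n}\zeta = \Rc_{\varphi_1}(w) + \lambda^{n}r_1(q_1,\sigma) = \Rc_{\varphi_1}(w\sigma)$, which by the first equivalence is $\max\{\Rc_{\varphi_1}(w\sigma),\Rc_{\varphi_2}(w\sigma)\}$. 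The remaining three cases are the mirror-image calculation.

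I expect the only genuine work to be this four-way case split, and in particular checking that the boundary values are consistent: at $\zeta=0$ both branches of $r$ agree because $\min\{0,\zeta\}=\max\{0,\zeta\}=0$, and at $f(q,\sigma)=0$ it does not matter which of $\Rc_{\varphi_1}(w\sigma),\Rc_{\varphi_2}(w\sigma)$ is declared the maximum since they coincide. Everything else is bookkeeping; the standing assumption $\lambda\in(0,1)$ enters only to make the divisions by $\lambda$ legitimate. This lemma covers finite prefixes that remain in the product component; the complementary argument --- that once $\delta(q_0,w)$ has moved into $Q_1$ (resp.\ $Q_2$) the corresponding machine is already ahead by more than the maximal future contribution $\lambda^{|w|}$, so it wins the $\max$ permanently --- together with passing to the limit over prefixes then yields property $I_1$ for infinite words.
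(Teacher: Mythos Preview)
Your proposal is correct and follows essentially the same inductive scheme as the paper's own proof: induction on $|w|$, computation of $\zeta$ via the update formula for $f$, and a case split on the signs of $f(q,\sigma)$ and $\zeta$ to verify that the reward adjustment delivers the correct maximum. Your added strengthening (tracking that $q_i = \delta_i(q_0^i,w)$) and the explicit observation that the predecessor of a product state must itself be a product state with $|\zeta|<1$ are details the paper leaves implicit, but they do not constitute a different argument.
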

\begin{proof}
    By induction on $n$.
    \begin{itemize}
        \item \textit{Base case ($n=0$)}. $\delta(q_0, w) = (q_0^1, q_0^2, 0)$ and we have $$\Rc_{\varphi}(w) = \Rc_{\varphi_1}(w) = \Rc_{\varphi_2}(w) = 0.$$
        \item \textit{Inductive case ($n>0$)}. Let $w = u\sigma$ where $\sigma\in\Sigma$. Let $\delta(q_0, u) = q' = (q_1',q_2',\zeta')$ and $f(q',\sigma) = r_1(q_1',\sigma) - r_2(q_2',\sigma) + \zeta'$. Then $\Rc_{\varphi_i}(w) = \Rc_{\varphi_i}(u) + \lambda^{n-1}r_i(q_i', \sigma)$ for $i\in\{1,2\}$. Hence, from IH we get that 
        \begin{align*}
        \Rc_{\varphi_1}(w) - \Rc_{\varphi_2}(w) &= \Rc_{\varphi_1}(u) - \Rc_{\varphi_2}(u) + \lambda^{n-1}(r_1(q_1',\sigma) - r_2(q_2',\sigma))\\
        &=\lambda^{n-1}\zeta' + \lambda^{n-1}(r_1(q_1',\sigma) - r_2(q_2',\sigma))\\
        &= \lambda^{n-1}f(q',\sigma)\\
        &= \lambda^{n}\zeta
        \end{align*}
        where the last equality followed from Equation~\ref{eq:or_trans}. We only show $(ii)$ for the case when $f(q',\sigma) \geq 0$ (the other case is similar). Note that when $f(q',\sigma)\geq 0$ we have $\Rc_{\varphi_1}(w)\geq\Rc_{\varphi_2}(w)$. Using the IH we get
        \begin{align*}
            \Rc_{\varphi}(w) &= \Rc_{\varphi}(u) + \lambda^{n-1}r(q', \sigma)\\
            &= \max\{\Rc_{\varphi_1}(u), \Rc_{\varphi_2}(u)\} + \lambda^{n-1}(r_1(q_1',\sigma) + \min\{0, \zeta'\}).
        \end{align*}
        There are two cases to consider. If $\zeta' \geq 0$, then $$\Rc_{\varphi}(w) = \Rc_{\varphi_1}(u) + \lambda^{n-1}r_1(q_1',\sigma) = \Rc_{\varphi_1}(w) = \max\{\Rc_{\varphi_1}(w), \Rc_{\varphi_2}(w)\}.$$ If $\zeta' < 0$, we have
        \begin{align*}
            \Rc_{\varphi}(w) &= \Rc_{\varphi_2}(u) + \lambda^{n-1}\Big(r_1(q_1',\sigma) + \frac{\Rc_{\varphi_1}(u) - \Rc_{\varphi_2}(u)}{\lambda^{n-1}}\Big)\\
            &= \Rc_{\varphi_1}(u) + \lambda^{n-1}r_1(q_1',\sigma)\\
            &= \Rc_{\varphi_1}(w) = \max\{\Rc_{\varphi_1}(w),\Rc_{\varphi_2}(w)\}.
        \end{align*}
    \end{itemize}
    Thus the lemma holds by induction.\hfill\qed
\end{proof}
Now whenever the automaton leaves the state space $Q_1\times Q_2\times \R$, we have that the last state in $Q_1\times Q_2\times \R$ is $q = (q_1,q_2,\zeta)$ with $|\zeta|\geq 1$. Let us consider the case when $\zeta\geq 1$ (the other case is similar). Let $w$ be a word of length $n$ with $\delta(q_0, w) = q$. From the above lemma we have $\lambda^n\zeta = \Rc_{\varphi_1}(w) - \Rc_{\varphi_2}(w)$. Therefore, for any $\rho\in\Sigma^{\omega}$, using the fact that all rewards are bounded between $0$ and $1-\lambda$, we have
\begin{align*}
    \Rc_{\varphi_1}(w\rho) - \Rc_{\varphi_2}(w\rho) &\geq \Rc_{\varphi_1}(w) - \Rc_{\varphi_2}(w) - \lambda^n\Big(\sum_{t=0}^{\infty}\lambda^t(1-\lambda)\Big)\\
    &= \lambda^n(\zeta - 1)
    \geq 0.
\end{align*}
Hence $\Rc_{\varphi}(w) = \max\{\Rc_{\varphi_1}(w),\Rc_{\varphi_2}(w)\}$ for all finite words $w$. For any $\rho\in \Sigma^{\omega}$, using $\rho_{t}$ to denote the prefix of $\rho$ of length $t$,
\begin{align*}
    \Rc_{\varphi}(\rho) &= \lim_{t\to\infty}\Rc_{\varphi}(\rho_{t})\\
    &= \lim_{t\to\infty}\max\{\Rc_{\varphi_1}(\rho_{t}),\Rc_{\varphi_2}(\rho_{t})\}\\
    &= \max\Big\{\lim_{t\to\infty}\Rc_{\varphi_1}(\rho_t),\lim_{t\to\infty}\Rc_{\varphi_2}(\rho_t)\Big\}\\
    &= \max\{\Rc_{\varphi_1}(\rho),\Rc_{\varphi_2}(\rho)\}\\
    &= \max\{\semantics{\rho, \varphi_1}, \semantics{\rho,\varphi_2}\}\\
    &= \semantics{\rho, \varphi_1\lor\varphi_2}.
\end{align*}

\subsubsection{Property $I_2$.} We focus on the case when we have an SCC $\C$ of $\Rc_{\vp}$ such that all states of $\C$ are contained in $Q_1\times Q_2\times \R$ since the other cases are handled in the main paper. Let $Q_i = \bigcup_{\ell=1}^{L_i} Q_\ell^i$ for $i\in\{1,2\}$ be the SCC decompositions of $Q_1$ and $Q_2$ that satisfy property $I_2$. Firstly, there is an $\ell\in[L_1]$ and an $m\in[L_2]$ such that all states of $\C$ are contained in $Q_{\ell}^1\times Q_{m}^2\times \R$. There are two cases to consider.
\begin{itemize}
    \item If $\chi_1(\ell) = \chi_2(m)$, then for any transition $(q_1,q_2,\zeta)\xrightarrow{\sigma}(q_1',q_2',\zeta')$ within $\C$ with $\zeta\neq 0$ we have $|\zeta'| = \frac{|\zeta|}{\lambda} < |\zeta|$. If there is a state $(q_1,q_2,\zeta)\in\C$ with $\zeta\neq 0$ then $\C$ is a singleton with no transitions. Otherwise, for all $(q_1,q_2,\zeta)\in\C$, we have $\zeta=0$ in which case we can conclude that all transitions within $\C$ have reward $\chi_1(\ell) = \chi_2(m)$.\\
    \item If $\chi_1(\ell) \neq \chi_2(\ell)$, let us assume that $\chi_1(\ell) = 1-\lambda$ and $\chi_2(m) = 0$. For any transition $(q_1,q_2,\zeta)\xrightarrow{\sigma}(q_1', q_2',\zeta')$ within $\C$, since $\zeta>-1$, we have
    \begin{align*}
        \zeta' &= \frac{(1-\lambda) + \zeta}{\lambda}\\
        &= \frac{(1-\lambda) + (1-\lambda)\zeta + \lambda\zeta}{\lambda}\\
        &> \frac{(1-\lambda) - (1-\lambda) + \lambda\zeta}{\lambda} > \zeta
    \end{align*}
    which implies that $\C$ cannot have any cycles. Hence $\C$ is a singleton and there are no transitions within $\C$.
\end{itemize}

\subsection{Construction for Eventually Operator}\label{sec:ev_const}
An example of a reward machine constructed using our algorithm is shown in Figure~\ref{fig:eventually}. Recall that $\vp=\F\vp_1$. Note that, in any state $(v,S)$ of $\Rc_{\vp}$, if we have $(q_1,\zeta_1),(q_1,\zeta_1')\in S$ with $\zeta_1<\zeta_1'$ we can remove $(q_1,\zeta_1)$ from $S$ since it will never lead to the maximum value. For clarity of presentation, we assume that this optimization is not performed.
\begin{figure}[t]
    \centering
    \begin{tikzpicture}
        \node[state, initial left] (a0) {$a_0$};
        \node[state] (a1) [above right=1.5cm and 2cm of a0] {$a_1$};
        \node[state] (a2) [below right=1.5cm and 2cm of a0] {$a_2$};
        \node[state] (a3) [right=4cm of a1] {$a_3$};
        \node[state] (a4) [right=4cm of a2] {$a_4$};
        \node[state] (a5) [below left=1.5cm and 2cm of a3] {$a_5$};
        \path[->]
        (a0) edge [] node [] {$p, 1/3$} (a1)
        (a0) edge [swap] node [] {$\neg p, 0$} (a2)
        (a2) edge [loop below] node [] {$\neg p, 0$} ()
        (a2) edge [swap] node [] {$p, 1/3$} (a4)
        (a4) edge [swap] node [] {$p, 1/3$} (a3)
        (a4) edge [] node [xshift=1mm] {$\neg p, 1/3$} (a5)
        (a5) edge [loop, out=195, in=255, looseness=6] node [xshift=-1mm, anchor=east] {$\top, 1/3$} (a5)
        (a1) edge [] node [] {$p, 1/3$} (a3)
        (a1) edge [] node [xshift=-2mm] {$\neg p, 1/3$} (a5)
        (a3) edge [] node [xshift=-2mm] {$\top, 1/3$} (a5)
        ;
    \end{tikzpicture}
    \caption{
    The reward machine for $\vp = \F_\lambda p$ via the eventually construction with $\lambda = 2/3$. See Figure~\ref{fig:atomic_rm} for the reward machine for $p$ used in the construction. The transitions are labeled by the guard and reward. The states are 
    $a_0 = (0, \{(q_0, 0)\})$, 
    $a_1 = (-1/2, \{(q_1, 0), (q_0, -1/2)\})$, 
    $a_2 = (0, \{(q_2, 0), (q_0, 0)\})$, 
    $a_3 = (-1, \{(q_1, 0), (q_1, -3/4)\})$, 
    $a_4 = (-1/2, \{(q_1, 0), (q_2, -1/2), (q_0, -1/2)\})$, and
    $a_5 = (-1, \{(q_2, 0)\})$.
    \label{fig:eventually}
    }
\end{figure}
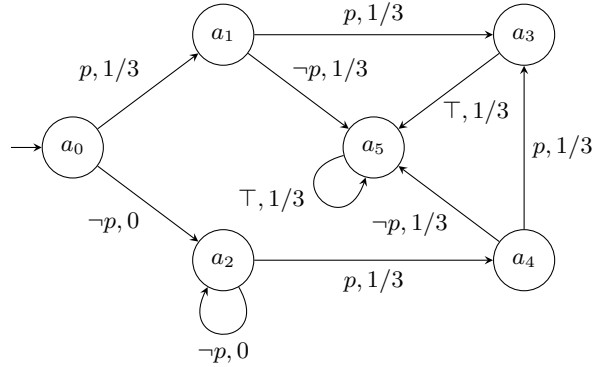

\subsubsection{Finiteness.} From the argument in the main paper, it is sufficient to show that for any state of the form $q = (-1, S_0)\in Q$ the number of states that can be reached from $q$ is finite. Since for any such reachable state $(v,S)$ we have $v=-1$, we simply use $S$ to represent such a state. Let $Q_1=\bigcup_{\ell=1}^L Q_\ell$ be the SCC decomposition of $\Rc_{\vp_1}$ and for $q_1\in Q_\ell$ let $h(q_1)=\ell$ denote the SCC that contains $q_1$. Note that we abuse notation to use $Q_1$ to also denote an SCC of $\Rc_{\vp_1}$ (when clear from context). For any state of $\Rc_{\vp}$ given by $S = \{(q_1,\zeta_1),\ldots, (q_m,\zeta_m)\}$ we define $\mu(S) = \sum_{i=1}^m h(q_i)$. We show that the set of all states reachable from $S_0$ of the form $S$ with $\mu(S) = n$ and $|S| = m$ is finite if the set of all states reachable from $S_0$ of the form $S'$ with either $|S'| > m$ or $|S'|=m$ and $\mu(S') < n$ is finite. This provides an inductive argument for finiteness since in any transition from $S$ to $S'$ either $|S'| < |S|$ or $|S'|=|S|$ and $\mu(S')\geq \mu(S)$. The base case where $|S| = |S_0|$ and $\mu(S) = \mu(S_0)$ also follows from a similar argument. Formally, let $Z_m^n = \{S\mid |S| = m, \mu(S) = n\ \text{and}\ S\ \text{is reachable from}\ S_0\}$.

\begin{lemma}
For any starting state $S_0$ reachable from the initial state $q_0$, the corresponding reach set $Z_m^n$ is finite if $Z_{m'}^{n'}$ is finite for all $m'$ and $n'$ with either (i) $m'>m$ or (ii) $m'=m$ and $n' < n$.
\end{lemma}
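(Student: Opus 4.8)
The plan is to reduce the whole statement to one finiteness fact about \emph{confined} runs of $\Rc_\vp$ --- runs along which no element is ever dropped or merged, equivalently along which both $|S|$ and $\mu(S)$ remain constant --- and then to prove that fact by explicitly solving the recurrence that governs the bookkeeping values $\zeta_i$.

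The first step is the reduction. Along any path of reachable states (all of which have first component $-1$), $|S|$ is non-increasing, and whenever $|S|$ is unchanged the transition restricts to a bijection between element-states followed by $q\mapsto\delta_1(q,\sigma)$, so by $I_2$ the value $\mu(S)$ is non-decreasing; hence the set of positions of a path lying in $Z_m^n$ is an interval. Consequently every $S\in Z_m^n$ is reached from an \emph{entry state} lying in $Z_m^n$ that is either $S_0$ itself or is one transition away from a state with $|S'|>m$, or with $|S'|=m$ and $\mu(S')<n$; by hypothesis this latter set is finite, and each state has at most $|\Sigma|$ successors, so the set of entry states is finite (this also covers the base case $(m,n)=(|S_0|,\mu(S_0))$, where the entire path is automatically confined). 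It therefore suffices to prove that from any fixed state $S_1$ only finitely many states are reachable along confined paths. Along a confined path each element stays inside a fixed SCC $Q_{\ell_i}$ of $\Rc_{\vp_1}$, because $\sum_i h(q_i)$ is preserved while $h$ is non-decreasing along $\delta_1$-edges; thus the $q$-components range over a finite set, and only the vector $(\zeta_1,\dots,\zeta_m)$ remains to be controlled.

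For that I would use $I_2$ for $\Rc_{\vp_1}$: every transition taken inside $Q_{\ell_i}$ emits the constant reward $c_i:=\chi(\ell_i)\in\{0,1-\lambda\}$. Invoking $I_3$ for $\Rc_\vp$ --- which is proved directly on finite words, independently of finiteness --- one checks that in the $v=-1$ regime no new element is ever appended, so the dynamics collapses to $\zeta_i\mapsto(c_i+\zeta_i-\beta)/\lambda$ with current reward $\beta=\max_j(c_j+\zeta_j)$, whence $\max_j\zeta_j=0$ after one step and $\zeta_i>-1$ throughout (an element with $\zeta\le-1$ would be dropped). Telescoping this linear recurrence, with $\eta_i^{(t)}:=\lambda^t\zeta_i^{(t)}$ turning it into $\eta_i^{(t+1)}=\eta_i^{(t)}+\lambda^tc_i-\lambda^t\beta^{(t)}$, yields the closed form
\[
\zeta_i^{(t)}=\frac{L_i-\Theta^{(t)}}{\lambda^t}-\frac{c_i}{1-\lambda},
\]
where $L_i:=\zeta_i^{(0)}+c_i/(1-\lambda)$ is a fixed constant and $\Theta^{(t)}:=\sum_{s<t}\lambda^s\beta^{(s)}\in[0,1)$ is independent of $i$. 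From here I would conclude as follows. If the $L_i$ are not all equal, choose $a,b$ with $L_a\neq L_b$; then $(\zeta_a^{(t)}-\zeta_b^{(t)})+(c_a-c_b)/(1-\lambda)=(L_a-L_b)/\lambda^t$, and since the left-hand side lies in the bounded interval $(-2,2)$ while $L_a-L_b$ is a fixed nonzero number, $\lambda^t$ is bounded below by a positive constant, so $t$ --- hence the confined run length --- is bounded. If instead all $L_i$ equal a common $L$, then $\zeta_i^{(t)}+c_i/(1-\lambda)$ is independent of $i$, so each $\zeta_i^{(t)}$ lies in a two-element set $\{\theta,\theta-1\}$; combined with $\max_j\zeta_j^{(t)}=0$ and $\zeta_j^{(t)}>-1$ this forces $\zeta_i^{(t)}=0$ for every $i$, a single vector. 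Either way only finitely many states are reachable from $S_1$ along confined paths, which closes the induction.

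The step I expect to be the main obstacle is the $\zeta$-analysis: finding the change of variables that linearizes the recurrence with an $i$-independent forcing term, and in particular handling the ``all $L_i$ equal'' case, in which several tracks can keep moving until the configuration collapses to the all-zero vector. This is the multi-track generalization of the bookkeeping computation in Lemma~\ref{lem:dis_fin}, and it is also the place where $I_3$ for $\Rc_\vp$ must be used, both to keep the rewards --- hence $\Theta^{(t)}$ --- bounded and to rule out the appending of new elements while $v=-1$. The remaining ingredients (the interval structure of the $Z_m^n$-positions along a path, the finiteness of entry states, and the confinement of element-states to SCCs) are routine.
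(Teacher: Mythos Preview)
Your proof is correct and follows the same overall architecture as the paper: reduce to a finite set of entry states of $Z_m^n$, then bound the confined runs in which every element stays inside a fixed SCC of $\Rc_{\vp_1}$ and hence emits a constant reward $c_i\in\{0,1-\lambda\}$.

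The one notable difference is in how the confined $\zeta$-dynamics is handled. You linearize via $\eta_i^{(t)}=\lambda^t\zeta_i^{(t)}$, solve in closed form, and split on whether the constants $L_i=\zeta_i^{(0)}+c_i/(1-\lambda)$ all coincide. The paper instead splits directly on the reward types: if all $c_i$ are equal, then $\zeta_i^{(t)}=\zeta_i^{(0)}/\lambda^{t}$, so either all $\zeta_i$ start at $0$ and stay there, or one is negative and diverges; if two types differ, the difference $\zeta_i^{(t)}-\zeta_{i'}^{(t)}$ is computed explicitly and shown to diverge. This is shorter and avoids the change of variables altogether --- your ``all $L_i$ equal'' case simply collapses to the paper's ``all $c_i$ equal with all $\zeta_i^{(0)}=0$'' subcase, and your ``not all $L_i$ equal'' case covers both of the paper's divergent subcases at once. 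Your route buys a uniform treatment via a single formula; the paper's buys brevity. A small remark: you cite $I_3$ for $\Rc_\vp$ to ensure no element is appended once $v=-1$, but the weaker fact $m(S,\sigma)\geq 0$ (immediate from $\max_j\zeta_j=0$ and $I_3$ for $\Rc_{\vp_1}$) already suffices.
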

\begin{proof}
    We first note that, from the definition of $\delta$, that any reachable state $S$ must contain an element $(q_i,\zeta_i)$ with $\zeta_i=0$. Therefore, if $m=1$ then any $S\in Z_m^n$ equals $\{(q_1,0)\}$ for some state $q_1$ of $\Rc_{\vp_1}$. Hence $Z_m^n$ is finite.
    
    Suppose $m > 1$. To reach a state $S\in Z_m^n$ from $S_0$, we must reach a state $S'\in Z_m^n$ using a single transition from some state $S''\in Z_{m'}^{n'}$ with either (i) $m'>m$ or (ii) $m'=m$ and $n' < n$. Therefore the number of such states $S'$ is finite. Now, it suffices to show that for any such $S'$, the set of states in $Z_m^n$ reachable from $S'$ is finite.
    
    In any path $S' = S_1\to S_2\to\cdots \to S_k$ where $S_k\in Z_m^n$ we have $S_j\in Z_m^n$ for all $1\leq j\leq k$. Furthermore, for each $j$ we have $S_j = \{(q_1^j,\zeta_1^j),\ldots,(q_m^j,\zeta_m^j)\}$ and there is a word $w=\sigma_1\sigma_2\ldots\sigma_{k-1}$ such that $q_i^{j+1} = \delta_1(q_i^{j}, \sigma_j)$ for all $j<k$ and $1\leq i\leq m$. Since we have $\sum_{i=1}^m h(q_i^j) = n$ for all $j$, there must exist indices $\ell_1,\ldots,\ell_m$ such that $q_i^j\in Q_{\ell_i}$ for all $i$ and $j$. There are two cases to consider.
    \begin{itemize}
        \item Suppose the type $c_i$ of $Q_{\ell_i}$ is the same for all $i$. If $\zeta_i^1 = 0$ for all $i$, then $\zeta_i^j = 0$ for all $i$ and $j$ and hence the number of states reachable from $S'$ in $Z_m^n$ is finite. Otherwise, there is an $i$ such that $\zeta_i^1 < 0$. In this case, it is easy to see that $\zeta_i^j = \frac{\zeta_i^1}{\lambda^{j-1}}$. Since $\zeta_i^j \leq -1$ for large enough $j$, the length of the path $k$ is bounded. This is because the element $(q_i^j,\zeta_i^j)$ will be dropped from $S_j$ when $j$ is large enough, thereby reducing the size of $S_j$ and causing the path to leave $Z_m^n$. Therefore, the number of states reachable from $S'$ in $Z_m^n$ is finite.\\
        \item Suppose there exist $i,i'$ with type $c_i = 0$ and $c_{i'}=1-\lambda$. Then we have that $\zeta_{i}^{j+1}{-}\zeta_{i'}^{j+1} = (\zeta_{i}^j{-}\zeta_{i'}^j - (1{-}\lambda))/\lambda$ for all $j{<}k$. Therefore for all $j>1$ we have
        \begin{align*}
            \zeta_{i}^j-\zeta_{i'}^j &= \frac{\zeta_{i}^1-\zeta_{i'}^1}{\lambda^{j-1}} - (1-\lambda)\sum_{x=1}^{j-1}\frac{1}{\lambda^x}\\
            &= \frac{1}{\lambda^{j-1}}\Big(\zeta_{i}^1-\zeta_{i'}^1 - (1-\lambda)\sum_{x=0}^{j-2}\lambda^x\Big).
        \end{align*}
        Since $\zeta_{i}^1-\zeta_{i'}^1 < 1$ we have $\zeta_{i}^j-\zeta_{i'}^j \to -\infty$ as $j\to\infty$. Hence $\zeta_i^j$ must drop below $-1$ for large enough $j$ causing a transition out of $Z_m^n$. Therefore, the length of the path $k$ is bounded which implies that the number of states reachable from $S'$ in $Z_m^n$ is finite.
    \end{itemize}
    This concludes the proof.\hfill\qed
\end{proof}

\subsubsection{Property $I_1$.} The following lemma is sufficient to show this property.
\begin{lemma}
    \label{lem:f_const}
    For any finite word $w \in \Sigma^*$ of length $n$, then if $\delta(q_0, w) = (v, S)$, 
    (i) there is a $(q_k, \zeta_k) \in S$, such that $q_k = \delta_1(q_0^1, w_{i:\infty})$ and 
    \[
    \zeta_k = (\semantics{\X_\lambda^i \vp_1, w} - \semantics{\F_\lambda^n \vp_1, w})/\lambda^n
    \]
    if and only if there is a subformula $\X_\lambda^i \vp_1$ with $n \ge i \ge 0$ and \[
    \semantics{\X_\lambda^i \vp_1, w} + \lambda^n > \semantics{\F_\lambda^n \vp_1, w};
    \]
    
    (ii) $v = \max\{-1, -\semantics{\F_\lambda^n \vp_1, w}/\lambda^n\}$; and

    (iii) $m(S_p, \sigma) = (\semantics{\F_\lambda^{n} \vp_1, u\sigma} - \semantics{\F_\lambda^{n-1} \vp_1, u})/\lambda^{n-1}$ if $w = u\sigma$ where $\sigma \in \Sigma$ and $\delta(q_0, u) = (v_p, S_p)$.
\end{lemma}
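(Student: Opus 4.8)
The plan is to prove the three parts simultaneously by induction on the length $n$ of $w$, recording along the way the stronger invariant that if $\delta(q_0,w)=(v,S)$ then
\[
S=\Big\{\big(\delta_1(q_0^1,w_{i:\infty}),\,(\semantics{\X_\lambda^i\vp_1,w}-\semantics{\F_\lambda^n\vp_1,w})/\lambda^n\big)\ :\ 0\le i\le n,\ \semantics{\X_\lambda^i\vp_1,w}+\lambda^n>\semantics{\F_\lambda^n\vp_1,w}\Big\},
\]
regarded as a set of $(q,\zeta)$-pairs (so indices producing the same pair are identified). For $n=0$ we have $w=\epsilon$ and $\delta(q_0,\epsilon)=(0,\{(q_0^1,0)\})$; since $\Rc_{\vp_1}(\epsilon)=0$ we get $\semantics{\X_\lambda^0\vp_1,\epsilon}=\semantics{\F_\lambda^0\vp_1,\epsilon}=0$, so the displayed description, (i) and (ii) hold, and (iii) is vacuous.

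For the inductive step write $w=u\sigma$ with $|u|=n-1$ and $\delta(q_0,u)=(v_p,S_p)$; by the inductive hypothesis $S_p$ and $v_p$ are given by the description above applied to $u$. First I would record two elementary horizon identities obtained by unfolding $\Rc_{\vp_1}$ one step: for $0\le i\le n-1$, writing $q_i:=\delta_1(q_0^1,u_{i:\infty})$,
\[
\semantics{\X_\lambda^i\vp_1,w}=\semantics{\X_\lambda^i\vp_1,u}+\lambda^{n-1}r_1(q_i,\sigma),\qquad \delta_1(q_i,\sigma)=\delta_1(q_0^1,w_{i:\infty}),
\]
together with $\semantics{\X_\lambda^n\vp_1,w}=0$ since $w_{n:\infty}=\epsilon$. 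Substituting the inductive description of $S_p$ into $f(\zeta,q,\sigma)=r_1(q,\sigma)+\zeta$ then shows that the element of $S_p$ coming from index $i$ has $f$-value $(\semantics{\X_\lambda^i\vp_1,w}-\semantics{\F_\lambda^{n-1}\vp_1,u})/\lambda^{n-1}$, and in particular (since $r_1\ge0$) $\semantics{\F_\lambda^n\vp_1,w}\ge\semantics{\F_\lambda^{n-1}\vp_1,u}$.

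The step I expect to be the crux is showing that $\max_{(q_i,\zeta_i)\in S_p}f(\zeta_i,q_i,\sigma)=(\semantics{\F_\lambda^n\vp_1,w}-\semantics{\F_\lambda^{n-1}\vp_1,u})/\lambda^{n-1}$, i.e.\ that the maximum of $\semantics{\X_\lambda^i\vp_1,w}$ over the indices surviving in $S_p$ agrees with $\semantics{\F_\lambda^n\vp_1,w}=\max_{0\le i\le n}\semantics{\X_\lambda^i\vp_1,w}$. This is where property $I_3$ for $\Rc_{\vp_1}$ (so $r_1\le1-\lambda$) enters: an index $i\le n-1$ absent from $S_p$ satisfies $\semantics{\X_\lambda^i\vp_1,u}\le\semantics{\F_\lambda^{n-1}\vp_1,u}-\lambda^{n-1}$, hence $\semantics{\X_\lambda^i\vp_1,w}\le\semantics{\F_\lambda^{n-1}\vp_1,u}-\lambda^{n}$, which is strictly below $\semantics{\X_\lambda^{j^*}\vp_1,w}\ge\semantics{\F_\lambda^{n-1}\vp_1,u}$ for an index $j^*$ realizing $\semantics{\F_\lambda^{n-1}\vp_1,u}$ (such a $j^*$ survives in $S_p$, being active at step $n-1$), while $i=n$ contributes only $0\le\semantics{\F_\lambda^{n-1}\vp_1,u}$; thus no absent index can be the argmax. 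Since $r((v_p,S_p),\sigma)=m(S_p,\sigma)$ equals this maximum, part (iii) follows.

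With (iii) in hand the rest is routine bookkeeping. $\Delta(S_p,\sigma)$ sends the element from index $i$ to $(\delta_1(q_i,\sigma),\zeta_i')$ with $\zeta_i'=(f(\zeta_i,q_i,\sigma)-m(S_p,\sigma))/\lambda=(\semantics{\X_\lambda^i\vp_1,w}-\semantics{\F_\lambda^n\vp_1,w})/\lambda^n$ and $\delta_1(q_i,\sigma)=\delta_1(q_0^1,w_{i:\infty})$, and it is kept iff $\zeta_i'>-1$, i.e.\ iff $\semantics{\X_\lambda^i\vp_1,w}+\lambda^n>\semantics{\F_\lambda^n\vp_1,w}$ — exactly the activity condition at step $n$. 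For the value component, $v'=(v_p-m(S_p,\sigma))/\lambda$ evaluates, using the inductive value of $v_p$ and (iii), to $-\semantics{\F_\lambda^n\vp_1,w}/\lambda^n$ whenever $v_p>-1$; when $v_p=-1$ one has $\semantics{\F_\lambda^{n-1}\vp_1,u}\ge\lambda^{n-1}$, whence $\semantics{\F_\lambda^n\vp_1,w}\ge\lambda^{n-1}\ge\lambda^n$ and (as $m(S_p,\sigma)\ge0$) $v'\le-1$ as well, so in all cases $v=\max\{-1,v'\}=\max\{-1,-\semantics{\F_\lambda^n\vp_1,w}/\lambda^n\}$, which is (ii). The same computation shows the fresh pair $(q_0^1,v')$ is adjoined precisely when $\semantics{\F_\lambda^n\vp_1,w}<\lambda^n$, i.e.\ exactly when the index-$n$ formula $\X_\lambda^n\vp_1$ is active at step $n$ (its value on $w$ being $0$ and its state $\delta_1(q_0^1,w_{n:\infty})=q_0^1$, with offset $-\semantics{\F_\lambda^n\vp_1,w}/\lambda^n$). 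Hence the displayed description of $S$, and therefore (i), holds for $w$, completing the induction. \qed
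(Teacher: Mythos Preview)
Your proof is correct and follows essentially the same inductive approach as the paper: induct on $|w|$, use the inductive description of $S_p$ to evaluate the $f$-values and hence $m(S_p,\sigma)$ for part (iii), then read off (ii) and (i) from the transition rule. If anything, your argument is more complete than the paper's on the key step of (iii): the paper silently passes from $\max_{(q_k,\zeta_k)\in S_p}$ to $\max_{0\le i\le n}$, whereas you explicitly invoke $I_3$ to show that indices already dropped from $S_p$ (and the fresh index $i=n$) cannot realize the maximum of $\semantics{\X_\lambda^i\vp_1,w}$.
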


\begin{proof}
    By induction on $n$.
    \begin{itemize}
        \item \textit{Base case $(n=0)$.} In this case, $(q_0^1, 0) \in S$ and $0 = \max\{-1, -\semantics{\F_\lambda^n \vp_1, w}/\lambda^n\}$, and $0 = (\semantics{\X_\lambda^i \vp_1, w} - \semantics{\F_\lambda^i \vp_1, w})/\lambda^n$.\\
        
        \item \textit{Inductive step $(n>0)$.} Let $w = u\sigma$ where $\sigma \in \Sigma$, and $\delta(q_0, u) = (v_p, S_p)$. Consider a $(q_k, \zeta_k) \in S_p$. By the inductive hypothesis, there exists an $n-1 \ge i \ge 0$ such that, 
        \begin{align*}
            f(\zeta_k, q_k, \sigma) &= r_1(q_k, \sigma) + \zeta_k \\
            &= r_1(q_k, \sigma) + (\semantics{\X_\lambda^i \vp_1, u} - \semantics{\F_\lambda^{n-1} \vp_1, u})/\lambda^{n-1} \\
            &= (\semantics{\X_\lambda^i \vp_1, u\sigma} - \semantics{\F_\lambda^{n-1} \vp_1, u})/\lambda^{n-1}
        \end{align*}
        where the second equality follows from the inductive hypothesis, and the third equality follows from the fact that $q_k = \delta_1(q_0^1, w_{i:\infty})$ is simulating the reward machine for $\X_\lambda^i \vp_1$ by delaying its start. Now, we have that 
        \begin{align*}
            m(S_p, \sigma) &= \max_{(q_k, \zeta_k) \in S} f(q_k, \zeta_k, \sigma) \\
            &= \max_{i} (\semantics{\X_\lambda^i \vp_1, u\sigma} - \semantics{\F_\lambda^{n-1} \vp_1, u})/\lambda^{n-1} \\
            &= (\semantics{\F_\lambda^n \vp_1, u\sigma} - \semantics{\F_\lambda^{n-1} \vp_1, u})/\lambda^{n-1} .
        \end{align*}
        This shows $(iii)$. 
        
        Next, let $v' = (v_p - m(S_p, \sigma))/\lambda$. If $v_p = -1 = \max\{-1, -\semantics{\F_\lambda^{n-1} \vp_1, u}/\lambda^{n-1}\}$ then $v' \le -1$ and $v = -1 = \max\{-1, -\semantics{\F_\lambda^n \vp_1, w}/\lambda^n\}$. Otherwise, consider $v_p > -1$. We have that
        \begin{align*}
            v' &= (v_p - m(S_p, \sigma))/\lambda \\
            &= ( -\semantics{\F_\lambda^{n-1} \vp_1, u}/\lambda^{n-1} - (\semantics{\F_\lambda^n \vp_1, u\sigma} - \semantics{\F_\lambda^{n-1} \vp_1, u})/\lambda^{n-1})/\lambda \\
            &= - \semantics{\F_\lambda^n \vp_1, u\sigma}/\lambda^n \\
            &= - \semantics{\F_\lambda^n \vp_1, w}/\lambda^n .
        \end{align*}
        Since $v = -1$ if $v' < -1$ by the transition function, we get that $v = \max\{-1, - \semantics{\F_\lambda^n \vp_1, w}/\lambda^n \}$, which shows $(ii)$.
        
        Finally, consider a $(q_k, \zeta_k) \in S_p$ and corresponding $n-1 \ge i \ge 0$ and $\X_\lambda^i \vp_1$. We have that $\Delta(\{(q_k, \zeta_k)\}, \sigma) = A$ where $A = \emptyset$ if $\zeta_k' \le -1$ and $\{(q_k', \zeta_k')\}$ otherwise where 
        \begin{align*}
                \zeta_k' &= (f(\zeta_k, q_k, \sigma) - m(S_p,\sigma))/\lambda \\
                &= (\semantics{\X_\lambda^i \vp_1, u\sigma} - \semantics{\F_\lambda^n \vp_1, u\sigma})/\lambda^n .
        \end{align*}
        If $\zeta_k' \le -1$ then $\semantics{\X_\lambda^i \vp_1, u\sigma} - \semantics{\F_\lambda^n \vp_1, u\sigma} \le -\lambda^n$, which due to property $I_3$ of $\Rc_{\vp_1}$, ensures that $\semantics{\X_\lambda^i \vp_1, u\rho} \le \semantics{\F_\lambda^n \vp_1, u\rho}$ for any continuation $\rho \in \Sigma^\omega$. To finish showing $(i)$, we simply need to note that if $\semantics{\F_\lambda^{n} \vp_1, w} \le \lambda^{n}$, then $v' > -1$, so the element $(q_0^1, v') = (\delta_1(q_0^1, w_{n:\infty}), -\semantics{\F_\lambda^{n} \vp_1, w}/\lambda^n) = (q_0^1, (\semantics{\X_\lambda^n \vp_1, w}-\semantics{\F_\lambda^{n} \vp_1, w})/\lambda^n) $ is added.\hfill\qed
    \end{itemize}
    
\end{proof}

\subsubsection{Property $I_2$.}
From the argument in the main paper, it suffices to show property $I_3$ for the case where $v = v' = -1$. 
Consider an SCC $\C$ in $\Rc_\vp$ with some state $(-1, S) \in \C$. This means that $\delta((-1,S), u) = (-1,S)$ for some $u \in \Sigma^*$ of length $\ell > 0$. Consider a $(q_k, \zeta_k) \in S$. We must have that $q_k$ is in an SCC $\C_k$ of $\vp_1$, otherwise we would not be able to reach $S$ again such that it contains $q_k$. By property $I_2$ of $\Rc_{\vp_1}$, we say that $\C_k$ has a reward type $t_k \in \{0, 1-\lambda\}$.

We now consider the behavior of $\zeta_k$.
By Lemma~\ref{lem:f_const}, we have that $(q_k, \zeta_k) \in S$ corresponds to a subformula $\X_\lambda^i \vp_1$ such that $\zeta_k = (\semantics{\X_\lambda^i \vp_1, w} - \semantics{\F_\lambda^n \vp_1, w})/\lambda^n$ where $w \in \Sigma^*$ is a finite word of length $n$. We have that $\zeta_k = (\semantics{\X_\lambda^i \vp_1, w} - \semantics{\F_\lambda^n \vp_1, w})/\lambda^n = (\semantics{\X_\lambda^i \vp_1, wu} - \semantics{\F_\lambda^{n+\ell} \vp_1, wu})/\lambda^{n+\ell}$, i.e. going all the way around the SCC does not change the value of $\zeta_k$. Say that the type $t_k'$ corresponding to $\semantics{\F_\lambda^{n+\ell} \vp_1, wu}$ differs from $t_k$. Then we have that 
\[
(\semantics{\X_\lambda^i \vp_1, w} - \semantics{\F_\lambda^n \vp_1, w})/\lambda^n \neq (\semantics{\X_\lambda^i \vp_1, wu} - \semantics{\F_\lambda^{n+\ell} \vp_1, wu})/\lambda^{n+\ell}.
\]
It must be that $t_k' = t_k$ and $\zeta_k = 0$. We have that the reward given for this SCC $\C$ is that for $\semantics{\F_\lambda^n \vp_1, w}$, i.e. $t_k'$.

\subsubsection{Property $I_3$.}
By Lemma~\ref{lem:f_const}, we have that for any finite word $w \in \Sigma^*$ of length $n$ and letter $\sigma \in \Sigma$, if $\delta(q_0, w) = (v, S)$, then the reward is $r((v, S), \sigma) = m(S, \sigma) = (\semantics{\F_\lambda^{n+1} \vp_1, w\sigma} - \semantics{\F_\lambda^n \vp_1, w})/\lambda^n$. Let $w' = w \sigma$. We have that
\begin{align*}
    \lambda^n m(S, \sigma) &= \semantics{\F_\lambda^{n+1} \vp_1, w\sigma} - \semantics{\F_\lambda^n \vp_1, w} \\
    &= \max_{n+1\ge k \ge 0} \lambda^k \Rc_{\vp_1}(w'_{k:\infty}) - \max_{n\ge k \ge 0} \lambda^k \Rc_{\vp_1}(w_{k:\infty}) \\
    &\le \max_{q', \sigma'} \lambda^{n} r_1(q', \sigma') \le \lambda^n (1-\lambda)
\end{align*}
where the last inequality follows from the fact that all of the rewards in $\Rc_{\vp_1}$ satisfy property $I_3$ of $\Rc_{\vp_1}$. Since all the rewards in $\Rc_{\vp_1}$ are nonnegative by property $I_3$, we have the $\semantics{\F_\lambda^{n+1} \vp_1, w\sigma} - \semantics{\F_\lambda^n \vp_1, w} \ge 0$. Thus, we have that $0 \le r((v, S), \sigma) = m(S, \sigma) \le 1-\lambda$.

\subsection{Construction for Until Operator}

We omit the detailed proofs for the until operator as they largely follow what was done for the eventually operator, and instead only show the key lemma, which is similar to Lemma~\ref{lem:f_const}.

\begin{lemma}
    \label{lem:u_const}
    For any finite word $w \in \Sigma^*$ of length $n$, then if $\delta(q_0, w) = (v, I, \mathcal{X})$, 

    (i) there is an $S \in \mathcal{X}$ if and only if there is a $\psi_i$ with $n \ge i \ge 0$ such that $\semantics{\psi_i, w} + \lambda^n > \semantics{\vp_1 \U_\lambda^n \vp_2, w}$ where $(q_k, \zeta_k) \in S$ if and only if either:
    \begin{itemize}
        \item there is a subformula $\X_\lambda^i \vp_2$ where $\semantics{\X_\lambda^i \vp_2, w} < \semantics{\vp_1 \U_\lambda^n \vp_2, w} + \lambda^n$, in which case $q_k = \delta_2(q_0^2, w_{i:\infty})$ and $\zeta_k = (\semantics{\X_\lambda^i \vp_2, w} - \semantics{\vp_1 \U_\lambda^n \vp_2, w})/\lambda^n$ or,
        \item there is a subformula $\X_\lambda^j \vp_1$ where $i > j \ge 0$ and $\semantics{\X_\lambda^j \vp_1, w} < \semantics{\vp_1 \U_\lambda^n \vp_2, w} + \lambda^n$, in which case $q_k = \delta_1(q_0^1, w_{j:\infty})$ and $\zeta_k = (\semantics{\X_\lambda^j \vp_1, w} - \semantics{\vp_1 \U_\lambda^n \vp_2, w})/\lambda^n$,
    \end{itemize}

    (ii) $v = \max\{-1, -\semantics{\vp_1 \U_\lambda^n \vp_2, w}/\lambda^n \}$, \\

    (iii) there is an $(q_k, \zeta_k) \in I$ if and only if $\lambda^n > \semantics{\vp_1 \U_\lambda^n \vp_2, w}$ and there is a subformula $\X_\lambda^j \vp_1$ where $n > j \ge 0$ and $\semantics{\X_\lambda^j \vp_1, w} < \semantics{\vp_1 \U_\lambda^n \vp_2, w} + \lambda^n$, in which case $q_k = \delta_1(q_0^1, w_{j:\infty})$ and $\zeta_k = (\semantics{\X_\lambda^j \vp_1, w} - \semantics{\vp_1 \U_\lambda^n \vp_2, w})/\lambda^n$, \\

    (iv) $m(S_p, \sigma) = (\semantics{\vp_1 \U_\lambda^n \vp_2, u\sigma} - \semantics{\vp_1 \U_\lambda^{n-1} \vp_2, u})/\lambda^{n-1}$ if $w = u\sigma$ where $\sigma \in \Sigma$ and $\delta(q_0, u) = (v_p, I_p, \mathcal{X}_p)$.
\end{lemma}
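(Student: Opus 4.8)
The plan is to prove Lemma~\ref{lem:u_const} by induction on the length $n$ of the finite word $w$, following the structure of the proof of Lemma~\ref{lem:f_const} for the eventually operator but carrying the extra bookkeeping demanded by the nested subset construction (an inner subset construction per $\psi_i$ maintaining a minimum, wrapped in an outer one maintaining a maximum over the $\psi_i$). Throughout I assume $\lambda\in(0,1)$ so that division by $\lambda$ is legal. For the base case $n=0$ one checks that $q_0 = (0, \emptyset, \{\{(q_0^2, 0)\}\})$ satisfies (i)--(iii): the only contender is $\psi_0 = \vp_2$ encoded by the single element $(q_0^2,0)$, the set $I$ tracking the empty conjunction $\bigwedge_{0>j\ge0}\X_\lambda^j\vp_1$ is $\emptyset$, and $v = 0 = \max\{-1, -\semantics{\vp_1\U_\lambda^0\vp_2, w}/\lambda^0\}$ since $\semantics{\psi_0,w}=\semantics{\vp_2,w}\le 1$; part (iv) is vacuous at $n=0$.

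For the inductive step write $w = u\sigma$ with $|u| = n-1$ and let $\delta(q_0,u) = (v_p, I_p, \mathcal{X}_p)$, which satisfies (i)--(iv) by the induction hypothesis. The central computation is to evaluate $f_*(\zeta_k, q_k, \sigma) = r_t(q_k,\sigma) + \zeta_k$ for each element $(q_k,\zeta_k)$ of each $S\in\mathcal{X}_p$ (and of $I_p$), where $t\in\{1,2\}$ is determined by whether $q_k\in Q_1$ or $Q_2$. By the induction hypothesis $q_k = \delta_t(q_0^t, u_{j:\infty})$ is simulating $\Rc_{\vp_t}$ started $j$ steps late, so $\lambda^{n-1}r_t(q_k,\sigma) = \semantics{\X_\lambda^j\vp_t, u\sigma} - \semantics{\X_\lambda^j\vp_t, u}$, and combining this with the induction hypothesis form of $\zeta_k$ gives $f_*(\zeta_k,q_k,\sigma) = (\semantics{\X_\lambda^j\vp_t, w} - \semantics{\vp_1\U_\lambda^{n-1}\vp_2, u})/\lambda^{n-1}$. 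Taking the inner minimum $n(S,\sigma)$ over a subset $S$ that by (i) encodes exactly the conjuncts of a still-contending $\psi_i$ yields $n(S,\sigma) = (\semantics{\psi_i, w} - \semantics{\vp_1\U_\lambda^{n-1}\vp_2,u})/\lambda^{n-1}$, and the outer maximum $m = m(\mathcal{X}_p,\sigma)$ then equals $(\semantics{\vp_1\U_\lambda^n\vp_2, w} - \semantics{\vp_1\U_\lambda^{n-1}\vp_2, u})/\lambda^{n-1}$, which is part (iv) and, since $r((v,I,\mathcal{X}),\sigma) = m(\mathcal{X},\sigma)$, immediately gives the reward identity. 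Plugging $m$ into $v' = (v_p - m)/\lambda$ and telescoping against the induction hypothesis for $v_p$ gives $v' = -\semantics{\vp_1\U_\lambda^n\vp_2, w}/\lambda^n$ when this exceeds $-1$ and $v=-1$ otherwise, which is (ii); and the freshly added element $(q_0^1,v')$ inside $I' = \Delta(I_p\sqcup(q_0^1,v'),\sigma,m)$ has exactly the initialization value that the new conjunct $\X_\lambda^n\vp_1$ requires, while $I'\sqcup(q_0^2,v')$ is the encoding of the new $\psi$-subset $\psi_{n}$ (its $\X_\lambda^{n}\vp_2$ conjunct started fresh), giving (iii) and the ``if'' direction of (i).

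The heart of the argument, and the step I expect to be the main obstacle, is the ``only if'' direction of (i) and (iii): showing that the three drop rules --- $\zeta'\ge 1$ for an inner element, $n(S,\sigma)\le -1$ for an entire $\psi$-subset, and $v'\le -1$ for halting further additions --- never discard a conjunct or a $\psi_i$ that could still influence the value on \emph{some} infinite continuation $\rho\in\Sigma^\omega$. Here I would invoke property $I_3$ of $\Rc_{\vp_1}$ and $\Rc_{\vp_2}$: after a prefix of length $n$ the maximal additional discounted reward any continuation contributes is $\sum_{t\ge n}\lambda^t(1-\lambda) = \lambda^n$. Thus $\zeta_k\ge 1$, i.e.\ $\semantics{\X_\lambda^j\vp_t, w}\ge\semantics{\vp_1\U_\lambda^n\vp_2, w}+\lambda^n$, forces $\semantics{\X_\lambda^j\vp_t, w\rho}$ to remain above the running value of $\psi_i$ for every $\rho$, so that conjunct can never be the binding minimum and is safely dropped; symmetrically $n(S,\sigma)\le -1$ forces $\semantics{\psi_i, w\rho}+\lambda^n\le\semantics{\vp_1\U_\lambda^n\vp_2, w\rho}$ so $\psi_i$ can never become the binding maximum, and once $v'\le -1$ no later $\psi_m$ ($m\ge n$) can attain the maximum either. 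Chaining these monotonicity facts establishes the equivalences in (i) and (iii). Together with (iv) this yields $\Rc_\vp(w) = \semantics{\vp_1\U_\lambda^n\vp_2, w}$ for every finite $w$ of length $n$, and letting $n\to\infty$ (using that $\semantics{\vp_1\U_\lambda^n\vp_2,\rho_{0:n}}\to\semantics{\vp_1\U_\lambda\vp_2,\rho}$, which follows from strict discounting) gives $\Rc_\vp(\rho) = \semantics{\vp_1\U_\lambda\vp_2, \rho}$, i.e.\ invariant $I_1$; invariants $I_2$ and $I_3$ for $\Rc_\vp$ then follow from this characterization of the states exactly as sketched for the until operator in the main text.
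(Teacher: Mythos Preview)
Your proposal is correct and follows essentially the same approach as the paper's proof: induction on $n$, with the inductive step computing $f_*(\zeta_k,q_k,\sigma)$ for each tracked element, taking the inner minimum $n(S,\sigma)$ to recover $\semantics{\psi_i,w}$, the outer maximum $m(\mathcal{X}_p,\sigma)$ to obtain (iv), telescoping $v' = (v_p - m)/\lambda$ to obtain (ii), and then verifying that the three drop rules (for inner elements, whole $\psi$-subsets, and the bookkeeping threshold $v'\le -1$) line up exactly with the inequalities in (i) and (iii). The only slip is an off-by-one in the index of the freshly introduced $\vp_1$-conjunct: the element $(q_0^1,v')$ fed into $\Delta$ produces, after reading $\sigma$, the state $\delta_1(q_0^1,\sigma) = \delta_1(q_0^1,w_{n-1:\infty})$, so it encodes $\X_\lambda^{n-1}\vp_1$ (consistent with (iii) requiring $n>j\ge 0$), not $\X_\lambda^n\vp_1$ as you wrote; the $\psi_n$ part is indexed correctly.
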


\begin{proof}
    By induction on n.
    \begin{itemize}
        \item \textit{Base case $(n=0)$.} In this case, we have that $\{(q_0^2, 0)\} \in \mathcal{X}$ corresponding to $\vp_2$, $0 = \max\{-1, -\semantics{\vp_1 \U_\lambda^n \vp_2, w}/\lambda^n \}$, there is no $n > j \ge 0$ so $I = \emptyset$, and $(\semantics{\vp_1 \U_\lambda^n \vp_2, u\sigma} - \semantics{\vp_1 \U_\lambda^{n-1} \vp_2, u})/\lambda^{n-1} = 0$. \\
        
        \item \textit{Inductive step $(n > 0)$.} Let $w = u\sigma$ where $\sigma \in \Sigma$ and $\delta(q_0, u) = (v_p, I_p, \mathcal{X}_p)$. By the inductive hypothesis, for each $S_p \in \mathcal{X}_p$ there is an $n-1 \ge i \ge 0$ such that
        \begin{align*}
            f_*(\zeta_{k_2}, q_{k_2}, \sigma) &= r_2(q_{k_2}, \sigma) + \zeta_{k_2} \\
            &= r_2(q_{k_2}, \sigma) + (\semantics{\X_\lambda^i \vp_2, u} - \semantics{\vp_1 \U_\lambda^{n-1} \vp_2, u})/\lambda^{n-1} \\
            &= (\semantics{\X_\lambda^i \vp_2, u\sigma} - \semantics{\vp_1 \U_\lambda^{n-1} \vp_2, u})/\lambda^{n-1}
        \end{align*}
        where the last equality follows from the fact that $q_{k_2} = \delta_2(q_0^2, u_{i:\infty})$ is simulating the reward machine for $\X_\lambda^i \vp_2$, and/or there is an $i > j \ge 0$ such that
        \begin{align*}
            f_*(\zeta_{k_1}, q_{k_1}, \sigma) &= r_1(q_{k_1}, \sigma) + \zeta_{k_1} \\
            &= r_1(q_{k_1}, \sigma) + (\semantics{\X_\lambda^j \vp_1, u} - \semantics{\vp_1 \U_\lambda^{n-1} \vp_2, u})/\lambda^{n-1} \\
            &= (\semantics{\X_\lambda^j \vp_1, u\sigma} - \semantics{\vp_1 \U_\lambda^{n-1} \vp_2, u})/\lambda^{n-1}
        \end{align*}
        where the last equality follows from the fact that $q_{k_1} = \delta_1(q_0^1, u_{j:\infty})$ is simulating the reward machine for $\X_\lambda^j \vp_1$. For this $S_p$, we have that
        \begin{align*}
            n(S_p, \sigma) &= \min_{(q_k, \zeta_k) \in S} f_*(\zeta_i, q_i, \sigma) \\
            &= \min\{
            \!\begin{aligned}[t]
            & \semantics{\X_\lambda^i \vp_2, u\sigma} - \semantics{\vp_1 \U_\lambda^{n-1} \vp_2, u}, \\
            & \min_{i > j \ge 0} \semantics{\X_\lambda^j \vp_1, u\sigma} - \semantics{\vp_1 \U_\lambda^{n-1} \vp_2, u}\}/\lambda^{n-1}
            \end{aligned}
             \\
            &= (\semantics{\psi_i, u\sigma} - \semantics{\vp_1 \U_\lambda^{n-1} \vp_2, u})/\lambda^{n-1}
        \end{align*}
        where the second equality follows from the fact that if a particular $\X_\lambda^i \vp_2$ or $\X_\lambda^j \vp_1$ is not present, then from the inductive hypothesis, it is because it does not affect the minimum, i.e. $\semantics{\X_\lambda^i, \vp_2} \ge \semantics{\vp_1 \U_\lambda^n \vp_2, w} + \lambda^n \ge \semantics{\psi_i, w} + \lambda^n$, or $\semantics{\X_\lambda^j \vp_1} \ge \semantics{\vp_1 \U_\lambda^n \vp_2, w} + \lambda^n \ge \semantics{\psi_i, w} + \lambda^n$. We now have that
        \begin{align*}
            m(\mathcal{X}_p, \sigma) &= \max_{S_p \in \mathcal{X}_p} n(S_p, \sigma) \\
            &= \max_i (\semantics{\psi_i, u\sigma} - \semantics{\vp_1 \U_\lambda^{n-1} \vp_2, u})/\lambda^{n-1} \\
            &= (\semantics{\vp_1 \U_\lambda^n \vp_2, u\sigma} - \semantics{\vp_1 \U_\lambda^{n-1} \vp_2, u})/\lambda^{n-1}.
        \end{align*}
        This shows (iv). 
        
        Next let $v' = (v_p - m(\mathcal{X}_p, \sigma))/\lambda$. If $v_p = -1 = \max\{-1, -\semantics{\vp_1 \U_\lambda^{n-1} \vp_2, u}/\lambda^{n-1}\}$ then $v' \le -1$ and $v = \max\{-1, -\semantics{\vp_1 \U_\lambda^n \vp_2, w}/\lambda^n \}$. Otherwise, consider $v_p > -1$. We have that
        \begin{align*}
            v' &= (v_p - m(\mathcal{X}_p, \sigma))/\lambda \\
            &= (-\semantics{\vp_1 \U_\lambda^{n-1} \vp_2, u}/\lambda^{n-1} - (\semantics{\vp_1 \U_\lambda^n \vp_2, u\sigma} - \semantics{\vp_1 \U_\lambda^{n-1} \vp_2, u})/\lambda^{n-1})/\lambda \\
            &= - \semantics{\vp_1 \U_\lambda^n \vp_2, u\sigma}/\lambda^n \\
            &= - \semantics{\vp_1 \U_\lambda^n \vp_2, w}/\lambda^n .
        \end{align*}
        Since $v = -1$ if $v' < -1$ by the transition function, we get that $v = \max\{-1, - \semantics{\vp_1 \U_\lambda^n \vp_2, w}/\lambda^n \}$ which shows (ii).

        Now, consider a $(q_k, \zeta_k) \in I_p$ and corresponding $n-1 > j \ge 0$ and $\X_\lambda^j \vp_1$. We have that $\Delta(\{(q_k, \zeta_k)\}, \sigma, m(\mathcal{X}_p, \sigma)) = A$ where $A = \emptyset$ if $\zeta_k' \ge 1$ and $\{(q_k', \zeta_k')\}$ otherwise where 
        \begin{align*}
            \zeta_k' &= (f_*(\zeta_k, q_k, \sigma) - m(\mathcal{X}_p, \sigma))/\lambda \\
            &= (\semantics{\X_\lambda^j \vp_1, u\sigma} - \semantics{\vp_1 \U_\lambda^n \vp_2, u\sigma})/\lambda^{n} .
        \end{align*}
        If $\zeta_k' \ge 1$ then $\semantics{\X_\lambda^j \vp_1, w} \ge \semantics{\vp_1 \U_\lambda^n \vp_2, w} + \lambda^{n}$.
        To finish showing (iii), we simply need to note that if $\lambda^n > \semantics{\vp_1 \U_\lambda^n \vp_2, w}$, then $v' > -1$, and the element $\Delta(\{(q_0^1, v'\}) = (\delta_1(q_0^1, w_{n-1:\infty}),(\semantics{\X_\lambda^{n-1} \vp_1, w} - \semantics{\vp_1 \U_\lambda^n \vp_2, w})/\lambda^{n})$ is added.

        Finally, we show (i). Consider an $S_p \in \mathcal{X}_p$ and corresponding $n \ge i \ge 0$ and $\psi_i$. Additionally, consider a $(q_k, \zeta_k) \in S_p$ and corresponding formula $\X_\lambda^i \vp_2$ or $\X_\lambda^j \vp_1$ with $i > j \ge 0$. If the corresponding formula is $\X_\lambda^i \vp_2$, then we have that $\Delta(\{(q_k, \zeta_k)\}, \sigma, m(\mathcal{X}_p, \sigma)) = A$ where $A = \emptyset$ if $\zeta_k' \ge 1$ and $\{(q_k', \zeta_k')\}$ otherwise where
        \begin{align*}
            \zeta_k' &= (f_*(\zeta_k, q_k, \sigma) - m(\mathcal{X}_p, \sigma))/\lambda \\
            &= (\semantics{\X_\lambda^i \vp_2, u\sigma} - \semantics{\vp_1 \U_\lambda^{n} \vp_2, u\sigma})/\lambda^{n}.
        \end{align*}
        If $\zeta_k \ge 1$, then $\semantics{\X_\lambda^i \vp_2, w} \ge \semantics{\vp_1 \U_\lambda^{n} \vp_2, w} + \lambda^n$. If the corresponding formula is $\X_\lambda^j \vp_1$, then we have that $\Delta(\{(q_k, \zeta_k)\}, \sigma, m(\mathcal{X}_p, \sigma)) = A$ where $A = \emptyset$ if $\zeta_k' \ge 1$ and $\{(q_k', \zeta_k')\}$ otherwise where
        \begin{align*}
            \zeta_k' &= (f_*(\zeta_k, q_k, \sigma) - m(\mathcal{X}_p, \sigma))/\lambda \\
            &= (\semantics{\X_\lambda^j \vp_1, u\sigma} - \semantics{\vp_1 \U_\lambda^{n} \vp_2, u\sigma})/\lambda^{n}.
        \end{align*}
        If $\zeta_k \ge 1$, then $\semantics{\X_\lambda^j \vp_1, w} \ge \semantics{\vp_1 \U_\lambda^{n} \vp_2, w} + \lambda^n$. This shows that the evolution of each element in $S_p$ is correct according to $\Delta(S_p, \sigma, m)$. Under $T(\mathcal{X}_p, \sigma, m)$, either $\Delta(S_p, \sigma, m)$ is applied, or $\semantics{\psi_i, w} + \lambda^n > \semantics{\vp_1 \U_\lambda^n \vp_2, w}$, in which case $n(S_p,\sigma) \le -1$, and that $S_p$ is removed. To finish showing (i), we note the following. If $\semantics{\psi_n, w} + \lambda^n > \semantics{\vp_1 \U_\lambda^n \vp_2, w}$, then $v' > -1$ and the set $I' \sqcup (q_0^2, v') = I' \sqcup (\delta_2(q_0^2, w_{n:\infty}), (\semantics{\X_\lambda^n \vp_2, w} - \semantics{\vp_1 \U_\lambda^{n} \vp_2, w})/\lambda^{n})$ is added where $I'$ contains the elements corresponding to $\X_\lambda^j \vp_1$ for $n > j \ge 0$ according to (iii) above.

        \qed
    \end{itemize}
\end{proof}

\end{document}